
\documentclass{non-vldb}
\usepackage{graphicx}
\usepackage{balance}  

\usepackage[mathletters]{ucs}
\usepackage[utf8x]{inputenc}
\usepackage{tikz-cd}
\usepackage[noend,ruled,noline]{algorithm2e}

\usepackage{epstopdf}
\usepackage{mathtools}
\usepackage{etoolbox}
\usepackage{proof}

\usepackage{subcaption}
\usepackage{listings,color} 
\usepackage{microtype}
\usepackage{multirow}
\usepackage[hidelinks]{hyperref}

\usepackage{collectbox}
\usepackage{makecell}
\captionsetup{font=bf}

\usepackage{algpseudocode}
\usepackage{hhline}
\usepackage{microtype}
\usepackage{paralist}
\usepackage{pifont}

\usepackage{tikz}
 \usetikzlibrary{graphs,quotes}
 \usetikzlibrary{arrows}
 \usetikzlibrary{shapes}
 
\allowdisplaybreaks

\newcommand{\revision}[1]{{\color{black}{#1}}}

\newcommand\deltaAdd[1]{\Delta^{+}_{#1}}
\newcommand\deltaDel[1]{\Delta^{-}_{#1}}

\newcommand\ruleeq{\mathrel{\mathord{:}-}}

\newcommand{\xmark}{\ding{53}}%
\newcommand{\yes}{\checkmark}%
\newcommand{\no}{\xmark}%

\newtheorem{theorem}{Theorem}[section]
\newtheorem{lemma}[theorem]{Lemma}
\newtheorem{proposition}[theorem]{Proposition}

\newtheorem{definition}{Definition}[section]

\newtheorem{example}{Example}[section]

\newtheorem{claim}{Claim}

\AtEndEnvironment{example}{\hfill{$\qed$}}%
\AtEndEnvironment{remark}{\hfill{$\qed$}}%
\AtEndEnvironment{proof}{\hfill{ }}%

\definecolor{mygreen}{HTML}{1B5E20}

\lstnewenvironment{Datalog}[1][]
{\lstset{
        language=prolog,
        mathescape = true,
        basicstyle=\ttfamily,
        columns=flexible,
        keepspaces=true,
        breaklines=true,
        captionpos=b, 
        belowcaptionskip= 0em,
		numbers=left, 
        xleftmargin=2em, 
        framexleftmargin=1.5em,
        stepnumber=1,
        numberfirstline=false,
        sensitive=true,
        extendedchars=\true,
        showstringspaces=false,
        commentstyle=\color{mygreen},
        #1,
}}{}

\lstnewenvironment{SQL}[1][]
{\lstset{
		language=SQL,
		mathescape = true,
		basicstyle=\ttfamily\footnotesize,
		columns=flexible,
		keepspaces=true,
		breaklines=true,
		captionpos=b, 
		belowcaptionskip= 0em,
		numbers=left, 
		xleftmargin=2em, 
		framexleftmargin=1.5em,
		stepnumber=1,
		numberfirstline=false,
		sensitive=true,
		extendedchars=\true,
		showstringspaces=false,
		commentstyle=\color{mygreen},
		#1,
}}{}

\vldbTitle{Programmable View Update Strategies on Relations}
\vldbAuthors{Van-Dang Tran, Hiroyuki Kato, Zhenjiang Hu}
\vldbDOI{https://doi.org/10.14778/3377369.3377380}
\vldbVolume{13}
\vldbNumber{5}
\vldbYear{2020}

\begin{document}


\title{Programmable View Update Strategies on Relations}



%
%
%
%

\numberofauthors{1} 

\author{
%
%
%
%
%
\alignauthor
Van-Dang Tran$^{3,1}$,
Hiroyuki Kato$^{1,3}$,
Zhenjiang Hu$^{2,1}$\\
\smallskip
       \affaddr{$^1$National Institute of Informatics, Japan}\\
       \affaddr{$^2$Peking University, China}\\
       \affaddr{$^3$The Graduate University for Advanced Studies, SOKENDAI, Japan}\\
       \email{\{dangtv, kato\}@nii.ac.jp, huzj@pku.edu.cn}
}


\maketitle

\begin{abstract}
View update is an important mechanism that allows updates on a view by translating them into the corresponding updates on the base relations. The existing literature has shown the ambiguity of translating view updates.
To address this ambiguity, we propose a robust language-based approach for making view update strategies programmable and validatable.
Specifically, we introduce a novel approach to use Datalog to describe these update strategies. We propose a validation algorithm to check the well-behavedness of the written Datalog programs. We present a fragment of the Datalog language for which our validation is both sound and complete. This fragment not only has good properties in theory but is also useful for solving practical view updates.
Furthermore, we develop an algorithm for optimizing user-written programs to efficiently implement updatable views in relational database management systems.
We have implemented our proposed approach.
The experimental results show that our framework is feasible and efficient in practice.
\end{abstract}

\section{Introduction}
\label{sec:introduction}

View update \cite{Bancilhon1981, Dayal1978, Dayal1982, Fagin1983, Keller1985} is an important mechanism in relational databases.
This mechanism allows updates on a view by translating them into the corresponding updates on the base relations \cite{Dayal1982}.
Consider a view $V$ defined by a query $get$ over the database $S$, as shown in Figure~\ref{fig:viewupdate}.
An update translator $T$ maps each update $u$ on $V$ to an update $T(u)$ on $S$ such that it is {\em well-behaved} in the sense that after the view update is propagated to the source, we will obtain the same view from the updated source, i.e., $u (V) = get (T(u)(S))$.
Given a view definition $get$, the known \textit{view update problem} \cite{Dayal1982} is to derive such an update translator $T$.

However, there is an ambiguity issue here. Because the query $get$ is generally not injective, there may be many update translations on the source database that can be used to reflect view update \cite{Dayal1978, Dayal1982}. This ambiguity makes view update an open challenging problem that has a long history in database research \cite{Fagin1983, Dayal1978, Dayal1982, Bancilhon1981, Keller1986, Keller1985, Kotidis2006, Kimelfeld2012, Masunaga2017, Larson1991, Langerak1990}. The existing approaches either impose too many syntactic restrictions on the view definition $get$ that allow for limited unambiguous update propagation \cite{Dayal1982, Bohannon2006,Bancilhon1981, Keller1984, Lechtenborger2003, Langerak1990, Masunaga1984, Masunaga2017, Masunaga2018} or provide dialogue mechanisms for users to manually choose update translations with users' interaction \cite{Keller1986, Larson1991}. In practice, commercial database systems such as PostgreSQL \cite{postgresql} provide very limited support for updatable views such that even a simple union view cannot be updated.

In this paper, we propose a new approach for solving the view updating problem practically and correctly. The key idea is to provide a formal language for people to directly program their view update strategies. On the one hand, this language can be considered a formal treatment of Keller's dialogue \cite{Keller1986}, but on the other hand, it is unique in that it can fully determine the behavior of bidirectional update propagation between the source and the view.

\begin{figure}[t]
	\centering
	\begin{subfigure}[t]{0.195\textwidth}
		\begin{tikzpicture}[new set=import nodes,
		database/.style={
			cylinder,
			cylinder uses custom fill,
			shape border rotate=90,
			aspect=0.25 , 
			draw
		},
		setsize/.style={minimum width=0.6cm,minimum height=0.6cm}]
		\begin{scope}[nodes={set=import nodes}] 
		\node [draw,database, inner xsep=8pt]  (source) at (0,0) {\vspace{10pt}$S$\vspace{10pt}};
		\node [draw,rectangle,setsize]  (view) at (2,0) {$V$};
		\node [draw,database,  inner xsep=7pt]  (updatedsource) at (0,-1.5) {$S'$};
		\node [draw,rectangle, setsize]  (updatedview) at (2,-1.5) {$V'$};
		\node [draw=none]  (u) at (2-0.15,-0.7) {$u$};
		\node [draw=none]  (Tu) at (0+0.4,-0.7) {$T(u)$};
		\end{scope}
		\graph [edge quotes={inner sep=1pt, auto}, 
		grow down sep = 0.7cm, branch right sep= 1cm, nodes={draw,align=center}, edges=-stealth]  {
			(import nodes);
			source -> ["$get$"] view;
			source ->  updatedsource;
			view -> updatedview;
			u ->[dash dot] Tu;
		};
		\end{tikzpicture}
		\caption{}
		\label{fig:viewupdate}
	\end{subfigure}
	\begin{subfigure}[t]{0.195\textwidth}
		\begin{tikzpicture}[new set=import nodes,
		database/.style={
			cylinder,
			cylinder uses custom fill,
			shape border rotate=90,
			aspect=0.25,
			draw
		},
		setsize/.style={minimum width=0.6cm,minimum height=0.6cm}]
		\begin{scope}[nodes={set=import nodes}] 
		\node [draw,database, inner xsep=8pt]  (source) at (0,0) {$S$};
		\node [draw,rectangle,setsize]  (view) at (2,0) {$V$};
		\node [draw,database,  inner xsep=7pt]  (updatedsource) at (0,-1.5) {$S'$};
		\node [draw,rectangle, setsize]  (updatedview) at (2,-1.5) {$V'$};
		\node [draw=none]  (u) at (2-0.15,-0.7) {$u$};
		\end{scope}
		\graph [edge quotes={inner sep=1pt, auto}, 
		grow down sep = 0.7cm, branch right sep= 1cm, nodes={draw,align=center}, edges=-stealth]  {
			(import nodes);
			source -> ["$get$"] view;
			view -> updatedview;
			updatedview -> ["$put$"] updatedsource;
			source -> [bend left = 90] updatedsource;
		};
		\end{tikzpicture}
		\caption{}
		\label{fig:bx}
	\end{subfigure}
	\caption{The view update problem (a) and bidirectional transformation (b).}
	\label{fig:viewupdate_bx}
\end{figure}

This idea is inspired by the research on bidirectional programming \cite{Foster2007,czarnecki2009} in the programming language community, where update propagation from the view to the source is formulated as a so-called \textit{putback transformation} $put$, which maps the updated view and the original source to an updated source, as shown in Figure~\ref{fig:bx}. This $put$ not only captures the view update strategy but also fully describes the view update behavior. First, it is clear that if we have such a putback transformation, the translation $T$ is obtained for free: 
\[
T(u)(S) = put (S,u(get(S))).
\]
Second, and more interestingly, while there may be many putback transformations for a view definition $get$, there is at most one view definition for a putback transformation $put$ for a well-behaved view update \cite{Hu2014, Fischer2015b, Fischer2015, Ko2016, KoHu2018}. Thus, $get$ can be deterministically derived from $put$ in general.
Although several languages have been proposed for writing $put$ for updatable views over tree-like data structures \cite{Zhu2016, Ko2016,KoHu2018}, whether we can design such a language for solving the classical view update problem on relations remains unclear.
 
There are several challenges in designing a formal language for programming $put$, a view update strategy, on relations.
\begin{itemize}
    \item The language is desired to be expressive in practice to cover users' update strategies.
    \item To make every view update consistent with the source database, an update strategy $put$ must satisfy some certain properties, as formalized in previous work \cite{Foster2007, Fischer2015, Fischer2015b}. Therefore, there is a need for a validation algorithm to statically check the well-behavedness of user-written strategies and whether they respect the view definition if the view is defined beforehand.
    \item To be useful in practice rather than just a theoretical framework, the language must be efficiently implemented when running in relational database management systems (RDBMSs).
\end{itemize}

In contrast to the existing approaches \cite{Zhu2016, Ko2016, KoHu2018} where new domain-specific languages (DSLs) are designed, we argue that Datalog, a well-known query language, can be used as a formal language for describing view update strategies in relational databases. Our contributions are summarized as follows.
\begin{itemize}
    \item We introduce a novel way to use nonrecursive Datalog with negation and built-in predicates for describing view update strategies. We propose a validation algorithm for statically checking the well-behavedness of the described update strategies.
    \item We identify a fragment of Datalog, called linear-view guarded negation Datalog (LVGN-Datalog), in which our validation algorithm is both sound and complete. Furthermore, the algorithm can automatically derive from view update strategies the corresponding view definition to confirm the view expected beforehand.
    \item We develop an incrementalization algorithm to optimize view update strategy programs. This algorithm integrates the standard incrementalization method for Datalog with the well-behavedness in view update.
    \item We have implemented all the algorithms in our framework, called BIRDS\footnote{A prototype implementation is available at \url{https://dangtv.github.io/BIRDS/}.}. 
    The experiments on benchmarks collected in practice show that our framework is feasible for checking most of the view update strategies. Interestingly, LVGN-Datalog is expressive enough for solving many types of views and can be efficiently implemented by incrementalization in existing RDBMSs.
\end{itemize}

The remainder of this paper is organized as follows.
After presenting some basic notions in Section~\ref{sec:background},
 we present our proposed method for specifying view update strategies in Datalog in Section~\ref{sec:LVGN-Datalog}.
The validation and incrementalization algorithms for these update strategies are described in Section~\ref{sec:verification} and Section~\ref{sec:incrementalization}, respectively.
Section~\ref{sec:experiment} shows the experimental results of our implementation.
Section~\ref{sec:relatedwork} summarizes related works.
Section~\ref{sec:conclusion} concludes this paper.

\section{Preliminaries}
\label{sec:background}
In this section, we briefly review the basic concepts and notations that will be used throughout this paper.
\subsection{Datalog and Relational Databases}
\textbf{Relational databases.}
A database schema $\mathcal{D}$ is a finite sequence of relation names (or predicate symbols, or simply predicates) $⟨r_1, \ldots , r_n⟩$.
Each predicate $r_i$ has an associated arity $n_i > 0$ or an associated sequence of attribute names $A_1, \ldots, A_{n_i}$. 
A database (instance) $D$ of $\mathcal{D}$ assigns to each predicate $r_i$ in $\mathcal{D}$ a finite $n_i$-ary relation $R_i$, $D(r_i) = R_i$. 

An atom (or atomic formula) is of the form $r( t_1 , \ldots , t_k)$ (or written as $r( \vec{t})$) such that $r$ is a $k$-ary predicate and each $t_i$ is a term, which is either a constant or a variable. When $t_1 , \ldots , t_k$ are all constants, $r( t_1 , \ldots , t_k)$ is called a ground atom.

A database $D$ can be represented as a set of ground atoms \cite{Ceri1989, Cali2012}, where each ground atom $r( t_1 , \ldots , t_k)$ corresponds to the tuple $\langle t_1 , \ldots , t_k \rangle$ of relation $R$ in $D$.
As an example of a relational database, consider a database $D$ that consists of two relations with respective schemas $r_1(A,B)$ and $r_2(C)$. Let the actual instances of these two relations be $R_1=\{\langle 1,2 \rangle, \langle 2,3 \rangle\}$ and $R_2=\{\langle 3 \rangle, \langle 4 \rangle\}$, respectively. The set of ground atoms of the database is $D = \{ r_1(1,2), r_1(2,3), r_2(3), r_2(4) \}$.

\textbf{Datalog.}
A Datalog program $P$ is a nonempty finite set of rules, and each rule is an expression of the form \cite{Ceri1989}:
\[ H \ruleeq L_1, \ldots, L_n. \]
where $H, L_1, \ldots, L_n$ are atoms. $H$ is called the rule head, and $L_1,\ldots,L_n$ is called the rule body.
The input of $P$ is a set of ground atoms, called the extensional database (EDB), physically stored in a relational database. The output of $P$ is all ground atoms derived through the program P and the EDB, called the intensional database (IDB).
Predicates in $P$ are divided into two categories: the EDB predicates occurring in the extensional database, and the IDB predicates occurring in the intensional database. An EDB predicate can never be the head predicate of a rule. The head predicate of each rule is an IDB predicate. We assume that each EDB/IDB predicate $r$ corresponds to exactly one EDB/IDB relation $R$.
Following the convention used in \cite{Ceri1989}, throughout this paper, we use lowercase characters for predicate symbols and uppercase characters for variables in Datalog programs. In a Datalog rule, variables that occur exactly once can be replaced by an anonymous variable, denoted as ``\_''.

A Datalog program $P$ can have many IDB predicates.
If restricting the output of $P$ to an IDB relation $R$ corresponding to IDB predicate $r$, we have a Datalog query, denoted as $(P, R)$. We say that an IDB predicate $r$ (or a query $(P,R)$) is satisfiable if there exists a database $D$ such that the IDB relation $R$ in the output of $P$ over $D$ is nonempty \cite{alicebook}.

We can extend Datalog by allowing negation and built-in predicates, such as equality ($=$) or comparison ($<,>$), in Datalog rule bodies but in a safe way in which each variable occurring in the negated atoms or the built-in predicates must also occur in some positive atoms \cite{Ceri1989}.

\subsection{Bidirectional Transformations}
A bidirectional transformation (BX) \cite{Foster2007} is a pair of a forward transformation $get$ and a backward (putback) transformation $put$, as shown in Figure~\ref{fig:bx}.
The forward transformation $get$ is a query over a source database $S$ that results in a view relation $V$. The putback transformation $put$ takes as input the original database $S$ and an updated view $V'$ to produce a new database $S'$.
To ensure consistency between the source database and the view, a BX must satisfy the following \textit{round-tripping} properties, called \textsc{GetPut} and \textsc{PutGet}:
\begin{align}
    \forall \; S, \qquad put  \left( S,\; get(S) \right) &= S \tag{\textsc{GetPut}} \label{eq:getput} \\
        \forall \; S, \; V', \qquad get \left(put  \left( S, \;V'\right) \right) &= V'  \tag{\textsc{PutGet}} \label{eq:putget}
\end{align}
The \ref{eq:getput} property ensures that unchanged views correspond to unchanged sources, while the \textsc{PutGet} property ensures that all view updates are completely reflected to the source such that the updated view can be computed again from the query $get$ over the updated source.

\begin{definition}[Validity of Update Strategy]~\newline
    A view update strategy $put$ is said to be valid if there exists a view definition $get$ such that $put$ and $get$ satisfy both \textsc{GetPut} and \textsc{PutGet}.
    \label{def:put_validity}
\end{definition}

The important property that makes putback essential for BXs is that a valid view update strategy $put$ uniquely determines the view definition $get$, which satisfies \textsc{GetPut} and \textsc{PutGet} with $put$. Therefore, although $put$ is written in a unidirectional (backward) manner, if $put$ is valid, it can capture both forward and backward directions. We state the uniqueness of the view definition $get$ in the following theorem, and the proof can be found in \cite{Fischer2015}.

\begin{theorem}[Uniqueness of View Definition]~\newline
    Given a view update strategy $put$, there is at most one view definition $get$ that satisfies \textsc{GetPut} and \textsc{PutGet} with $put$.
    \label{thm:unique_query}
\end{theorem}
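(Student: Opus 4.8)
The plan is to prove uniqueness by a direct contradiction argument that exploits \ref{eq:putget} to pin down the value of any valid $get$ on every source. Suppose that $get_1$ and $get_2$ both satisfy \ref{eq:getput} and \ref{eq:putget} together with the same $put$. I want to show $get_1(S) = get_2(S)$ for every source database $S$.

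First I would fix an arbitrary source $S$ and set $V := get_1(S)$. By \ref{eq:getput} applied to $get_1$, we have $put(S, V) = put(S, get_1(S)) = S$. Now apply $get_2$ to both sides: $get_2(put(S,V)) = get_2(S)$. The left-hand side is where \ref{eq:putget} for $get_2$ comes in — but \ref{eq:putget} has the form $get_2(put(S,V')) = V'$, which requires the second argument to range over the \emph{updated views} $V'$, i.e., over the image of the relevant view space. So the key point I need is that $V = get_1(S)$ is a legitimate value of $V'$ to which \ref{eq:putget} applies; since $get_1$ maps into the view space and \ref{eq:putget} is quantified over all $S$ and all $V'$ in that space, $(S, get_1(S))$ is an admissible pair. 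Hence $get_2(put(S,V)) = V = get_1(S)$, and combining with the previous equation gives $get_2(S) = get_1(S)$. Since $S$ was arbitrary, $get_1 = get_2$.

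The main obstacle — and the only subtlety — is making precise the domain over which \ref{eq:putget} is quantified, i.e., ensuring that the "updated view" $V'$ really does include $get_1(S)$ as an allowed argument. In the idealized formulation stated in the excerpt, \ref{eq:putget} is quantified over \emph{all} $S$ and $V'$, so this is immediate and the argument above is essentially complete. If instead one adopts a more refined model where $put$ is only defined on pairs $(S, V')$ such that $V'$ is a valid view value (a common restriction in the BX literature), then the step to check is that the \ref{eq:getput} law guarantees $get_1(S)$ lies in the domain of $put(S, \cdot)$; this holds because \ref{eq:getput} explicitly asserts $put(S, get_1(S)) = S$, so the pair is in the domain by assumption. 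Either way the symmetry between $get_1$ and $get_2$ is not even needed — the argument shows directly that \emph{any} valid companion of $put$ must equal the specific function $S \mapsto $ (the unique $V'$ with $put(S,V') = S$ lying in the view space), which is the cleanest way to phrase the conclusion. The formal details are in \cite{Fischer2015}.
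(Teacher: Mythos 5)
Your argument is correct and is essentially the paper's own proof: both evaluate the expression $get_i(put(S,get_j(S)))$ in two ways, using \textsc{GetPut} to collapse $put(S,get_j(S))$ to $S$ and \textsc{PutGet} to collapse the whole expression to $get_j(S)$, concluding $get_1(S)=get_2(S)$ for all $S$ (the paper just swaps which of the two plays each role, which is immaterial by symmetry). Your added remark about the quantification domain of \textsc{PutGet} is a harmless refinement; under the paper's formulation, where \textsc{PutGet} is quantified over all $S$ and $V'$, it is indeed immediate.
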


\section{The Language for View Update Strategies}
\label{sec:LVGN-Datalog}
As mentioned in the introduction, it may be surprising that the base language that we are using for view update strategies is nonrecursive Datalog with negation and built-in predicates (e.g., $=$, $\neq$, $<$, $>$) \cite{Ceri1989}.
One might wonder how the pure query language Datalog can be used to describe updates.
In this section, we show that delta relations enable Datalog to describe view update strategies.
We will define a fragment of Datalog, called LVGN-Datalog, which is not only powerful for describing various view update strategies but also important for our later validation.

\subsection{Formulating Update Strategies as Queries Producing Delta Relations}
\begin{figure}[t]
    \centering
    \begin{tikzpicture}[new set=import nodes,
        database/.style={
            cylinder,
            cylinder uses custom fill,
            shape border rotate=90,
            aspect=0.25,
            draw
        },
        setsize/.style={minimum width=0.6cm,minimum height=0.6cm}]
        \begin{scope}[nodes={set=import nodes}] 
        \node [draw,database,  inner xsep=7pt]  (updatedsource) at (0-0.4,0) {$S'$};
        \node [draw,database, inner xsep=8pt]  (source) at (2,0) {$S$};
        \node [inner sep = 1pt] (apply) at (1-0.2,0) {$\bigoplus$};
        \node [font=\small] (applytext) at (1-0.2,0.4) {Apply};
        \node [draw,rectangle,setsize]  (view) at (4-0.5,0) {$V$};
        \node [draw,rectangle, database,  inner xsep=4.5pt]  (delta) at (2,-1) {$\Delta S$};
        \end{scope}
        \graph [edge quotes={inner sep=1pt, auto}, 
        grow down sep = 0.7cm, branch right sep= 1cm, nodes={draw,align=center}, edges=-stealth]  {
            (import nodes);
            view -> ["$putdelta$", bend left = 34] delta;
            source -> [bend left = 90] delta;
            source -> apply;
            delta -> [bend left = 45] apply;
            apply -> updatedsource;
        };
        \end{tikzpicture}
    \caption{View update strategy $put$.}
    \label{fig:bxdatalog}
\end{figure}
Recall that a view update strategy is a putback transformation $put$ that takes as input the original source database and an updated view to produce an updated source.
Our idea of specifying the transformation $put$ in Datalog is to write a Datalog query that takes as input the original source database and an updated view to yield updates on the source; thus, the new source can be obtained.

We use delta relations to represent updates to the source database. The concept of delta relations is not new and is used in the study on the incrementalization of Datalog programs \cite{Gupta1993}. Unlike the use of delta relations to describe incrementalization algorithms at the meta level, we let users consider both relations and their corresponding delta relations at the programming level.

Let $R$ be a relation and $r$ be the predicate corresponding to $R$. Following \cite{Green2015b, Koch2010, Fernando2011}, we use two delta predicates $+r$ and $-r$ and write $+r(\vec{t})$ and $-r(\vec{t})$ to denote the insertion and deletion of the tuple $\vec{t}$ into/from relation $R$, respectively.
An update that replaces tuple $\vec{t}$ with a new one $\vec{t'}$ is a combination of a deletion $-r(\vec{t})$ and an insertion $+r(\vec{t})$.
We use a delta relation, denoted as $\Delta R$, to capture both these deletions and insertions.
For example, consider a binary relation $R= \{ \langle 1,2 \rangle, \langle 1,3 \rangle\}$; applying a delta relation $\Delta R = \{ -r(1,2), +r(1,1)\}$ to $R$ results in $R'= \{ \langle 1,1 \rangle, \langle 1,3 \rangle\}$.
Let $\deltaAdd{R}$ be the set of insertions and $\deltaDel{R}$ be the set of deletions in $\Delta R$.
Applying $\Delta R$ to the relation $R$ is to delete tuples in $\deltaDel{R}$ from $R$ and insert tuples in $\deltaAdd{R}$ into $R$.
Considering set semantics, the delta application is the following:
\[R'= R \oplus \Delta R =( R \setminus \deltaDel{R}) \cup  \deltaAdd{R} \]
 An update strategy for a view can now be specified by a set of Datalog rules that define delta relations of the source database from the updated view. 
\begin{example} 
    Consider a source database $S$, which consists of two base relations, $R_1$ and $R_2$, with respective schemas $r_1(A)$ and $r_2(A)$, and a view relation $V$ defined by a union over $R_1$ and $R_2$: $V = get(S) = R_1 \cup R_2$. 
    To illustrate the ambiguity of updates to $V$, consider an attempt to insert a tuple $\langle 3 \rangle$ into the view $V$. There are three simple ways to update the source database: (i) insert tuple $\langle 3 \rangle$ into $R_1$, (ii) insert tuple $\langle 3 \rangle$ into $R_2$, and (iii) insert tuple $\langle 3 \rangle$ into both $R_1$ and $R_2$. 
    Therefore, the update strategy for the view needs to be explicitly specified to resolve the ambiguity of view updates. 
    Given original source relations $R_1$ and $R_2$ and an updated view relation $V$, the following Datalog program is one strategy for propagating data in the updated view to the source: 
    \begin{align*}
        {-r_1}(X) &\ruleeq r_1(X), \neg v(X). \\
        {-r_2}(X) &\ruleeq r_2(X), \neg v(X).\\
        {+r_1}(X) &\ruleeq v(X), \neg r_1(X), \neg r_2(X).
    \end{align*}
The first two rules state that if a tuple $\langle X \rangle$ is in $R_1$ or $R_2$ but not in $V$, it will be deleted from $R_1$ or $R_2$, respectively. The last rule states that if a tuple $\langle X \rangle$ is in $V$ but in neither $R_1$ nor $R_2$, it will be inserted into $R_1$.
Let the actual instances of the source and the updated view be $S = \{r_1(1), r_2(2), r_2(4)\}$ and $V = \{v(1), v(3), v(4)\}$, respectively. The input for the Datalog program is a database of both the source and the view $(S,V) = \{r_1(1), r_2(2), r_2(4), v(1), v(3), v(4)\}$. Thus, the result is delta relations $\Delta R_1 = \{ +r_1(3) \}$ and $\Delta R_2 = \{-r_2(2) \}$. By applying these delta relations to $S$, we obtain a new source database $S' = \{r_1(1), r_1(3), r_2(4)\}$.
\label{exp:basics}
\end{example}

Formally, consider a database schema $\mathcal{S} = ⟨r_1, \ldots , r_n⟩$ and a single view $v$. Let $S$ be a source database and $V$ be an updated view relation. We use $\Delta S$ to denote all insertions and deletions of all relations in $S$. For example, the $\Delta S$ in Example~\ref{exp:basics} is $\Delta S = \{ +r_1(3), -r_2(2) \}$. We say that $\Delta S$ is {\em non-contradictory} if it has no insertion/deletion of the same tuple into/from the same relation.
Applying a non-contradictory $\Delta S$ to a database $S$, denoted as $S \oplus \Delta S$, is to apply each delta relation in $\Delta S$ to the corresponding relation in $S$.
We use the pair $(S,V)$ to denote the database instance $I$ over the schema $⟨r_1, \ldots , r_n, v⟩$ such that $I(r_i) = S(r_i)$ for each $i\in[1,n]$ and $I(v) = V$. A view update strategy $put$ is formulated by a Datalog query $putdelta$ over the database $(S,V)$ that results in a $\Delta S$ (shown in Figure~\ref{fig:bxdatalog}) as follows:
\begin{align}
    put (S, V) = S \oplus putdelta (S, V) \label{eq:put_def}
\end{align}
The Datalog program $putdelta$ is called a \textit{Datalog putback program} (or \textit{putback program} for short).
The result of $putdelta$, $\Delta S$, should be non-contradictory to be applicable to the original source database $S$.
\begin{definition}[Well-definedness]
A putback program is well defined if, for every source database $S$ and view relation $V$, the program results in a non-contradictory $\Delta S$.
\end{definition}

\subsection{LVGN-Datalog}
We have seen that nonrecursive Datalog with extensions including negation and built-in predicates can be used for specifying view update strategies.
We now focus on the extensions of Datalog in which the satisfiability of queries is decidable. This property plays an important role in guaranteeing that the validity of putback programs is decidable.
Specifically, we define a fragment of Datalog, LVGN-Datalog, which is an extension of nonrecursive guarded negation Datalog (GN-Datalog \cite{Barany2012}) with equalities, constants, comparisons \cite{Ceri1989} and linear view predicate.
This Datalog fragment allows not only for writing many practical view update strategies but also for decidable checking of validity later.

\subsubsection{Nonrecursive GN-Datalog with Equalities, Co\-nstants, and Comparisons}
We consider a restricted form of negation in Datalog, called GN-Datalog \cite{Barany2015,Barany2012}, in which we can decide the satisfiability of any queries. In this way, we define LVGN-Datalog as an extension of this GN-Datalog fragment without recursion as follows:
\begin{itemize}
    \item Equality is of the form $t_1 = t_2$, where $t_1$/$t_2$ is either a variable or a constant.
    \item Comparison predicates $<$ ($>$) on totally ordered domains in the form of $X < c$ ($X > c$), where $X$ is a variable and $c$ is a constant.
    \item Constants may freely be used in Datalog rule bodies or rule heads without restriction.
    \item Every rule is negation guarded \cite{Barany2012} such that for every atom $L$ (or equality, or comparison) occurring either in the rule head or negated in the rule body, the body must have a positive atom or equality, called a \textit{guard}, containing all variables occurring in $L$.
\end{itemize}

\begin{example}
    The following rule is negation guarded:
    \[h(X,Y,Z) \ruleeq \underbrace{r_1(X,Y,Z)}_{guard}, \neg \underbrace{Z=1}_{equality}, \neg r_2(X,Y,Z).\]
    because the negated atom $r_2(X,Y,Z)$, negated equality $\neg {Z=1}$ and the head atom $h(X,Y,Z)$ are all guarded since all variables $X$, $Y$, and $Z$ are in the positive atom $r_1(X,Y,Z)$.
\end{example}

\subsubsection{Linear View} 
As formally proven in \cite{Fischer2015b}, the putback transformation $put$ must be lossless (i.e., injective) with respect to the view relation. This means that all information in the view must be embedded in the updated source.
To enable tracking this behavior of putback programs in LVGN-Datalog, we introduce a restriction called \textit{linear view}, which controls the usage of the view in the programs. By linear view, we mean that the view is linearly used such that there is no self-join and projection on the view. Every program in LVGN-Datalog conforms to the linear view restriction defined as follows.
\begin{definition}[Linear view]
A Datalog putback pr\-ogram conforms to the linear view restriction if the view occurs only in the rules defining delta relations, and in each of these delta rules, there is at most one view atom and no anonymous variable ($\_$) occurs in the view atom.
\label{def:linear_view}
\end{definition}
\begin{example}
Given a source relation $R$ of arity 3 and a view relation $V$ of arity 2, consider the following rules of the delta relation $\Delta R$: 
\begin{align}
        & {-r}(X,Y,Z) \ruleeq r(X,Y,Z), \neg \underbrace{v(X,Y)}_{\text{linear~view}}. \tag{$rule_1$}\\
        & {-r}(X,Y,Z) \ruleeq r(X,Y,Z), \neg \underbrace{v(X,\_)}_{\text{projection}}. \tag{$rule_2$} \\
        & {+r}(X,Y,Z) \ruleeq \underbrace{v(X,Y), v(Y,Z)}_{\text{self-join}}, \neg r(X,Y,Z). \tag{$rule_3$}
    \end{align}
    $(rule_1)$ conforms to the linear view restriction because $v(X,Y)$ occurs once in the rule body, whereas $(rule_2)$ and $(rule_3)$ do not because there is an anonymous variable ($\_$) in the atom of $v$ in $(rule_2)$ and there is a self-join of $v$ in ($rule_3$). 
\end{example}

\subsubsection{Integrity Constraints}
Since an updatable view can be treated as a base table, it is natural to create constraints on the view.
Similar to the idea of negative constraints introduced in \cite{Cali2012}, we extend the rules in LVGN-Datalog by allowing a truth constant \textit{false} (denoted as $\bot$) in the rule head for expressing integrity constraints. 
The linear view restriction defined in Definition~\ref{def:linear_view} is also extended that the view predicate can also occur in the rules having $\bot$ in the head.
In this way, a constraint, called the {\em guarded negation constraint}, is of the form $\forall \vec{X}, \Phi(\vec{X}) \rightarrow \bot$, where $\Phi(\vec{X})$ is the conjunction of all atoms and negated atoms in the rule body and $\Phi(\vec{X})$ is a guarded negation formula. The universal quantifiers $\forall \vec{X}$ are omitted in Datalog rules.

\begin{example}
    Consider a view relation $v(X,Y,Z)$. To prevent any tuples having $Z>2$ in the view $v$, we can use the following constraint:
    $\bot \ruleeq  v(X,Y,Z), Z > 2.$
\end{example}

\subsubsection{Properties}
We say that a query $Q$ is satisfiable if there is an input database $D$ such that the result of $Q$ over $D$ is nonempty. The problem of determining whether a query in nonrecursive GN-Datalog is satisfiable is known to be decidable \cite{Barany2012}. 
It is not surprising that allowing equalities, constants and comparisons in nonrecursive GN-Datalog does not make the satisfiability problem undecidable since the same already holds for guarded negation in SQL \cite{Barany2012}. The idea is that we can transform such a GN-Datalog query into an equivalent guarded negation first-order (GNFO) formula whose satisfiability is decidable \cite{Barany2015}.
\begin{lemma}
    The query satisfiability problem is decidable for nonrecursive GN-Datalog with equalities, constants and comparisons.
    \label{lemma:decidability}
\end{lemma}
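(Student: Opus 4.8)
The plan is to reduce query satisfiability for this fragment to satisfiability of guarded‑negation first‑order logic (GNFO), which is decidable by \cite{Barany2015}, exactly along the lines sketched above. The reduction has two stages. First, since the program is nonrecursive, I would \emph{unfold} the query $(P,R)$: repeatedly replace every IDB atom occurring in a rule body by a disjunction of freshly renamed copies of the bodies of its defining rules, until only EDB predicates, equalities and comparison atoms remain; existentially quantifying the body variables that do not reach the head then turns the query into a single sentence $\varphi = \exists \vec{x}\,\varphi_r(\vec{x})$ over the EDB predicates plus the built‑ins. The point to verify here is that negation‑guardedness is preserved by unfolding — every subformula that ends up under a negation still has, in the same conjunction, a positive EDB atom or equality containing all of its free variables — so that $\varphi$ is a GNFO formula \emph{modulo} the comparison atoms. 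Equalities $t_1 = t_2$ and constants are native to GNFO and need no further treatment.

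Second, I would eliminate the comparison atoms, exploiting that comparisons occur only in the restricted form $X < c$ or $X > c$ with $c$ a constant and that only finitely many constants $c_1 < \dots < c_m$ appear. For each $c_i$ introduce a fresh unary predicate $\mathit{below}_i$, intended to denote $\{a : a < c_i\}$; replace $X < c_i$ by $\mathit{below}_i(X)$ and $X > c_i$ by $\neg(\mathit{below}_i(X) \vee X = c_i)$, where the latter negation is guarded because safety of the original program forces $X$ into a positive EDB atom of the same body. Conjoin with the monotonicity sentences $\neg \exists x\,(\mathit{below}_i(x) \wedge \neg \mathit{below}_{i+1}(x))$ for $1 \le i < m$ — these are GNFO sentences, since $\mathit{below}_i(x)$ guards the inner negated atom and the outer formula is a sentence — together with the ground literals $\mathit{below}_j(c_i)$ for $i < j$ and $\neg\mathit{below}_j(c_i)$ for $i \ge j$, which pin down the position of the named constants. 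Call the resulting GNFO sentence $\psi$. I claim that $\varphi$ is satisfiable over a totally ordered domain iff $\psi$ is satisfiable over arbitrary structures; decidability of GNFO satisfiability \cite{Barany2015} then finishes the argument, and since GN‑Datalog satisfiability is itself decidable \cite{Barany2012} the extension is genuinely ``mild''.

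\textbf{The main obstacle} will be proving this equi‑satisfiability faithfully in both directions. The forward direction is immediate: in any ordered model of $\varphi$, interpret $\mathit{below}_i$ as $\{a : a < c_i\}$. The converse is delicate: from an abstract model $\mathfrak{A}\models\psi$ one must manufacture a total order on $|\mathfrak{A}|$ realising all the $\mathit{below}_i$ correctly. Monotonicity forces $\{i : \mathfrak{A}\models\mathit{below}_i(a)\}$ to be upward closed in $\{1,\dots,m\}$, assigning each element a well‑defined \emph{level}, and the placement literals force $c_i^{\mathfrak{A}}$ to have level $i$; one then orders $|\mathfrak{A}|$ first by level and, inside each level block, arbitrarily but with $c_i^{\mathfrak{A}}$ at the bottom of its block. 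A short check shows that in this order $a < c_i^{\mathfrak{A}}$ exactly when $\mathfrak{A}\models\mathit{below}_i(a)$ and that $c_1^{\mathfrak{A}} < \dots < c_m^{\mathfrak{A}}$, so every comparison atom of $\varphi$ recovers its intended meaning under the same witnesses that satisfied $\psi$; without the placement literals this fails, e.g.\ on $X = c_2 \wedge X < c_1$. The remaining bookkeeping — that unfolding inflates the formula only elementarily, hence irrelevantly for decidability, and that negation‑guardedness genuinely survives both the unfolding and the comparison rewrite — is routine but must be checked, since it is precisely what keeps the target formula inside GNFO.
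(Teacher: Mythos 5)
Your overall strategy coincides with the paper's: unfold the nonrecursive program into a single first-order formula, check that guardedness survives the unfolding, replace the comparison atoms by fresh unary predicates, axiomatize those predicates in GNFO, and invoke decidability of GNFO satisfiability from B\'{a}r\'{a}ny et al. The two steps you defer as ``routine'' --- that unfolding preserves guardedness and that the rewritten comparisons stay guarded --- are in fact where the paper's proof spends most of its effort (it first pulls constants out of atoms into equalities $X=c$ so that these equalities can serve as auxiliary guards, and then regroups each negated subformula with a positive guard drawn from the unfolded definition of its IDB guard); this is more delicate than you suggest, because after unfolding the guard of a negated atom may itself have become a disjunction rather than an atom, but it does go through.

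The genuine gap is in the comparison axiomatization. Your sentence $\psi$ contains only the monotonicity sentences and the placement literals for the named constants, and your backward direction manufactures a \emph{fresh} linear order on the abstract model. That establishes equi-satisfiability only when the totally ordered domain may be chosen freely --- equivalently, when it is dense and unbounded. The comparisons in this fragment range over a fixed attribute domain such as integers or dates, and there your reduction loses soundness in the completeness direction: the query $h(X) \ruleeq s(X), X > 1, X < 2$ over the integers is unsatisfiable, yet your $\psi$ has a model in which some element sits at the level strictly between $c_1 = 1$ and $c_2 = 2$, so your procedure would wrongly report the query satisfiable. The paper's axiomatization $\Phi$ handles exactly this by a case analysis on the actual domain, setting the disjunct $\varphi_{c_i < X < c_{i+1}}$ (and likewise $\varphi_{X<c_1}$ and $\varphi_{X>c_n}$) to $\bot$ whenever no domain element lies in that interval. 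To repair your version you must add, for every interval of the real domain that is empty, a sentence such as $\neg\exists x\,(\mathit{below}_{i+1}(x) \wedge \neg \mathit{below}_i(x) \wedge \neg(x = c_i))$ (still GNFO, since $\mathit{below}_{i+1}(x)$ guards the inner negations), and similarly for the two unbounded ends; with those added, your reduction matches the paper's.
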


Given a set of guarded negation constraints $\Sigma$ and a query $Q$, we say that $Q$ is satisfiable under $\Sigma$ if there is an input database $D$ satisfying all constraints in $\Sigma$ such that the result of $Q$ over $D$ is nonempty.
\begin{theorem}
    The query satisfiability problem for nonrecursive GN-Datalog with equalities, constants and comparisons under a set of guarded negation constraints is decidable. 
    \label{thm:gn-constraints}
\end{theorem}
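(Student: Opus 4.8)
The plan is to reduce the problem to the satisfiability problem for guarded negation first-order logic (GNFO), which is decidable by \cite{Barany2015}. Recall the construction underlying Lemma~\ref{lemma:decidability}: a nonrecursive GN-Datalog query $Q=(P,R)$ with equalities, constants and comparisons is translated into a GNFO sentence $\varphi_Q$ such that $Q$ is satisfiable iff $\varphi_Q$ is satisfiable. One unfolds the nonrecursive program, replacing each IDB atom by the disjunction of the existentially quantified, variable-renamed bodies of its rules; this terminates because $P$ is nonrecursive, and one sets $\varphi_Q \defeq \exists \vec{X}\, r(\vec{X})$ for the fully unfolded definition of $r$. Negation-guardedness of every rule guarantees that each occurrence of $\neg$ is guarded by a positive atom containing its free variables, so the result lies in GNFO. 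First I would recast this construction so that it also carries along the constraint set.

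Given a finite set $\Sigma$ of guarded negation constraints, each $\sigma\in\Sigma$ has the form $\forall \vec{X}\,(\Phi_\sigma(\vec{X})\rightarrow\bot)$, i.e.\ it asserts $\neg\exists\vec{X}\,\Phi_\sigma(\vec{X})$. Since $\Phi_\sigma$ is a guarded negation formula, $\exists\vec{X}\,\Phi_\sigma(\vec{X})$ is a GNFO \emph{sentence}, and the negation of a GNFO sentence is again in GNFO, because the guard condition attached to $\neg$ is vacuous when there are no free variables to guard. As $\Sigma$ is finite, $\psi_\Sigma \defeq \bigwedge_{\sigma\in\Sigma} \neg\exists\vec{X}\,\Phi_\sigma(\vec{X})$ is a GNFO sentence, and a database $D$ satisfies $\Sigma$ iff $D\models\psi_\Sigma$. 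Hence $Q$ is satisfiable under $\Sigma$ iff the GNFO sentence $\varphi_Q \wedge \psi_\Sigma$ has a model, and the latter is decidable. Equalities and constants are handled directly inside GNFO; comparisons $X<c$ and $X>c$ are eliminated exactly as in the proof of Lemma~\ref{lemma:decidability}, by replacing each such atom with a fresh monadic predicate denoting the corresponding one-sided interval and adding a fixed finite family of guarded negation constraints encoding the mutual consistency of these interval predicates (only finitely many constants occur, hence only finitely many such predicates and constraints). These auxiliary constraints are simply absorbed into $\Sigma$.

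The step I expect to be the main obstacle is faithfulness of the translation with respect to the \emph{intended} models, namely finite databases whose comparison predicates are interpreted by a genuine total order. For finiteness I would invoke the decidability of finite satisfiability for GNFO (and its finite model property), so that a satisfiable $\varphi_Q\wedge\psi_\Sigma$ has a finite model, yielding a finite database. For the order, I must argue that any finite model of the interval-predicate encoding can be expanded to one in which $<$ is an actual total order agreeing with the monadic predicates: since the only order atoms are comparisons with the finitely many constants $c_1<\dots<c_k$, each element's required behaviour is pinned down by which of the $k+1$ induced open intervals it must fall into, and the consistency constraints added to $\Sigma$ are exactly what guarantees such a placement exists; one can then realise it on the rationals. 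Checking that this expansion preserves the truth of $\varphi_Q$ and of the genuine constraints of $\Sigma$, and that nothing was lost when the original comparisons were replaced, is the technical heart of the argument, but it is a routine adaptation of the corresponding reasoning behind Lemma~\ref{lemma:decidability} and of the standard handling of order constraints involving only constants.
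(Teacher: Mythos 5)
Your proposal is correct and follows essentially the same route as the paper: translate the query into a GNFO sentence via the construction behind Lemma~\ref{lemma:decidability}, rewrite each constraint $\forall\vec{X}\,(\Phi_\sigma(\vec{X})\to\bot)$ as the GNFO sentence $\neg\exists\vec{X}\,\Phi_\sigma(\vec{X})$ (negation being harmless for sentences), and reduce to satisfiability of the conjunction, which is decidable. Your extra care about the comparison-predicate axiomatization and the finite model property is the same material the paper delegates to its proof of Lemma~\ref{lemma:decidability}, so nothing is missing or different in substance.
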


\subsection{A Case Study}
\label{sec:program_examples}

\begin{figure}[t]
        \centering
\begin{Datalog}[numbers=none,xleftmargin=0.8em, framexleftmargin=0em, frame=single, title=Base tables, linewidth=0.466\textwidth]
male($emp\_name$: string, $birth\_date$: date).
female($emp\_name$: string, $birth\_date$: date).
others($emp\_name$: string, $birth\_date$: date,                            $gender$: string).
ed($emp\_name$: string, $dept\_name$: string).
eed($emp\_name$: string, $dept\_name$: string).
\end{Datalog}
\vspace{-0.5em}
\begin{Datalog}[numbers=none,xleftmargin=0.8em, framexleftmargin=0em, frame=single, title=Views, linewidth=0.466\textwidth]
ced$(E, D)$             $\ruleeq$ ed$(E, D) , ¬$ eed$(E, D)$.
residents$(E, B, G)$     $\ruleeq$ others$(E, B, G)$.
residents$(E, B, $`F'$)$    $\ruleeq$ female$(E, B)$.
residents$(E, B, $`M'$)$    $\ruleeq$ male$(E, B)$.
residents1962$(E, B, G)$ $\ruleeq$ residents$(E, B, G),$ 
           $¬ B < $`1962-01-01'$, ¬ B > $`1962-12-31'.
employees$(E, B, G)$$\ruleeq$ residents$(E,B,G), $ ced$(E, D)$.
retired$(E)$       $\ruleeq$ residents$(E, B, G), ¬$ced$(E, \_)$.
\end{Datalog}
\vspace{-0.5em}
\caption{Database and view schema.}
\label{fig:examples}
\end{figure}

We consider a database of five base tables shown in Figure~\ref{fig:examples}. The base tables \texttt{male}, \texttt{female} and \texttt{others} contain personal information. Table \texttt{ed} has all historical departments of each person, while \texttt{eed} contains only former departments of each person. 
We illustrate how to use LVGN-Datalog to describe update strategies for the views defined in Figure~\ref{fig:examples}.

For the view \texttt{residents}, which contains all personal information, we use the attribute $gender$ to choose relevant base tables for propagating updated tuples in \texttt{residents}. More concretely, if there is a person in \texttt{residents} but not in any of the source tables \texttt{male}, \texttt{female} and \texttt{other}, we insert this person into the table corresponding to his/her $gender$. In contrast, we delete from the source tables the people who no longer appear in the view. The Datalog putback program for \texttt{residents} is the following:
\begin{Datalog}[numbers=none,xleftmargin=0em, framexleftmargin=0em]
+male$(E,B)$   $\ruleeq$ residents$(E,B,$`M'$),$ 
               $¬$ male$(E,B), ¬$ others$(E,B,$`M'$)$.
-male$(E,B)$   $\ruleeq$ male$(E,B), ¬$ residents$(E,B,$`M'$)$.
+female$(E,B)$ $\ruleeq$ residents$(E,B,G),$ $G = $`F'$, $ 
                $¬$ female$(E,B), ¬$ others$(E,B,G)$.
-female$(E,B)$ $\ruleeq$ female$(E,B), ¬$ residents$(E,B,$`F'$)$.
+others$(E,B,G)$ $\ruleeq$ residents$(E,B,G), ¬$ $G = $`M'$, $
                  $¬$ $G = $`F'$, ¬$ others$(E,B,G)$.
-others$(E,B,G)$ $\ruleeq$ others$(E,B,G),$ 
                  $¬$ residents$(E,B,G)$.
\end{Datalog}

The view \texttt{ced} contains information about the current departments of each employee. 
We express the following update strategy for propagating updated data in this view to the base tables \texttt{ed} and \texttt{eed}.
If a person is in a department according to \texttt{ed} but he/she is currently no longer in this department according to \texttt{ced}, this department becomes his/her previous department and thus needs to be added to \texttt{eed}.
If a person used to be in a department according to \texttt{eed} but he/she returned to this department according to \texttt{ced}, then this department of him/her needs to be removed from \texttt{eed}.
\begin{Datalog}[numbers=none,xleftmargin=0em, framexleftmargin=0em]
+ed$(E,D)$  $\ruleeq$ ced$(E,D),$ $¬$ ed$(E,D)$.
-eed$(E,D)$ $\ruleeq$ ced$(E,D),$ eed$(E,D)$.
+eed$(E,D)$ $\ruleeq$ ed$(E,D),$ $¬$ ced$(E,D),$ $¬$ eed$(E,D)$.
\end{Datalog}

The view \texttt{residents1962} is defined from the view \texttt{residen\-ts} such that \texttt{residents1962} contains all residents that have a birth date in 1962. Interestingly, because the view \texttt{residents} is now updatable, \texttt{residents} can be considered as the source relation of \texttt{residents1962}. Therefore, we can write an update strategy on \texttt{residents1962} for updating \texttt{residents} instead of updating the base tables \texttt{male}, \texttt{female} and \texttt{others} as follows:
\begin{Datalog}[numbers=none,xleftmargin=0em, framexleftmargin=0em]
$⊥$ $\ruleeq$ residents1962$(E,B,G), B > $`1962-12-31'.
$⊥$ $\ruleeq$ residents1962$(E,B,G), B < $`1962-01-01'.
+residents$(E,B,G) \ruleeq$ residents1962$(E,B,G),$ 
                      $¬$ residents$(E,B,G)$.
-residents$(E,B,G) \ruleeq$ residents$(E,B,G), $
                      $¬$ $B < $`1962-01-01'$,$ 
                      $¬$ $B > $`1962-12-31'$, $
                      $¬$ residents1962$(E,B,G)$.
\end{Datalog}
We define the constraints to guarantee that in the updated view \texttt{residents1962}, there is no tuple having a value of the attribute $birth\_date$ not in 1962. Any view updates that violate these constraints are rejected. In this way, our update strategy is to insert into the source table \texttt{residents} any new tuples appearing in \texttt{residents1962} but not yet in \texttt{residents}. On the other hand, we delete only tuples in \texttt{residents} having $birth\_date$ in 1962 if they no longer appear in \texttt{residents1962}.

The view \texttt{employees} contains residents who are employed, whereas \texttt{retired} contains residents who retired. Since \texttt{emplo\-yees} and \texttt{retired} are defined from two updatable views \texttt{residents} and \texttt{ced}, we can use \texttt{residents} and \texttt{ced} as the source relations to write an update strategy of \texttt{employees}:
\begin{Datalog}[numbers=none,xleftmargin=0em, framexleftmargin=0em]
$\bot$ $\ruleeq$ employees$(E,B,G), ¬$ ced$(E,\_)$.
+residents$(E,B,G) \ruleeq$ employees$(E,B,G),$ 
                      $¬$ residents$(E,B,G)$.
-residents$(E,B,G) \ruleeq$ residents$(E,B,G),$ 
                     ced$(E,\_), ¬$ employees$(E,B,G)$.
\end{Datalog}
Interestingly, in this strategy, we use a constraint to specify more complicated restrictions of updates on \texttt{employees}. The constraint implies that there must be no tuple $\langle E,B,G \rangle$ in the updated view \texttt{employees} having the value $E$ of the attribute $emp\_name$, which cannot be found in any tuples of \texttt{ced}. In other words, the constraint does not allow insertion into \texttt{employees} an actual new employee who is not mentioned in the source relation \texttt{ced}. The update strategy then reflects updates on the view \texttt{employees} to updates on the source \texttt{residents}.

For \texttt{retired}, we describe an update strategy to update the current employment status of residents as follows:
\begin{Datalog}[numbers=none,xleftmargin=0em, framexleftmargin=0em]
-ced$(E,D)$       $\ruleeq$ ced$(E,D),$ retired$(E)$.
+ced$(E,D)$       $\ruleeq$ residents$(E,\_,\_), ¬$ retired$(E), $
                    $¬$ ced$(E,\_) , D=$`unknown'.
+residents$(E,B,G)$ $\ruleeq$ retired$(E), G=$`unknown'$,$ 
            $¬$ residents$(E,\_,\_), B=$`00-00-00'.
\end{Datalog}


We have presented the formal way to describe view update strategies using Datalog.
In the next section, we will present our proposed validation algorithm for checking the validity of these update strategies.
In fact, if an update strategy specified in LVGN-Datalog is valid, the corresponding view definition can be automatically derived and expressed in nonrecursive GN-Datalog with equalities, constants and comparisons. For all the update strategies in our case study, the view definitions derived by our validation algorithm are the same as the expected ones in Figure~\ref{fig:examples}.

\section{Validation Algorithm}
\label{sec:verification}

As mentioned in Section~\ref{sec:background}, a view update strategy must be valid (Definition~\ref{def:put_validity}) to guarantee that every view update is well-behaved.
In this section, we present an algorithm for checking the validity of user-written view update strategies.

\subsection{Overview}
Checking the validity of a view update strategy based on Definition~\ref{def:put_validity} is challenging since it requires constructing a view definition satisfying \textit{both} the \textsc{GetPut} and \textsc{PutGet} properties. Instead, we shall propose another way for the validity check based on the following important fact.
\begin{lemma}
Given a valid view update strategy $put$, if a view definition $get$ satisfies \textsc{GetPut}, then $get$ must also satisfy \textsc{PutGet} with $put$.
    \label{lem:view_derivation}
\end{lemma}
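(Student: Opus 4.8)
The plan is to exploit the witness view definition guaranteed by validity, together with its \textsc{PutGet} property, to show that \emph{any} view definition satisfying \textsc{GetPut} with $put$ must coincide with that witness; \textsc{PutGet} for the given $get$ then follows for free.

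First, since $put$ is valid (Definition~\ref{def:put_validity}), I fix a view definition $get^{\star}$ such that $put$ and $get^{\star}$ together satisfy both \textsc{GetPut} and \textsc{PutGet}. The central observation is that, for every source database $S$, the set $\{\, V : put(S,V) = S \,\}$ is exactly the singleton $\{\, get^{\star}(S) \,\}$. On the one hand, $get^{\star}(S)$ lies in this set by \textsc{GetPut} applied to $get^{\star}$. On the other hand, if $put(S,V) = S$ for some view $V$, then applying $get^{\star}$ to both sides and using \textsc{PutGet} for $get^{\star}$ yields $V = get^{\star}(put(S,V)) = get^{\star}(S)$.

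Now let $get$ be any view definition satisfying \textsc{GetPut} with $put$, i.e.\ $put(S, get(S)) = S$ for every source $S$. By the observation above, $get(S)$ must belong to $\{\, get^{\star}(S) \,\}$, hence $get(S) = get^{\star}(S)$ for all $S$; that is, $get = get^{\star}$. Since $get^{\star}$ satisfies \textsc{PutGet} with $put$, so does $get$, which is the claim. (Equivalently, once $get$ is seen to agree with the well-behaved witness $get^{\star}$, the conclusion also follows from the uniqueness statement of Theorem~\ref{thm:unique_query}, but the direct argument already establishes the equality outright.)

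I do not expect a genuine obstacle here: the argument is purely equational and uses nothing about the Datalog fragment. The only point requiring a moment's care is that the instances of \textsc{PutGet} for $get^{\star}$ invoked above are of the form $get^{\star}(put(S,V))$, which is legitimate precisely because \textsc{PutGet} is stated for all source/view pairs and $put(S,V)$ is by construction a source database.
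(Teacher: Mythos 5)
Your proof is correct and follows essentially the same route as the paper's: both fix a witness $get^{\star}$ (the paper calls it $get^d$) guaranteed by validity, apply it to the \textsc{GetPut} equation $put(S,get(S))=S$, and use \textsc{PutGet} for the witness to conclude $get(S)=get^{\star}(S)$, whence $get$ inherits \textsc{PutGet}. The singleton-set framing is just a mild repackaging of the identical equational step.
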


\begin{figure}[t]
    \centering
    \includegraphics[trim={0 0.4cm 0 0 },clip, scale=0.8]{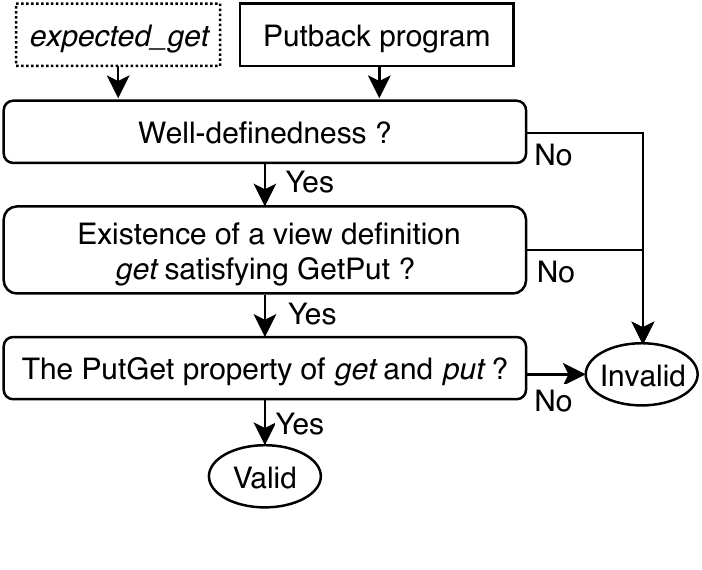}
    \caption{Validation algorithm.}
    \label{fig:validity-check}
\end{figure}

Lemma~\ref{lem:view_derivation} implies that if $put$ is valid, we can construct a view definition $get$ that satisfies both \ref{eq:getput} and \textsc{PutGet} by choosing any $get$ satisfying \ref{eq:getput}. 

By Lemma~\ref{lem:view_derivation}, the idea of our validation algorithm is detecting contradictions for the assumption that the given view update strategy $put$ is valid. Assuming that $put$ is valid, we first check the existence of a view definition $get$ satisfying \ref{eq:getput} with $put$. We consider the expected view definition $expected\_get$ if available as a candidate for the $get$ definition and construct the $get$ definition if $expected\_get$ does not satisfy \ref{eq:getput}. Clearly, if $get$ does not exist, we can conclude that $put$ is invalid. Otherwise, we continue to check whether $get$ also satisfies \textsc{PutGet} with $put$ (Lemma~\ref{lem:view_derivation}). If this check passed, we actually complete the validation and it is sufficient to conclude that $put$ is valid because the $get$ found satisfies both \textsc{GetPut} and \textsc{PutGet}. Furthermore, the constructed $get$ is useful to confirm the initially expected view definition especially when they are not the same. For the case in which the expected view definition is not explicitly specified, the view definition is automatically derived.

In particular, we are given a putback program $putdelta$, which is written in nonrecursive Datalog with negation and built-in predicates, and maybe an expected view definition ($expected\_get$) if it is explicitly described. The validation algorithm consists of three passes (see Figure~\ref{fig:validity-check}): (1) checking the well-definedness of the putback program, (2) checking the existence of a view definition $get$ satisfying \textsc{GetPut} with the view update strategy $put$ specified by the putback program and deriving $get$, and (3) checking whether $get$ and $put$ satisfy \textsc{PutGet}. If one of the passes fails, we can conclude that $put$ is invalid. Otherwise, $put$ is valid because the derived $get$ satisfies \textsc{GetPut} and \textsc{PutGet} with $put$.

\subsection{Well-definedness}
Consider a database schema $\mathcal{S} = \langle r_1, \ldots, r_n \rangle$ and a view $v$. Given a putback program $putdelta$, the goal is to check whether the delta $\Delta S$ resulting from $putdelta$ is non-contradi\-ctory for any source database $S$ and any view relation $V$. In other words, we check whether in $\Delta S$, there is no pair of insertion and deletion, ${+r_i}(\vec{t})$ and ${-r_i}(\vec{t})$, of the same tuple $\vec{t}$ on the same relation $R_i$.
To check this property, we add the following new rules to $putdelta$:
\begin{align}
    d_i(\vec{X_i}) \ruleeq {+r_i}(\vec{X_i}), {-r_i}(\vec{X_i}). \quad (i\in[1,n])
    \label{eq:disjoint_rule}
\end{align}
The problem of checking whether $\Delta S$ is non-contradictory is reduced to the problem of checking whether each IDB predicate $d_i$ in the Datalog program is unsatisfiable. When $putdelta$ is in LVGN-Datalog, because each rule~(\ref{eq:disjoint_rule}) is trivially negation guarded, according to Theorem~\ref{thm:gn-constraints}, the satisfiability of $d_i$ is decidable.

\subsection{Existence of A View Definition Satisfying GetPut}
Consider a view update strategy $put$ specified by a putback program $putdelta$ and a set of constraints $\Sigma$. Assume that $put$ is valid. If an expected view definition $expected\_get$ is explicitly written by users, we check whether $expected\_get$ satisfies \ref{eq:getput} with $put$. With the view defined by $expected\_get$, the \ref{eq:getput} property means that $put$ makes no change to the source. Therefore, checking the \ref{eq:getput} property is reduced to checking the unsatisfiability of each delta relation in the Datalog program $putdelta$. This check is decidable if $expected\_get$ and $putdelta$ are in LVGN-Datalog due to Theorem~\ref{thm:gn-constraints}.

If $expected\_get$ is not explicitly written or if it does not satisfy \ref{eq:getput}, we construct a view definition $get$ satisfying \ref{eq:getput} as follows. For each source database $S$, we find a steady-state view $V$ such that the putback transformation $put$ makes no change to the source database $S$. In other words, $V$ must satisfy the constraints in $\Sigma$ and $put (S, V) = S$. We define $get$ as the mapping that maps each $S$ to the $V$. If there exists an $S$ such that we cannot find any steady-state view, then there is no view definition satisfying \ref{eq:getput}, and we conclude that $put$ is invalid. Otherwise, the constructed $get$ satisfies \ref{eq:getput} with $put$. Moreover, the view relation $V$ resulting from $get$ over $S$ always satisfies $\Sigma$.

\begin{example}[Intuition]
Consider the update strategy $put$ in Example~\ref{exp:basics}.
For an arbitrary source database instance $S$, the goal is to find a steady-state view $V$ such that $put (S, V) = S$, i.e., both of the source relations $R_1$ and $R_2$ are unchanged. Recall that the putback transformation $put$ is described by Datalog rules that compute delta relations of each source relation $R_1$ and $R_2$. For $R_1$, we compute $\deltaAdd{R_1}$ and $\deltaDel{R_1}$, which are the set of insertions and the set of deletions on $R_1$, respectively. $R_1$ is unchanged if all inserted tuples are already in $R_1$ and all deleted tuples are actually not in $R_1$. Similarly, for $R_2$, all tuples in $\deltaDel{R_2}$ must be not in $R_2$ (we do not have $\deltaAdd{R_2}$). This leads to the following:
\begin{equation}
        \begin{array}{c}
            \deltaDel{R_1} \cap R_1 = \emptyset\\
            \deltaDel{R_2} \cap R_2 = \emptyset\\
            \deltaAdd{R_1} \setminus R_1 = \emptyset
        \end{array}
        \label{eqn:basics}
    \end{equation}
    Let us transform each delta predicate $-r_1$, $-r_2$, and $+r_1$ in the Datalog program $putdelta$ to the form of relational calculus query \cite{alicebook}:
        $\varphi_{-r_1} = r_1(X) \wedge \neg v(X)$,
        $\varphi_{-r_2} = r_2(X) \wedge \neg v(X)$,
        $\varphi_{+r_1} = v(X) \wedge \neg r_1(X) \wedge \neg r_2(X)$.
     The constraint (\ref{eqn:basics}) is equivalent to the constraint that all the relational calculus queries $\varphi_{-r_1}(X) \wedge r_1(X)$, $\varphi_{-r_2}(X) \wedge r_2(X)$ and $\varphi_{+r_1}(X) \wedge \neg r_1(X)$ result in an empty set over the database $(S, V)$ of both the source and view relations. In other words, $(S, V)$ does not satisfy the following first-order sentences:
    \[
        \left\{ \begin{array}{l}
            (S,V) \not\models  \exists X, \varphi_{-r_1}(X) \wedge r_1(X)  \\
            (S,V) \not\models  \exists X, \varphi_{-r_2}(X) \wedge r_2(X)  \\
            (S,V) \not\models \exists X, \varphi_{+r_1}(X) \wedge \neg r_1(X)
        \end{array}\right.
    \]
    By applying $\neg \exists X, \xi (X) \equiv \forall X, \xi (X) \rightarrow \bot$, we have
    \begin{align*}
        & \left\{   \begin{array}{l}
            (S,V) \models \forall X, \varphi_{-r_1}(X) \wedge r_1(X) \rightarrow \bot  \\
            (S,V) \models \forall X, \varphi_{-r_2}(X) \wedge r_2(X) \rightarrow \bot  \\
            (S,V) \models\forall X, \varphi_{+r_1}(X) \wedge \neg r_1(X)  \rightarrow \bot
        \end{array}\right.\\
        \Leftrightarrow& (S,V) \models \left\{ \begin{array}{l}
            \forall X, r_1(X) \wedge \neg v(X) \wedge r_1(X) \rightarrow \bot \\
            \forall X, r_2(X) \wedge \neg v(X) \wedge r_2(X) \rightarrow \bot \\
            \forall X, v(X) \wedge \neg r_1(X) \wedge \neg r_2(X) \wedge \neg r_1(X) \rightarrow \bot
        \end{array} \right. 
    \end{align*}

The idea for checking whether a view relation $V$ satisfying the above logical sentences exists is that we swap the atom $v(X)$ appearing in these sentences to either the right-hand side or the left-hand side of the implication formula. For this purpose, we apply $p \wedge \neg q \rightarrow \bot \equiv p \rightarrow q$ and obtain:
\begin{align*}
    \Leftrightarrow& (S,V) \models \left\{ \begin{array}{l}
        \forall X,r_1(X)  \rightarrow v(X) \\
        \forall X,r_2(X) \rightarrow v(X) \\
        \forall X,v(X)  \rightarrow \neg (\neg r_1(X) \wedge \neg r_2(X))
    \end{array} \right.
\end{align*}
By combining all sentences that have $v(X)$ on the right-hand side and combining all sentences that have $v(X)$ on the left-hand side, we obtain:
\begin{equation}
        \label{eqn:getput_fo}
        (S,V) \models \left\{ \begin{array}{l}
            \forall X, r_1(X) \vee r_2(X) \rightarrow v(X) \\
            \forall X, v(X)  \rightarrow \neg (\neg r_1(X) \wedge \neg r_2(X))
        \end{array}
        \right.
    \end{equation}
    Note that $S$ is an instance over $\langle r_1, r_2 \rangle$ and $V$ is the view relation corresponding to predicate $v$.
    The first sentence provides us the lower bound $V_{min}$ of $V$, which is the result of a first-order (FO) query\footnote{A FO query $\psi$ over $D$ results in all tuples $\vec{t}$ s.t. $D\models\psi(\vec{t})$.} $\psi_1 = r_1(X) \vee r_2(X)$ over $S$.
    The second sentence provides us the upper bound $V_{max}$ of $V$, which is the result of the first-order query $\psi_2 = \neg (\neg r_1(X) \wedge \neg r_2(X))$ over $S$.
    In fact, for each $S$, all the $V$ such that $V_{min} \subseteq V \subseteq V_{max}$ satisfy (\ref{eqn:getput_fo}), i.e., are steady-state instances of the view.
    Thus, a steady-state instance $V$ exists if $V_{min} \subseteq V_{max}$. 
    Indeed, by applying equivalence $\neg(p \vee q) \equiv \neg p \wedge \neg q$ to $\psi_2$, we obtain the same formula as $\psi_1$; hence, $\forall X,\psi_1(X) \to \psi_2(X)$ holds, leading to that $V_{min} \subseteq V_{max}$ holds. 
    Now by choosing $V_{min}$ as a steady-state view instance, we can construct a $get$ as the mapping that maps each $S$ to $V_{min}$. In other words, $get$ is a query equivalent to the FO query $\psi_1$ over the source $S$.
    Since $\psi_1$ is a safe-range formula\footnote{$\psi$ is a safe-range FO formula if all the variables in $\psi$ are range restricted \cite{alicebook}.}, we transform $\psi_1$ to an equivalent Datalog query\footnote{Due to the equivalence between nonrecursive Datalog queries and safe-range FO formulas \cite{alicebook}.} as follows:
    \begin{align}
        v(X) &\ruleeq r_1(X).\\
        v(X) &\ruleeq r_2(X).
    \end{align}
    This is the view definition $get$ that satisfies \ref{eq:getput} with the given view update strategy $put$.
    \label{exp:basic_view_derivation}
\end{example}

\subsubsection{Checking the existence of a steady-state view}
In general, similar to the idea shown in Example~\ref{exp:basic_view_derivation}, for an arbitrary putback program $putdelta$ and a set of constraints $\Sigma$ in LVGN-Datalog, we can always construct a guarded negation first-order (GNFO) sentence to check whether a steady-state view $V$ satisfying $\Sigma$ and $put(S,V) = S$ (i.e., $S \oplus putdelta (S,V) = S$) exists.

\begin{lemma}
    Given a LVGN-Datalog putback program putdelta and a set of guarded negation constraints $\Sigma$, there exist first-order formulas $\phi_1, \phi_2, \phi_3$ such that for a given database instance $S$, a view relation $V$ satisfies $\Sigma$ and $S \oplus putdelta (S,V) = S$ iff
    \begin{align}
        \left\{ \begin{array}{l}
            (S,V) \models \forall \vec{Y}, v(\vec{Y}) \wedge \phi_1(\vec{Y}) \rightarrow \bot\\
            (S,V) \models \forall \vec{Y}, \neg v(\vec{Y})  \wedge \phi_2(\vec{Y}) \rightarrow \bot\\
            (S,V) \not\models \phi_3
        \end{array}
        \right.
        \label{eq:sentence_combination}
    \end{align}
    where $v$ is the predicate corresponding to the view relation $V$ and $\phi_1, \phi_2, \phi_3$ have no occurrence of the view predicate $v$. Both $\phi_2(\vec{Y})$ and $\phi_3$ are safe-range GNFO formulas, and $v(\vec{Y}) \wedge \phi_1(\vec{Y})$ is equivalent to a GNFO formula.
    \label{lemma:solving_getput}
\end{lemma}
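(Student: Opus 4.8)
The plan is to carry out, in full generality, the transformation illustrated in Example~\ref{exp:basic_view_derivation}: rewrite ``$S \oplus putdelta(S,V)=S$ and $V \models \Sigma$'' as a finite conjunction of first-order sentences, and then bucket those sentences according to whether the view predicate $v$ occurs in each of them positively, negatively, or not at all. First I would reduce the equation to unsatisfiability statements. Assuming well-definedness has been verified in the previous pass, so $putdelta(S,V)$ is a non-contradictory $\Delta S$, the equation $S \oplus putdelta(S,V)=S$ holds iff $\deltaAdd{R_i}\setminus R_i=\emptyset$ and $\deltaDel{R_i}\cap R_i=\emptyset$ for every source relation $R_i$; equivalently, after adding the auxiliary rules ${e^+_i}(\vec{X_i})\ruleeq {+r_i}(\vec{X_i}),\neg r_i(\vec{X_i})$ and ${e^-_i}(\vec{X_i})\ruleeq {-r_i}(\vec{X_i}), r_i(\vec{X_i})$, it holds iff every IDB predicate $e^{+}_i$ and $e^{-}_i$ is unsatisfiable over $(S,V)$. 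These rules are still negation guarded --- the atom $\neg r_i(\vec{X_i})$ in the first is guarded by the positive IDB atom $+r_i(\vec{X_i})$ --- so the augmented program remains in nonrecursive GN-Datalog with equalities, constants and comparisons.

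Next I would unfold. Since the program is nonrecursive, each delta predicate has an equivalent first-order definition, which by the GN-Datalog-to-GNFO translation can be taken to be a GNFO formula $\delta^{+}_{r_i}(\vec{X_i})$, $\delta^{-}_{r_i}(\vec{X_i})$; likewise each constraint body $\Phi_\sigma$ is GNFO. Here the linear view restriction (Definition~\ref{def:linear_view}) does the essential work: because $v$ occurs only in the top-level delta and constraint rules, at most once each and without anonymous variables, each of $\delta^{+}_{r_i}$, $\delta^{-}_{r_i}$ and $\Phi_\sigma$ can be put in a disjunctive form $\bigvee_m \Psi_m$ in which every disjunct $\Psi_m$ is GNFO and has at most one occurrence of $v$. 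Distributing $\neg r_i(\vec{X_i})$ (resp.\ $r_i(\vec{X_i})$) over the disjuncts of $\delta^{+}_{r_i}$ (resp.\ $\delta^{-}_{r_i}$), the unsatisfiability of each $e^{\pm}_i$ and the validity of each $\sigma$ over $(S,V)$ become conjunctions of sentences of the form $\forall \vec{X}, \Theta(\vec{X}) \to \bot$ with $\Theta$ a GNFO formula containing at most one $v$-atom.

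Then I would classify these sentences and assemble $\phi_1,\phi_2,\phi_3$ effectively. If $\Theta$ is $v$-free, it constrains $S$ alone; its negation $\exists \vec{X}, \Theta(\vec{X})$ is a safe-range GNFO sentence, which I add as a disjunct to $\phi_3$, so that this requirement becomes $(S,V)\not\models\phi_3$. If $\Theta$ has a positive occurrence $v(\vec{W})$, write $\Theta = v(\vec{W})\wedge\chi$ with $\chi$ $v$-free; renaming the view arguments to fresh variables $\vec{Y}$ gives the equivalent $\forall \vec{Y}, v(\vec{Y})\wedge\phi_1^{(k)}(\vec{Y})\to\bot$ where $\phi_1^{(k)}(\vec{Y})$ existentially quantifies all original variables and adds $\vec{W}=\vec{Y}$; I add $\phi_1^{(k)}$ to $\phi_1$. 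A negative occurrence $\neg v(\vec{W})$ is handled the same way, producing $\forall \vec{Y}, \neg v(\vec{Y})\wedge\phi_2^{(k)}(\vec{Y})\to\bot$, and $\phi_2^{(k)}$ goes into $\phi_2$. Setting $\phi_1\defeq\bigvee_k\phi_1^{(k)}$, $\phi_2\defeq\bigvee_k\phi_2^{(k)}$, $\phi_3\defeq\bigvee_k\phi_3^{(k)}$ and using $v(\vec{Y})\wedge\bigvee_k\phi_1^{(k)}(\vec{Y})\equiv\bigvee_k\bigl(v(\vec{Y})\wedge\phi_1^{(k)}(\vec{Y})\bigr)$ (and dually for $\phi_2$), the whole condition collapses to exactly (\ref{eq:sentence_combination}). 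It then remains to check the regularity claims: all three formulas are $v$-free by construction; $\phi_3$ and each $\phi_2^{(k)}$ (hence $\phi_2$) are safe-range GNFO, because in the negated-view case the negated view atom is guarded (by negation guardedness, or by the guarded-negation-constraint condition), so the guard range-restricts $\vec{W}$ and hence $\vec{Y}$ through $\vec{W}=\vec{Y}$, while every remaining variable is range-restricted by safety of the rules and every negated subformula keeps its guard; and $v(\vec{Y})\wedge\phi_1(\vec{Y})$ is GNFO because each conjunct $v(\vec{Y})\wedge\phi_1^{(k)}(\vec{Y})$ is, up to the renaming of $\vec{W}$ to $\vec{Y}$, precisely a disjunct of $\delta^{+}_{r_i}(\vec{X})\wedge\neg r_i(\vec{X})$, of $\delta^{-}_{r_i}(\vec{X})\wedge r_i(\vec{X})$, or of $\Phi_\sigma(\vec{X})$, each of which is GNFO.

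The one step that is genuine content rather than bookkeeping is the asymmetry between $\phi_1$ and $\phi_2$: negation guardedness supplies a guard for a negated occurrence of $v$ but not for a positive one, so after detaching a positive $v$-atom the remainder $\phi_1^{(k)}$ may be neither safe-range nor GNFO on its own --- for instance the rule ${+r_1}(X)\ruleeq v(X),\neg r_1(X),\neg r_2(X)$ of Example~\ref{exp:basics} produces $\phi_1^{(k)}(Y)=\neg r_1(Y)\wedge\neg r_2(Y)$, where $Y$ is unrestricted. This is exactly why the statement can only assert that $v(\vec{Y})\wedge\phi_1(\vec{Y})$ --- in which the positive $v$-atom itself serves as the guard --- is GNFO, rather than ``$\phi_1$ is GNFO''; pinning down this formulation is the crux, and it is what the subsequent derivation of $get$ (using $\phi_2$ for the lower bound and $v(\vec{Y})\wedge\phi_1(\vec{Y})$ for the upper bound) relies on.
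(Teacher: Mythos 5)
Your proposal is correct and follows essentially the same route as the paper's proof: reduce $S \oplus putdelta(S,V)=S$ to emptiness conditions $\deltaAdd{R_i}\setminus R_i=\deltaDel{R_i}\cap R_i=\emptyset$, invoke the linear-view restriction to put each unfolded formula into a disjunctive normal form with at most one $v$-atom per disjunct (the paper isolates this as a separate claim proved by induction on the Datalog-to-GNFO translation), then bucket disjuncts by the polarity of the $v$ occurrence and assemble $\phi_1,\phi_2,\phi_3$ after standardizing the view arguments to a common $\vec{Y}$. Your explicit remark that the equivalence with the emptiness conditions uses non-contradictoriness of $\Delta S$, and your closing observation about why only $v(\vec{Y})\wedge\phi_1(\vec{Y})$ (not $\phi_1$ alone) can be claimed GNFO, are both accurate and consistent with the paper.
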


The third constraint $(S,V) \not\models \phi_3$ in (\ref{eq:sentence_combination}) is simplified to $S \not\models \phi_3$ because the FO sentence $\phi_3$ has no atom of $v$ as a subformula. This means that $\phi_3$ must be unsatisfiable over any database $S$. Since $\phi_3$ is a GNFO sentence, we can check whether $\phi_3$ is satisfiable. If it is satisfiable, we conclude that the view relation $V$ does not exist; thus, $put$ is invalid.

For the two other constraints in (\ref{eq:sentence_combination}), by applying the logical equivalence $p∧¬q→⊥≡p→q$, we have:
\begin{align}
    \left\{ \begin{array}{l}
        (S,V) \models \forall \vec{Y}, v(\vec{Y})  \rightarrow \neg \phi_1(\vec{Y})\\
        (S,V) \models \forall \vec{Y}, \phi_2(\vec{Y}) \rightarrow v(\vec{Y})
    \end{array}
    \right. \label{eq:view_derivation}
\end{align}
Because $\phi_1$ and $\phi_2$ do not contain an atom of $v$ as a subformula, there exists an instance $V$ if
\begin{align*}
    & S \models \forall \vec{Y}, \phi_2(\vec{Y}) \rightarrow \neg \phi_1(\vec{Y})\\
    \Leftrightarrow & S \models \forall \vec{Y}, \phi_1(\vec{Y}) \wedge \phi_2(\vec{Y}) \rightarrow \bot
\end{align*}
This means that the sentence $\exists \vec{Y}, \phi_1(\vec{Y}) \wedge \phi_2(\vec{Y})$ is not satisfiable.
In this way, checking the existence of a $V$ is now reduced to checking the satisfiability of $\exists \vec{Y}, \phi_1(\vec{Y}) \wedge \phi_2(\vec{Y})$.
The idea of checking the satisfiability of $\exists \vec{Y}, \phi_1(\vec{Y}) \wedge \phi_2(\vec{Y})$ is to reduce this problem to that of a GNFO sentence. For this purpose, by introducing a fresh relation $r$ of an appropriate arity, we have the fact that $\exists \vec{Y}, \phi_1(\vec{Y}) \wedge \phi_2(\vec{Y})$ is satisfiable if and only if $\exists \vec{Y}, r(\vec{Y}) \wedge \phi_1(\vec{Y}) \wedge \phi_2(\vec{Y})$ is satisfiable. Because $v(\vec{Y}) \wedge \phi_1(\vec{Y})$ is equivalent to a GNFO formula, $r(\vec{Y}) \wedge \phi_1(\vec{Y})$ is also equivalent to a GNFO formula. 
On the other hand, $\phi_2(\vec{Y})$ is equivalent to a GNFO formula; hence, we can transform $\exists \vec{Y}, r(\vec{Y}) \wedge \phi_1(\vec{Y}) \wedge \phi_2(\vec{Y})$ into an equivalent GNFO sentence whose satisfiability is decidable \cite{Barany2015}.

\subsubsection{Constructing a view definition} 
If both $\phi_3$ and $\exists \vec{Y}, \phi_1(\vec{Y}) \wedge \phi_2(\vec{Y})$ are unsatisfiable, there exists a steady-state view $V$ satisfying $\Sigma$ such that $S \oplus putdelta (S,V) = S$ for each database $S$. One steady-state view $V$ is the one resulting from the FO formula $\phi_2$ over $S$. Indeed, such a $V$ satisfies (\ref{eq:view_derivation}); hence, it satisfies $\Sigma$ and $S \oplus putdelta (S,V) = S$. By choosing this steady-state view, we can construct a view definition $get$ as the Datalog query equivalent to $\phi_2$ because $\phi_2$ is a safe-range formula. The equivalence of safe-range first-order logic and Datalog was well studied in database theory \cite{alicebook,Barany2012}. We present the detailed transformation from safe-range FO formula to Datalog query in Ex~\ref{sec:fo-to-datalog}. Due to Lemma~\ref{lemma:solving_getput}, $\phi_2$ is also negation guarded and hence, $get$ is in nonrecursive GN-Datalog with equalities, constants and comparisons.

\subsection{The PutGet Property}
To check the \textsc{PutGet} property that $get(put(S,V)) = V$ for any $S$ and $V$, we first construct a Datalog query over database $(S,V)$ equivalent to the composition $get(put(S,V))$.
Recall that $put(S,V) = S \oplus putdelta(S,V)$. The result of $put(S,V)$ is a new source $S'$ obtained by applying $\Delta S$ computed from $putdelta$ to the original source $S$.
Let us use predicate $r_i^{new}$ for the new relation of predicate $r_i$ in $S$ after the update.
The result of applying a delta $\Delta S$ to the database $S$ is equivalent to the result of the following Datalog rules ($i\in [1,n]$):
\begin{Datalog}[numbers=none,xleftmargin=0em]
$r_i^{new}$($\vec{X_i}$) $\ruleeq$ $r_i$($\vec{X_i}$), $\neg$ -$r_i$($\vec{X_i}$).
$r_i^{new}$($\vec{X_i}$) $\ruleeq$ +$r_i$($\vec{X_i}$).
\end{Datalog}
By adding these rules to the Datalog putback program $putdelta$, we derive a new Datalog program, denoted as $newsource$, that results in a new source database. The result of $get(put(S,V))$ is the same as the result of the Datalog query $get$ over the new source database computed by the program $newsource$. Therefore, we can substitute each EDB predicate $r_i$ in the program $get$ with the new program $r_i^{new}$ and then merge the obtained program with the program $newsource$ to obtain a Datalog program, denoted as $putget$.
The result of $putget$ over $(S,V)$ is exactly the same as the result of $get(put(S,V))$.
For example, the Datalog program $putget$ for the view update strategy in Example~\ref{exp:basic_view_derivation} is:
\begin{Datalog}[numbers=none,xleftmargin=0em]
-$r_1(X)$   $\ruleeq$ $r_1(X)$, $\neg$ $v(X)$.
-$r_2(X)$   $\ruleeq$ $r_2(X)$, $\neg$ $v(X)$.
+$r_1(X)$   $\ruleeq$ $v(X)$, $\neg$ $r_1(X)$, $\neg$ $r_2(X)$.
$r_1^{new}(X)$  $\ruleeq$ $r_1(X)$, $\neg$ -$r_1(X)$. 
$r_1^{new}(X)$  $\ruleeq$ +$r_1(X)$. 
$r_2^{new}(X)$  $\ruleeq$ $r_2(X)$, $\neg$ -$r_2(X)$. 
$v^{new}(X)$  $\ruleeq$ $r_1^{new}(X)$.
$v^{new}(X)$  $\ruleeq$ $r_2^{new}(X)$.
\end{Datalog}

Checking the \textsc{PutGet} property is now reduced to checking whether the result of Datalog query $putget$ over database $(S,V)$ is the same as the view relation $V$.
By transforming $putget$ to the FO formula $\phi_{putget}(\vec{Y})$, we reduce checking the \textsc{PutGet} property to checking the satisfiability of the two following sentences:
\begin{align}
    \Phi_1 &= \exists \vec{Y}, \phi_{putget}(\vec{Y}) \wedge \neg v(\vec{Y}) \label{eq:putget_sentence1}\\
    \Phi_2 &= \exists \vec{Y}, v(\vec{Y}) \wedge \neg \phi_{putget}(\vec{Y}) \label{eq:putget_sentence2}
\end{align}
The \textsc{PutGet} property holds if and only if $\Phi_1$ and $\Phi_2$ are not satisfiable. Clearly, if $get$ and $putdelta$ are in LVGN-Datalog, $putget$ is also in LVGN-Datalog, leading to that $\phi_{putget}(\vec{Y})$ is a GNFO formula. Therefore, $\Phi_2$ is a GNFO sentence; hence, its satisfiability is decidable. $\Phi_1$ is satisfiable if and only if $\Phi_1' = \exists \vec{Y}, \phi_{putget}(\vec{Y}) \wedge r(\vec{Y}) \wedge \neg v(\vec{Y})$ is satisfiable, where $r$ is a fresh relation of an appropriate arity. Since $\Phi_1'$ is a guarded negation first-order sentence, its satisfiability is decidable, and thus the satisfiability of $\Phi_1$ is also decidable.

\subsection{Soundness and Completeness}
\label{subsec:sound-completeness}
\begin{algorithm}[t]
    \caption{\textsc{Validate}($expected\_get$, $putdelta$, $\Sigma$)}
    \label{algo:LVGN-Datalog-checking}
    \DontPrintSemicolon
    $get$ $\leftarrow$ null;\\
    \tcp{Checking the well-definedness of $putdelta$}
    \textbf{check} if all predicates $d_i$ ($i\in[1,n]$) in (\ref{eq:disjoint_rule}) are unsatisfiable under $\Sigma$; \\
    \If{$expected\_get$ \textbf{is not} null}{
    	\tcp{Checking if $expected\_get$ satisfies \ref{eq:getput}}
        \If{all delta relations of $putdelta$ are unsatisfiable under $\Sigma$ with the view defined by $expected\_get$}{
        	$get$ $\leftarrow$ $expected\_get$;
        }
    }
    \If{($expected\_get$ \textbf{is} null) \textbf{or} ($get$  \textbf{is} null)} {
    	\tcp{Constructing a $get$ satisfying \ref{eq:getput}}
	    \textbf{check} if $\phi_3$ in (\ref{eq:sentence_combination}) is unsatisfiable under $\Sigma$;\\
	    \textbf{check} if $\exists \vec{Y}, \phi_1(\vec{Y}) \wedge \phi_2(\vec{Y})$ ($\phi_1$ and $\phi_2$ in (\ref{eq:view_derivation})) is unsatisfiable under $\Sigma$; \\
	    \tcp{Constructing a $get$}
	    $get \leftarrow$ Translating FO formula $\phi_2$ in (\ref{eq:view_derivation}) to an equivalent Datalog query;
	}
	\tcp{Checking the \textsc{PutGet} property}
    \textbf{check} if $\Phi_1$ and $\Phi_2$ in (\ref{eq:putget_sentence1}) and (\ref{eq:putget_sentence2}) are unsatisfiable under $\Sigma$; \\
    \textbf{return} $get$;
\end{algorithm}
Algorithm~\ref{algo:LVGN-Datalog-checking} summarizes the validation of Datalog putback programs $putdelta$. After all the checks have passed, the corresponding view definition is returned and $putdelta$ is valid. 
For LVGN-Datalog in which the query satisfiability is decidable (Theorem~\ref{thm:gn-constraints}), Algorithm~\ref{algo:LVGN-Datalog-checking} is sound and complete.
\begin{theorem}[Soundness and Completeness]~
\begin{itemize}
    \item If a LVGN-Datalog putback program $putdelta$ passes all the checks in Algorithm \ref{algo:LVGN-Datalog-checking}, $putdelta$ is valid.

    \item Every valid LVGN-Datalog putback program $putdelta$ passes all the checks in Algorithm \ref{algo:LVGN-Datalog-checking}.
\end{itemize}
\end{theorem}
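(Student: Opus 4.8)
The plan is to prove the two bullets separately, assembling the three reductions established earlier in this section --- well-definedness of $putdelta$ $\Leftrightarrow$ unsatisfiability of every $d_i$ of (\ref{eq:disjoint_rule}); existence of a view definition satisfying \textsc{GetPut} $\Leftrightarrow$ unsatisfiability of $\phi_3$ and of $\exists\vec{Y},\phi_1(\vec{Y})\wedge\phi_2(\vec{Y})$; \textsc{PutGet} $\Leftrightarrow$ unsatisfiability of $\Phi_1$ and $\Phi_2$ --- together with Lemma~\ref{lem:view_derivation}. For LVGN-Datalog inputs each of these is a satisfiability test that reduces to deciding satisfiability in nonrecursive GN-Datalog with equalities, constants and comparisons under $\Sigma$, hence is decidable by Theorem~\ref{thm:gn-constraints}; so ``$putdelta$ passes all the checks'' is a well-defined predicate.

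For soundness, suppose $putdelta$ passes all \textbf{check}s. Unsatisfiability of every $d_i$ yields that $putdelta(S,V)$ is non-contradictory for all $S,V$ (satisfying $\Sigma$), so $put(S,V)=S\oplus putdelta(S,V)$ is well defined. Let $get$ be the definition the algorithm returns. If $get=expected\_get$, the test that accepted it guarantees that every delta relation of $putdelta$ is unsatisfiable when $v$ is interpreted by $expected\_get$, so $putdelta(S,get(S))=\emptyset$ and therefore $put(S,get(S))=S$. Otherwise $get$ is the Datalog query equivalent to the safe-range formula $\phi_2$, and the unsatisfiability of $\phi_3$ and of $\exists\vec{Y},\phi_1(\vec{Y})\wedge\phi_2(\vec{Y})$ gives, by Lemma~\ref{lemma:solving_getput} and the derivation of (\ref{eq:view_derivation}), that for every $S$ the relation defined by $\phi_2$ over $S$ is a steady-state view satisfying $\Sigma$, i.e. $put(S,get(S))=S$. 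In both cases $get$ satisfies \textsc{GetPut} with $put$. Finally, unsatisfiability of $\Phi_1$ and $\Phi_2$ means, by the construction of $putget$, that $get(put(S,V))=V$ for all $S,V$, i.e. \textsc{PutGet} holds. Hence $get$ witnesses that $put$ is valid (Definition~\ref{def:put_validity}).

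For completeness, suppose $putdelta$ is valid, with companion view definition $get^{\ast}$ satisfying \textsc{GetPut} and \textsc{PutGet} --- unique by Theorem~\ref{thm:unique_query}, and (being the view definition of a strategy valid under $\Sigma$) mapping every $S$ to a view satisfying $\Sigma$. Since \textsc{GetPut} and \textsc{PutGet} quantify over all inputs, $put(S,V)=S\oplus putdelta(S,V)$ must be defined for all $S,V$, so $putdelta$ is well defined and the $d_i$-unsatisfiability check passes. Because $get^{\ast}(S)$ is, for every $S$, a steady-state view satisfying $\Sigma$ with $put(S,get^{\ast}(S))=S$, Lemma~\ref{lemma:solving_getput} and the analysis preceding (\ref{eq:view_derivation}) imply that $\phi_3$ and $\exists\vec{Y},\phi_1(\vec{Y})\wedge\phi_2(\vec{Y})$ are unsatisfiable under $\Sigma$; hence whenever the algorithm enters the construction branch its two checks pass and it returns the Datalog query $get$ equivalent to $\phi_2$, which satisfies \textsc{GetPut}. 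If instead $expected\_get$ is supplied and the acceptance test holds, then (by that test) $put(S,expected\_get(S))=S$, and the algorithm returns $get=expected\_get$, which satisfies \textsc{GetPut}. In either case the returned $get$ satisfies \textsc{GetPut}; since $put$ is valid, Lemma~\ref{lem:view_derivation} forces $get$ to satisfy \textsc{PutGet} as well (and, by Theorem~\ref{thm:unique_query}, $get\equiv get^{\ast}$). Therefore $\Phi_1$ and $\Phi_2$ are unsatisfiable and the \textsc{PutGet} check passes, so $putdelta$ passes all the checks.

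The main obstacle is the completeness direction, specifically arguing that the \textsc{PutGet} check passes: the algorithm tests \textsc{PutGet} directly but only ever \emph{establishes} \textsc{GetPut} for the $get$ it actually produces, so one must invoke Lemma~\ref{lem:view_derivation} to upgrade \textsc{GetPut} to \textsc{PutGet} for precisely that $get$. This in turn requires a careful case analysis over the three ways the algorithm fixes $get$ (expected view supplied and accepted; supplied but rejected, so the algorithm falls through to construction; not supplied), confirming in each case that the produced $get$ genuinely satisfies \textsc{GetPut} --- which in the construction branch rests on the faithfulness of the reduction in Lemma~\ref{lemma:solving_getput} and on $\phi_2$ being safe-range, so that it admits a genuine equivalent nonrecursive Datalog query and the resulting $putget$ stays inside LVGN-Datalog.
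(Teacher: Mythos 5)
Your proposal is correct and follows essentially the same route as the paper: the paper gives no standalone proof of this theorem, its argument being exactly the Section~\ref{sec:verification} prose you assemble --- the three reductions (well-definedness to unsatisfiability of the $d_i$, \textsc{GetPut} to unsatisfiability of $\phi_3$ and $\exists\vec{Y},\phi_1\wedge\phi_2$ via Lemma~\ref{lemma:solving_getput}, \textsc{PutGet} to unsatisfiability of $\Phi_1,\Phi_2$), decidability from Theorem~\ref{thm:gn-constraints}, and Lemma~\ref{lem:view_derivation} to upgrade \textsc{GetPut} to \textsc{PutGet} in the completeness direction. Your explicit case analysis over how the algorithm fixes $get$ is a faithful (and somewhat more careful) rendering of the paper's intended argument.
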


It is remarkable that if $putdelta$ is not in LVGN-Datalog, but in nonrecursive Datalog with unrestricted negation and built-in predicates, we can still perform the checks in the validation algorithm by feeding them to an automated theorem prover. 
Though, Algorithm~\ref{algo:LVGN-Datalog-checking} may not terminate and not successfully construct the view definition $get$ because of the undecidability problem \cite{alicebook, Shmueli1993}. 
Therefore, Algorithm~\ref{algo:LVGN-Datalog-checking} is sound for validating the pair of $putdelta$ and $expected\_get$ that once it terminates, we can conclude $putdelta$ is valid.

\section{Incrementalization}
\label{sec:incrementalization}
We have shown that an updatable view is defined by a valid $put$, which makes changes to the source to reflect view updates. However, when there is only a small update on the view, repeating the $put$ computation is not efficient. In this section, we further optimize the computation of the putback program by exploiting its well-behavedness and integrating it with the standard incrementalization method for Datalog.

Consider the steady state before a view update in which both the source and the view are unchanged; due to the \ref{eq:getput} property, a valid $putdelta$ results in a $\Delta S$ having no effect on the original source $S$: $S\oplus\Delta S = S$. 
This means that $\Delta S$ can be either an empty set or a nonempty set in which all deletions in $\Delta S$ are not yet in the original source $S$ and all insertions in $\Delta S$ are already in $S$.
If the view is updated by a delta $\Delta V$, there will be some changes to $\Delta S$, denoted as $\Delta^2 S$, that have effects on the original source  $S$.


\begin{example}
    Consider the database in Example~\ref{exp:basics}: $S = \{r_1(1), r_2(2), r_2(4)\}$. Let $\Delta S = \{+r_1(1), +r_2(2), -r_2(3)\}$ be a delta of $S$. Clearly, $S \oplus \Delta S = S$. Now, we change $\Delta S$ by a delta of $\Delta S$, denoted as $\Delta^2 S$, which includes a set of deletions to $\Delta S$, $\Delta^{2-} S =  \{+r_1(1), -r_2(3)\}$, and a set of insertions to $\Delta S$, $\Delta^{2+} S =  \{+r_1(3), -r_2(4)\}$. We obtain a new delta of $S$:
    \[\Delta S' = (\Delta S \setminus \Delta^{2-} S) \cup \Delta^{2+} S = \{ +r_1(3), +r_2(2), -r_2(4) \}\]
    and the new database $S' = S \oplus \Delta S' = \{r_1(1), r_1(3), r_2(2)\}$.
    In fact, we can also obtain the same $S'$ by applying only $\Delta^{2+} S$ directly to $S$: $S' = S \oplus \Delta^{2+} S$.
\end{example}

Intuitively, for each base relation $R_i$ in the source database $S$, we obtain the new $R'_i$ by applying to $R_i$ the delta relations $\deltaDel{R_i}$ and $\deltaAdd{R_i}$ from $\Delta S$. Because all the tuples in $\deltaDel{R_i}$ are not in $R_i$ and all the tuples in $\deltaAdd{R_i}$ are in $R_i$, if we remove some tuples from $\deltaDel{R_i}$ or $\deltaAdd{R_i}$, then the result $R'_i$ has no change. Only the tuples inserted into $\deltaDel{R_i}$ or $\deltaAdd{R_i}$ make some changes in $R'_i$.
Therefore, $S'$ can be obtained by applying to the original $S$ the part $\Delta^{2+} S$ of $\Delta^2 S$, i.e., $\Delta S'$ and $\Delta^{2+} S$ are interchangeable.

\begin{proposition}
    \label{prop:putdelta_incrementalization}
    Let $S$ be a database and $\Delta S$ be a non-contradictory delta of the database $S$ such that $S \oplus \Delta S = S$. Let $\Delta^2 S$ be a delta of $\Delta S$, and the following equation holds:
    \begin{equation*}
        S' = S \oplus \Delta S'  = S \oplus \Delta^{2+} S
    \end{equation*}
    where $\Delta S' = \Delta S \oplus \Delta^2 S$ and $\Delta^{2+} S$ is the set of new tuples inserted into $\Delta S$ by applying $\Delta^2 S$.
\end{proposition}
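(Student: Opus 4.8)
The plan is to prove the claimed identity $S\oplus\Delta S' = S\oplus\Delta^{2+}S$ (the first equality $S' = S\oplus\Delta S'$ is simply the definition of $S'$) relation by relation, since $\oplus$ acts independently on each base relation $R_i$. Fix $R = R_i$. Let $\deltaAdd{R}$ and $\deltaDel{R}$ be the sets of tuples recorded in $\Delta S$ as insertions and deletions of $R$; let $A^+$ and $B^+$ be the tuples that $\Delta^2 S$ adds to $\Delta S$ as $+r$- and $-r$-atoms (so $\Delta^{2+}S$, restricted to $R$, consists exactly of these atoms); and let $A^-\subseteq\deltaAdd{R}$, $B^-\subseteq\deltaDel{R}$ be the tuples whose $+r$- / $-r$-atoms $\Delta^2 S$ removes. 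The first step is to unpack the steady-state hypothesis: from $R\oplus\Delta R = (R\setminus\deltaDel{R})\cup\deltaAdd{R} = R$ together with non-contradictoriness of $\Delta S$, one obtains $\deltaAdd{R}\subseteq R$ and $\deltaDel{R}\cap R = \emptyset$ — every recorded insertion is already present and every recorded deletion is absent. This is precisely the informal observation made just above the proposition, and it is what drives the computation.

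The second step is to compute the $R$-component of $S\oplus\Delta S'$. Since $\Delta S' = \Delta S\oplus\Delta^2 S$, the insertion tuples of $R$ in $\Delta S'$ are $(\deltaAdd{R}\setminus A^-)\cup A^+$ and the deletion tuples are $(\deltaDel{R}\setminus B^-)\cup B^+$, so the new relation is $R' = \big(R\setminus((\deltaDel{R}\setminus B^-)\cup B^+)\big)\cup(\deltaAdd{R}\setminus A^-)\cup A^+$. For the deletion part, $\deltaDel{R}\setminus B^-\subseteq\deltaDel{R}$ is disjoint from $R$, so $R\setminus((\deltaDel{R}\setminus B^-)\cup B^+) = R\setminus B^+$, giving $R' = (R\setminus B^+)\cup(\deltaAdd{R}\setminus A^-)\cup A^+$. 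For the insertion part, $\deltaAdd{R}\setminus A^-\subseteq\deltaAdd{R}\subseteq R$, so re-adding it changes nothing provided it is disjoint from $B^+$; and since $\deltaAdd{R}\setminus A^-$ are insertion tuples of $\Delta S'$ while $B^+$ are deletion tuples of $\Delta S'$, that disjointness is exactly non-contradictoriness of $\Delta S'$ restricted to $R$. Using it, $R' = (R\setminus B^+)\cup A^+$, which is precisely the $R$-component of $S\oplus\Delta^{2+}S$ (applying $\Delta^{2+}S$ deletes the tuples of $B^+$ from $R$ and inserts those of $A^+$). Ranging over all $i$ finishes the argument.

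The step I expect to be the crux — and the only place a hypothesis beyond the steady-state condition is used — is discarding the surviving old insertions $\deltaAdd{R}\setminus A^-$ in the last line: this is legitimate only because $\Delta S'$ is non-contradictory, since otherwise a tuple could carry both a leftover insertion atom and a fresh deletion atom in $\Delta S'$, and then $S\oplus\Delta S'$ would retain it while $S\oplus\Delta^{2+}S$ would drop it, breaking the identity. I would therefore take non-contradictoriness of $\Delta S'$ as a standing assumption; it is automatic in the incrementalization setting, where both $\Delta S$ and $\Delta S'$ are outputs of a well-defined putback program. The remaining manipulations are routine set algebra driven by the two consequences $\deltaAdd{R}\subseteq R$ and $\deltaDel{R}\cap R = \emptyset$ of the steady-state hypothesis.
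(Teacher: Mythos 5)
Your proof is correct and follows essentially the same route as the paper's: work relation by relation, expand $R'_i = (R_i\setminus\Delta'^-_{R_i})\cup\Delta'^+_{R_i}$ via the delta-application formula, and simplify using the two steady-state consequences $\deltaAdd{R_i}\subseteq R_i$ and $\deltaDel{R_i}\cap R_i=\emptyset$. Your one substantive addition is the observation that discarding the surviving old insertions $\deltaAdd{R}\setminus A^-$ requires $\Delta S'$ to be non-contradictory; the paper's proof performs this simplification citing only the disjointness of the \emph{original} $\deltaAdd{R_i}$ and $\deltaDel{R_i}$, which does not exclude a leftover $+r_i(t)$ clashing with a freshly added $-r_i(t)$, so the standing assumption you flag is genuinely needed (and is implicit in the paper's setting, where $\Delta S'$ is the output of a well-defined putback program).
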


Proposition~\ref{prop:putdelta_incrementalization} is the key observation for deriving from $putdelta$ an incremental Datalog program $\partial put$ that computes $\Delta S$ more efficiently (Figure~\ref{fig:Incrementalizing}).
To derive $\partial put$, we first incrementalize the Datalog program $putdelta$ to obtain Datalog rules that compute $\Delta^2 S$ from the change $\Delta V$ on the view $V$. This step can be performed using classical incrementalization methods for Datalog \cite{Gupta1993}. We then use $\Delta^{2+} S$ in $\Delta^2 S$ as an instance of $\Delta S$ for applying to the source $S$.

\begin{figure}[t]
    \centering
    \begin{subfigure}[t]{0.23\textwidth}
        \begin{tikzpicture}[new set=import nodes,
            database/.style={
                cylinder,
                cylinder uses custom fill,
                shape border rotate=90,
                aspect=0.25,
                draw
            },
            setsize/.style={minimum width=0.6cm,minimum height=0.4cm}]
            \begin{scope}[nodes={set=import nodes}] 
            \node [draw,database, inner xsep=8-2pt]  (source) at (2,0) {$S$};
            \node [draw,rectangle,setsize]  (view) at (4-0.4,0+0.1) {$V'$};
            \node [draw,rectangle, database,  inner xsep=4.5-3pt]  (delta) at (0.6,-1+0.2) {$\Delta S'$};
            \node [draw,rectangle,rounded corners]  (putdelta) at (2 ,-1+0.2) {$putdelta$};
            \node [draw,dash dot,rectangle,rounded corners, minimum width=1.7cm,minimum height=1.6cm]  (container) at (2.1 ,-0.3) {};
            \end{scope}
            \graph [edge quotes={inner sep=1pt, auto}, 
            grow down sep = 0.7cm, branch right sep= 1cm, nodes={draw,align=center}, edges=-stealth]  {
                (import nodes);
                view -> [bend left = 35] putdelta;
                source -> putdelta;
                putdelta -> delta;
            };
        \end{tikzpicture}
        \caption{Original $putdelta$}
        \label{fig:bx_inc_1}
    \end{subfigure}
    \begin{subfigure}[t]{0.23\textwidth}
        \begin{tikzpicture}[new set=import nodes,
            database/.style={
                cylinder,
                cylinder uses custom fill,
                shape border rotate=90,
                aspect=0.25,
                draw
            },
            setsize/.style={minimum width=0.6cm,minimum height=0.4cm}]
            \begin{scope}[nodes={set=import nodes}] 
            \node [draw,database, inner xsep=8-2pt]  (source) at (2,0) {$S$};
            \node [draw,rectangle,setsize]  (view) at (4-0.4-0.2,0+0.2) {$V$};
            \node [draw,rectangle,setsize,inner xsep=1.5pt]  (deltaview) at (4-0.4-0.2,-1+0.2) {$\Delta V$};
            \node [draw,rectangle, database,  inner xsep=4.5-3pt]  (delta) at (0.8,-1+0.2) {$\Delta S'$};
            \node [draw,rectangle,rounded corners]  (putdelta) at (2 ,-1+0.2) {$\partial put$};
            \node [draw,dash dot,rectangle,rounded corners, minimum width=1.3cm,minimum height=1.6cm]  (container) at (2.1 ,-0.3) {};
            \end{scope}
            \graph [edge quotes={inner sep=1pt, auto}, 
            grow down sep = 0.7cm, branch right sep= 1cm, nodes={draw,align=center}, edges=-stealth]  {
                (import nodes);
                view -> [bend left = 36] putdelta;
                deltaview -> putdelta;
                source -> putdelta;
                putdelta -> delta;
            };
        \end{tikzpicture}
        \caption{Incremental $\partial put$}
        \label{fig:bx_inc_2}
    \end{subfigure}
    \caption{Incrementalization of $putdelta$.}
    \label{fig:Incrementalizing}
\end{figure}

\begin{example}[Intuition]
    \label{exp:incrementalization}
    Given a source relation $R$ of arity 2 and a view relation $V$ defined by a selection on $R$:
    $v(X,Y) \ruleeq r(X,Y), Y>2.$
    Consider the following update strategy with a constraint that updates on $V$ must satisfy the selection condition $Y>2$:
    \[\begin{array}{rl}
            +r(X,Y) &\ruleeq v(X,Y), \neg r(X,Y).\\
            m(X,Y) &\ruleeq r(X,Y), Y>2. \\
            -r(X,Y) &\ruleeq m(X,Y), \neg v(X,Y).
    \end{array}\]
Let $\deltaAdd{V}$/$\deltaDel{V}$ be the set of insertions/deletions into/from the view $V$.
We use two predicates $+v$ and $-v$ for $\deltaAdd{V}$ and $\deltaDel{V}$, respectively.
To generate delta rules for computing changes of $\pm r$ when the view is changed by $\deltaAdd{V}$ and $\deltaDel{V}$, we adopt the incremental view maintenance techniques introduced in \cite{Gupta1993} but in a way that derives rules for computing the insertion set and deletion set for $\pm r$ separately.
When $\deltaAdd{V}$ and $\deltaDel{V}$ are disjoint, by applying distribution laws for the first Datalog rule, we derive two rules that define the changes to $\deltaAdd{R}$, a set of insertions $\Delta^+(\deltaAdd{R})$ and a set of deletions $\Delta^-(\deltaAdd{R})$, as follows:
\[\begin{array}{rl}
    +(+r)(X,Y) &\ruleeq +v(X,Y), \neg r(X,Y).\\
    -(+r)(X,Y) &\ruleeq -v(X,Y), \neg r(X,Y).
\end{array}\]
where predicates $+(+r)$ and $-(+r)$ correspond to $\Delta^+(\deltaAdd{R})$ and $\Delta^-(\deltaAdd{R})$, respectively.
Similarly, we derive rules defining changes to $\deltaDel{R}$, $\Delta^+(\deltaDel{R})$ and $\Delta^-(\deltaDel{R})$, as follows:
\[\begin{array}{rl}
    +(-r)(X,Y) &\ruleeq m(X,Y), -v(X,Y).\\
    -(-r)(X,Y) &\ruleeq m(X,Y), +v(X,Y).
\end{array}\]

Finally, as stated in Proposition~\ref{prop:putdelta_incrementalization}, $\Delta^{2+} S$ and $\Delta S'$ are interchangeable. Since $\Delta^{2+} S$ contains $\Delta^+(\deltaDel{R})$ and $\Delta^+(\deltaAdd{R})$, we can substitute $-r$ and $+r$ for the predicates $+(-r)$ and $+(+r)$, respectively, to derive the program $\partial put$ as follows:
\[\begin{array}{rl}
    m(X,Y) &\ruleeq r(X,Y), Y>2.\\
    +r(X,Y) &\ruleeq +v(X,Y), \neg r(X,Y). \\
    -r(X,Y) &\ruleeq m(X,Y), -v(X,Y).
\end{array}\]
Because $\deltaAdd{V}$ and $\deltaDel{V}$ are generally much smaller than the view $V$, the computation of $\Delta^+(\Delta^\pm_R)$ in the derived rules is more efficient than the computation of $\Delta^\pm_{R}$ in $putdelta$.
\end{example}

The incrementalization algorithm that transforms a putback program $putdelta$ in nonrecursive Datalog with negation and built-in predicates into an equivalent program $\partial put$ is as follows:
\begin{itemize}
	 \item \textit{Step 1}: We first stratify the Datalog program $putdelta$. Let $v, l_1,\ldots,l_m,\pm r_1, \ldots \pm r_n $ be a stratification \cite{Ceri1989} of the Datalog program $putdelta$, which is an order for the evaluation of IDB relations of $putdelta$. 
	 \item \textit{Step 2}: To derive rules for computing changes of each IDB relation $l_1,\ldots,l_m$ when the view $v$ is changed, we adopt the incremental view maintenance techniques introduced in \cite{Gupta1993} but in a way that derives rules for computing each insertion set ($+l_i$) and deletion set ($-l_i$) on IDB relation $l_i$ ($i\in [1,m]$) separately (see the details in Ex~\ref{sec:incrementalization_rewite_rules}).
	 \item \textit{Step 3}: Similar to \textit{Step 2}, we continue to derive rules for computing changes of each IDB relation $\pm r_1, \ldots \pm r_n $ but only for insertions to these relations. The purpose is to generate rules for computing $\Delta^{2+} S$, i.e., computing the relations $+(\pm r_1), \ldots +(\pm r_n) $.
	 \item \textit{Step 4}: We finally substitute $\pm r_i$ for $+(\pm r_i)$ ($i\in [1,n]$) in the derived rules to obtain the incremental program $\partial put$. This is because $\Delta^{2+} S$ can be used as an instance of $\Delta S'$ to apply to the source database $S$ (Proposition~\ref{prop:putdelta_incrementalization}).
\end{itemize}

As shown in Example~\ref{exp:incrementalization}, for a LVGN-Datalog program in which the view predicate $v$ occurs at most once in each delta rule, the transformation from a putback program $putdelta$ to an incremental one $\partial put$ is simplified to substituting $+v$ for positive predicate $v$ and  $-v$ for negative predicate $\neg v$.
\begin{lemma}
	\label{lem:LVGN-iincrementalization}
    Every valid LVGN-Datalog putback program $putdelta$ for a view relation $V$ is equivalent to an incremental program that is derived from $putdelta$ by substituting delta predicates of the view, $+v$ and $-v$, for positive and negative predicates of the view, $v$ and $\neg v$, respectively.
\end{lemma}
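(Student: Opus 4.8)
The plan is to run the four-step incrementalization of Section~\ref{sec:incrementalization} on $putdelta$, show that under the LVGN-Datalog restrictions each step collapses to the claimed substitution, and then verify that the resulting program is equivalent by invoking Proposition~\ref{prop:putdelta_incrementalization}. First, since $putdelta$ is valid, the \ref{eq:getput} property holds, so for every source $S$, writing $V=get(S)$ for the view definition $get$ that $put$ determines (Theorem~\ref{thm:unique_query}), the delta $\Delta S=putdelta(S,V)$ is non-contradictory and satisfies $S\oplus\Delta S=S$ --- exactly the steady-state hypothesis of Proposition~\ref{prop:putdelta_incrementalization}. \textit{Step~1} yields a stratification $v,l_1,\ldots,l_m,\pm r_1,\ldots,\pm r_n$ (it exists because LVGN-Datalog is nonrecursive), with the delta predicates last. \textit{Step~2} is then trivial: by the linear-view restriction (Definition~\ref{def:linear_view}) $v$ occurs only in rules whose head is a delta predicate (or $\bot$), and since the delta predicates come last, no $l_i$ depends on any $\pm r_j$; hence no $l_i$ depends, even transitively, on $v$, and --- $S$ being fixed --- no $l_i$ changes under the view update. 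So the rules defining the $l_i$ are kept verbatim and the only nontrivial delta rules to generate are those for $\pm r_1,\ldots,\pm r_n$.

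The core of the argument is \textit{Step~3}. Let $\rho:\ \pm r_i(\vec X)\ruleeq B_1,\ldots,B_k$ be a delta rule. By the linear-view restriction at most one $B_j$ mentions $v$, that literal has no anonymous variable, and every other body literal is over a source relation or an $l_i$, none of which changes. Hence the standard delta-rule rewriting \cite{Gupta1993} --- which in general emits one rule per body literal --- here emits a single nontrivial rule, obtained from $\rho$ by replacing its unique view literal with the set of its \emph{newly true} instances: if that literal is the positive atom $v(\vec t)$, the newly true instances are exactly $\deltaAdd{V}$ (deletions from $V$ only remove tuples from $\pm r_i$, and Step~3 keeps only insertions into $\pm r_i$), so we substitute $+v(\vec t)$; if it is the negated atom $\neg v(\vec t)$, the newly true instances of $\neg v$ are $\deltaDel{V}\setminus\deltaAdd{V}=\deltaDel{V}$, so we substitute $-v(\vec t)$. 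Here we use that $\deltaAdd{V}$ and $\deltaDel{V}$ are disjoint and that $\deltaDel{V}\subseteq V$ while $\deltaAdd{V}\cap V=\emptyset$, so ``newly true'' is exactly membership in $+v$ resp.\ $-v$ with no further intersection with $V$ or its complement; negation-guardedness of $\rho$ keeps the substituted rule safe. \textit{Step~4} renames $+(\pm r_i)$ back to $\pm r_i$, so altogether $\partial put$ is precisely $putdelta$ with every positive occurrence of $v$ replaced by $+v$ and every $\neg v$ replaced by $-v$ (and in particular $\partial put$ no longer mentions $v$). A delta rule with no view literal is untouched by the substitution; its head tuples lie in the steady-state $\Delta S$ and, as noted next, act as no-ops, so this is harmless.

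It remains to check equivalence. Let $\Delta V$ be a non-contradictory update of a steady-state view $V$ and $V'=V\oplus\Delta V$. A short case analysis on the unique view literal of each delta rule, using $\deltaAdd{V}\subseteq V'$ and $\deltaDel{V}\cap V'=\emptyset$, gives $putdelta(S,V')\setminus putdelta(S,V)\subseteq\partial put(S,\Delta V)\subseteq putdelta(S,V')$. The left-hand set is $\Delta^{2+} S$, and $\partial put(S,\Delta V)$ exceeds it only by atoms already in $putdelta(S,V)=\Delta S$; being a subset of $putdelta(S,V')$, it is also non-contradictory. Since $put$ is valid, every insertion in $\Delta S$ is already present in $S$ and every deletion in $\Delta S$ is already absent from $S$, so those surplus atoms apply as no-ops and $S\oplus\partial put(S,\Delta V)=S\oplus\Delta^{2+} S$, which by Proposition~\ref{prop:putdelta_incrementalization} equals $S\oplus putdelta(S,V')=put(S,V')$. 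Thus running $\partial put$ on $(S,\Delta V)$ has the same effect on the source as running $putdelta$ from scratch on $(S,V')$, the desired equivalence.

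\textbf{The main obstacle} is \textit{Step~3}: showing the full delta-rule expansion collapses to one substituted rule. This hinges on combining the linear-view restriction (a single, projection-free, self-join-free view occurrence per delta rule) with the Step~2 observation that the auxiliary IDBs are view-independent, so the unique view literal is the only body literal that can change. A second subtlety is relating the ``unsubtracted'' insertion delta that the substitution computes to the exact $\Delta^{2+} S$: one must show its overapproximation consists only of steady-state delta atoms, which requires both the \ref{eq:getput} property and Proposition~\ref{prop:putdelta_incrementalization} to argue those atoms are genuinely harmless.
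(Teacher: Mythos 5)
Your proposal is correct and follows essentially the same route as the paper's proof: both observe that the linear-view restriction confines the view predicate to single occurrences in delta rules (so the auxiliary IDB relations are view-independent), both apply the Gupta-et-al.\ delta-rule rewriting to the positive and negative view cases to obtain $+(\pm r_i)$ and $-(\pm r_i)$, and both invoke Proposition~\ref{prop:putdelta_incrementalization} to discard the deletion rules and conclude that the substitution $v\mapsto{+v}$, $\neg v\mapsto{-v}$ yields the incremental program. Your closing containment argument and the explicit treatment of delta rules with no view literal are somewhat more detailed than the paper's, but they elaborate the same idea rather than changing the approach.
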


\section{Implementation and Evaluation}
\label{sec:experiment}

\subsection{Implementation}
We have implemented a prototype for our proposed validation and incrementalization algorithms in Ocaml (The full source code is available at \url{https://github.com/dangtv/BIRDS}). For the case in which the view update strategy is not in LVGN-Datalog, our framework feeds each check in our validation algorithm to the Z3 automated theorem prover \cite{z3}. As mentioned in Subsection~\ref{subsec:sound-completeness}, the validation algorithm may not terminate, though it is sound for checking the pair of view definition and update strategy program. We have also integrated our framework with PostgreSQL~\cite{postgresql}, a commercial RDBMS, by translating both the view definition and update strategy in Datalog to equivalent SQL and trigger programs.

Our translation is conducted because nonrecursive Datalog queries can be expressed in SQL \cite{alicebook}. We use a similar approach to the translation from Datalog to SQL used in \cite{Herrmann2017}.
The SQL view definition is of the form \texttt{CREATE VIEW <view-name> AS <sql-defining-query>}.
Meanwhile, the implementation for the update strategy is achieved by generating a SQL program that defines triggers \cite{Ramakrishnan1999} and associated trigger procedures on the view.
These trigger procedures are automatically invoked in response to view update requests, which can be
any SQL statements of \texttt{INSERT}/\texttt{DELETE}/\texttt{UPDATE}. Our framework also supports combining multiple SQL statements into one transaction to obtain a larger modification request on the view.
When there are view update requests, the triggers on the view perform the following steps: (1) handling update requests to the view to derive deltas of the view (see Ex~\ref{sec:deriveing_view_deltas}), (2) checking the constraints if applying the deltas from step (1) to the view, and (3) computing each delta relation and applying them to the source. The main trigger is as follows:
\begin{SQL}[numbers=none,xleftmargin=0em, framexleftmargin=0em]
CREATE TRIGGER <update-strategy> 
INSTEAD OF INSERT OR UPDATE OR DELETE ON <view $V$>
BEGIN
  -- Deriving changes on the view
  Derive $\deltaDel{V}$ and $\deltaAdd{V}$ from view update requests 
  -- Checking constraints
  FOR EACH <constraint $\forall \vec{X}, \Phi_i(\vec{X}) \to \bot$> DO
    IF EXISTS (<SQL-query-of $\Phi_i(\vec{X})$>) THEN 
      RAISE "Invalid view updates";
    END IF;
  END FOR;
  -- Calculating and applying delta relations
  FOR EACH <source relation $R_i$> DO
    CREATE TEMP TABLE $\deltaAdd{R_i}$ AS <sql-query-of $+r_i$>;
    CREATE TEMP TABLE $\deltaDel{R_i}$ AS <sql-query-of $-r_i$>;
    DELETE FROM $R_i$ WHERE ROW $(R_i)$ IN $\deltaDel{R_i}$; 
    INSERT INTO $R_i$ SELECT * FROM $\deltaAdd{R_i}$;
  END FOR;
END;
\end{SQL}


\begin{table*}[ht]
	\centering
	\caption{Validation results. S, P, SJ, IJ, LJ, RJ, FJ, U, D and A stand for selection, projection, semi join, inner join, left join, right join, full join, union, set difference and aggregation, respectively. PK, FK, ID, and C stand for primary key, foreign key, inclusion dependency, and domain constraint, respectively.}
	\small
	\begin{tabular}{|l|r|l|l| @{\hskip2pt}c@{\hskip2pt} |l| @{\hskip2pt}c@{\hskip2pt} |@{\hskip2pt}c@{\hskip2pt}|l|@{\hskip2pt}c@{\hskip2pt}|}
		\hline
		& ID & View            & \thead{Operator \\ in view \\ definition} & \thead{Program \\size \\(LOC)} & Constraint     & \thead{LVGN-\\Datalog} & \thead{NR-\\Datalog$^{\neg, =, <}$} & \thead{Validation \\Time (s)} & \thead{Compiled \\ SQL \\(Byte)} \\
		\hline \hline \multirow{23}{*}{\rotatebox[origin=c]{90}{Literature}} & 1  & car\_master     & P                                         & 4                              &            & \yes                   & \yes                 & 1.74               & 8447                          \\
		\cline{2-10}                                                         & 2  & goodstudents    & P,S                                       & 5                              & C          & \yes                   & \yes                 & 1.86               & 9182                          \\
		\cline{2-10}                                                         & 3  & luxuryitems     & S                                         & 5                              & C          & \yes                   & \yes                 & 1.77               & 8938                          \\
		\cline{2-10}                                                         & 4  & usa\_city       & P,S                                       & 5                              & C          & \yes                   & \yes                 & 1.77               & 9059                          \\
		\cline{2-10}                                                         & 5  & ced             & D                                         & 6                              &            & \yes                   & \yes                 & 1.72               & 8847                          \\
		\cline{2-10}                                                         & 6  & residents1962   & S                                         & 6                              & C          & \yes                   & \yes                 & 1.73               & 9699                          \\
		\cline{2-10}                                                         & 7  & employees       & SJ,P                                      & 6                              & ID         & \yes                   & \yes                 & 1.76               & 9358                          \\
		\cline{2-10}                                                         & 8  & researchers     & SJ,S,P                                    & 6                              &            & \yes                   & \yes                 & 1.79               & 9058                          \\
		\cline{2-10}                                                         & 9  & retired         & SJ,P,D                                    & 6                              &            & \yes                   & \yes                 & 1.76               & 9048                          \\
		\cline{2-10}                                                         & 10 & paramountmovies & P,S                                       & 7                              &            & \yes                   & \yes                 & 1.81               & 9721                          \\
		\cline{2-10}                                                         & 11 & officeinfo      & P                                         & 7                              &            & \yes                   & \yes                 & 1.8                & 9963                          \\
		\cline{2-10}                                                         & 12 & vw\_brands      & U,P                                       & 8                              & C          & \yes                   & \yes                 & 1.78               & 10932                         \\
		\cline{2-10}                                                         & 13 & tracks2         & P                                         & 8                              &            & \yes                   & \yes                 & 1.81               & 9824                          \\
		\cline{2-10}                                                         & 14 & residents       & U                                         & 10                             &            & \yes                   & \yes                 & 1.77               & 13504                         \\
		\cline{2-10}                                                         & 15 & tracks3         & S                                         & 11                             & C          & \yes                   & \yes                 & 1.88               & 14430                         \\
		\cline{2-10}                                                         & 16 & tracks1         & IJ                                        & 12                             & PK         & \no                    & \yes                 & 1.92               & 95606                         \\
		\cline{2-10}                                                         & 17 & bstudents       & IJ,P,S                                    & 13                             & PK         & \no                    & \yes                 & 2.13               & 22431                         \\
		\cline{2-10}                                                         & 18 & all\_cars       & IJ                                        & 13                             & PK, FK     & \no                    & \yes                 & 1.89               & 25013                         \\
		\cline{2-10}                                                         & 19 & measurement     & U                                         & 13                             & C, ID        & \yes                   & \yes                 & 1.78               & 12624                         \\
		\cline{2-10}                                                         & 20 & newpc           & IJ,P,S                                    & 15                             & JD         & \no                    & \yes                 & 2.06               & 44665                         \\
		\cline{2-10}                                                         & 21 & activestudents  & IJ,P,S                                    & 19                             & PK, JD     & \no                    & \yes                 & 2.19               & 31766                         \\
		\cline{2-10}                                                         & 22 & vw\_customers   & IJ,P                                      & 19                             & PK, FK, JD & \no                    & \yes                 & 2.92               & 26286                         \\
		\cline{2-10}                                                         & 23 & emp\_view       & IJ,P,A                                    & -                              &            & \no                    & \no                  & -                  & -                
		\\
		\hline \hline \multirow{9}{*}{\rotatebox[origin=c]{90}{Q\&A sites}} & 24 & ukaz\_lok         & S                                         & 6                              & C          & \yes                   & \yes                 & 1.79               & 10104                         \\
		\cline{2-10}                                                        & 25 & message           & U                                         & 8                              & C          & \yes                   & \yes                 & 1.8                & 15770                         \\
		\cline{2-10}                                                        & 26 & outstanding\_task & P, SJ                                     & 10                             & ID, C      & \yes                   & \yes                 & 10.07              & 18253                         \\
		\cline{2-10}                                                        & 27 & poi\_view         & P,IJ                                      & 12                             & PK         & \no                    & \yes                 & 2.1                & 24741                         \\
		\cline{2-10}                                                        & 28 & phonelist         & U                                         & 14                             & C          & \yes                   & \yes                 & 1.94               & 16553                         \\
		\cline{2-10}                                                        & 29 & products          & LJ                                        & 16                             & PK, FK, C  & \no                    & \yes                 & 3.6                & 58394                         \\
		\cline{2-10}                                                        & 30 & koncerty          & IJ                                        & 17                             & PK         & \no                    & \yes                 & 1.93               & 29147                         \\
		\cline{2-10}                                                        & 31 & purchaseview      & P,IJ                                      & 19                             & PK, FK, JD & \no                    & \yes                 & 1.89               & 27262                         \\
		\cline{2-10}                                                        & 32 & vehicle\_view     & P,IJ                                      & 20                             & PK, FK, JD & \no                    & \yes                 & 2.03               & 25226                              
		\\\hline
	\end{tabular}
	\label{table:compilation}     
\end{table*}
\subsection{Evaluation}
To evaluate our approach, we conduct two experiments. The goal of the first experiment is to investigate the practical relevance of our proposed method in describing view update strategies and to evaluate the performance of our framework in checking these described update strategies.
In the second experiment, we study the efficiency of our incrementalization algorithm when implementing updatable views in a commercial RDBMS.


\subsubsection{Benchmarks} 
To perform the evaluation, we collect benchmarks of views and update strategies from two different sources:
\begin{itemize}
    \item View update examples and exercises collected from the literature: textbooks \cite{Ramakrishnan1999, databasecompletebook}, online tutorials \cite{mysqltutorial,oracletutorial,postgresqltutorial,postgresqldocs,sqlservertutorial} (triggers, sharded tables, and so forth), papers \cite{Bohannon2006, Keller1985} and our case study in Section~\ref{sec:LVGN-Datalog}.
    \item View update issues asked on online question \& answer sites: Database Administrators Stack Exchange \cite{dba} and Stack Overflow Public Q\&A \cite{sof}.
\end{itemize}
All experiments on these benchmarks are run using Ubuntu server LTS 16.04 and PostgreSQL 9.6 on a computer with 2 CPUs and 4 GB RAM.

\subsubsection{Results} 
As mentioned previously, we perform the first experiment to investigate which users' update strategies are expressible and validatable by our approach. In our benchmarks, the collected view update strategies are either implemented in SQL triggers or naturally described by users/systems. We manually use nonrecursive Datalog with negation and built-in predicates (NR-Datalog$^{\neg,=,<}$) to specify these update strategies as $putdelta$ programs\footnote{For the update strategies implemented in SQL triggers, rewriting them into $putdelta$ programs can be automated.} and input them with the expected view definition to our framework.
Table~\ref{table:compilation} shows the validation results. In terms of expressiveness, NR-Datalog$^{\neg,=,<}$ can be used to formalize most of the view update strategies with many common integrity constraints except one update strategy for the aggregation view \texttt{emp\_view} (\#23). This is because we have not considered aggregation in Datalog. Interestingly, LVGN-Datalog can also express many update strategies for many views defined by selection, projection, union, set difference and semi join. Inner join views such as \texttt{all\_car} (\#18) are not expressible in LVGN-Datalog because the definition of inner join is not in guarded negation Datalog\footnote{An example of inner join is $v(X,Y,Z) \ruleeq s_1(X,Y),s_2(Y,Z)$, which is not a guarded negation Datalog rule.}.
LVGN-Datalog is also limited in expressing primary key (functional dependency) or join dependency because these dependencies are not negation guarded\footnote{Primary key $A$ on relation $r(A,B)$ is expressed by the rule $\bot \ruleeq r(A,B_1), r(A,B_2), \neg B_1 = B_2$, where the equality $B_1 = B_2$ is not guarded.}.
Even for the cases that LVGN-Datalog cannot express, thus far, all the well-behavedness checks in our experiment terminate after an acceptable time (approximately a few seconds). The validation time almost increases with the number of rules in the Datalog programs (program size), but this time also depends on the complexity of the source and view schema. For example, the update strategy of \texttt{outstanding\_task} (\#26) has the longest validation time because this view and its source relations have many more attributes than other views. Similarly, the size of the generated SQL program is larger for the more complex Datalog update strategies.


\begin{figure}[t]
    \centering
    \begin{subfigure}[t]{0.23\textwidth}
        \includegraphics[trim={0 0 0 0.3cm },clip, scale=0.54]{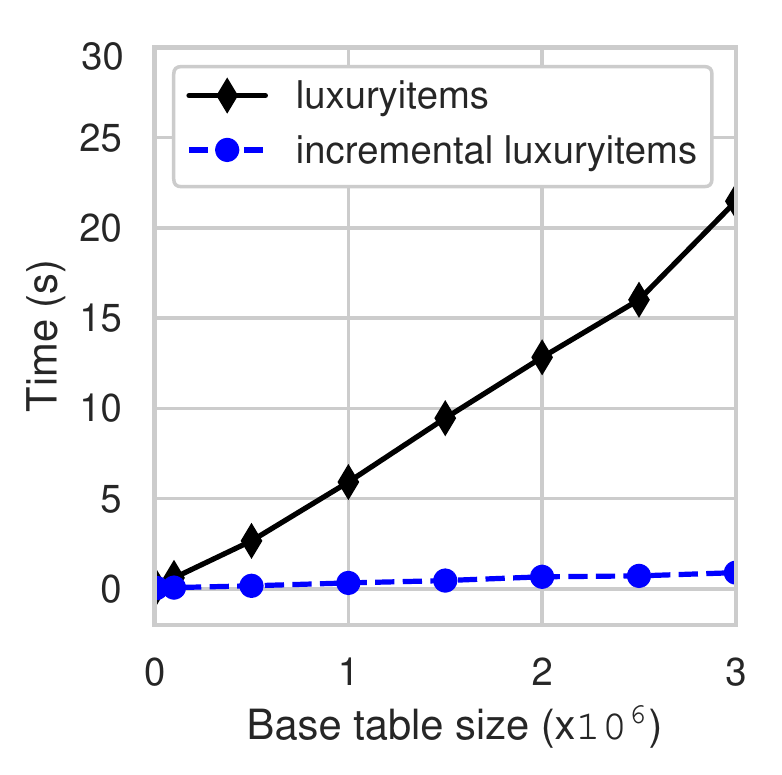}
        \caption{{luxuryitems}}
        \label{fig:luxuryitems_1tuple}
    \end{subfigure}
    \begin{subfigure}[t]{0.23\textwidth}
        \includegraphics[trim={0 0 0 0.3cm },clip, scale=0.54]{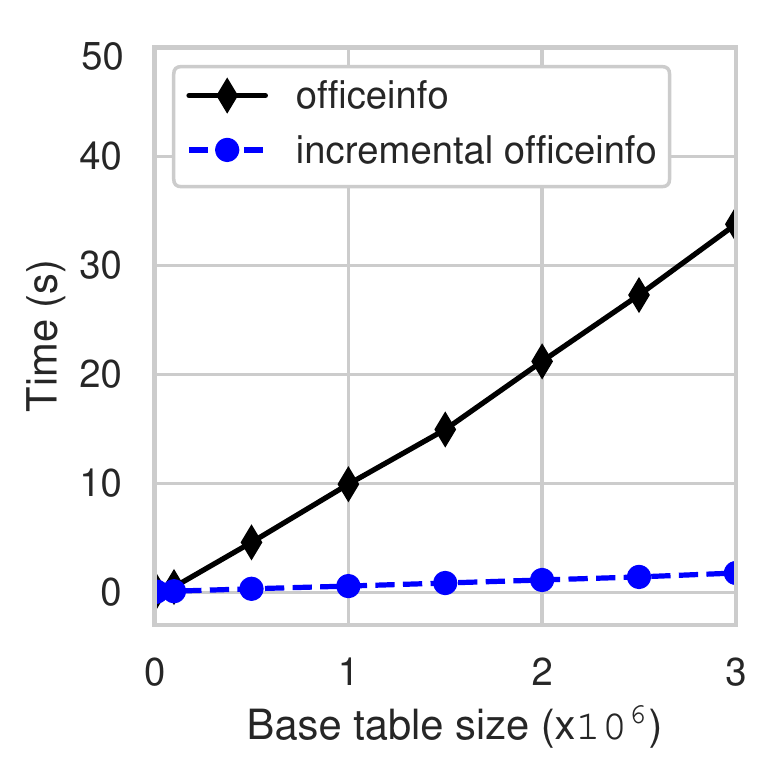}
        \caption{{officeinfo}}
        \label{fig:officeinfo_1tuple}
    \end{subfigure}
    \begin{subfigure}[t]{0.23\textwidth}
        \includegraphics[trim={0 0 0 0.3cm },clip, scale=0.54]{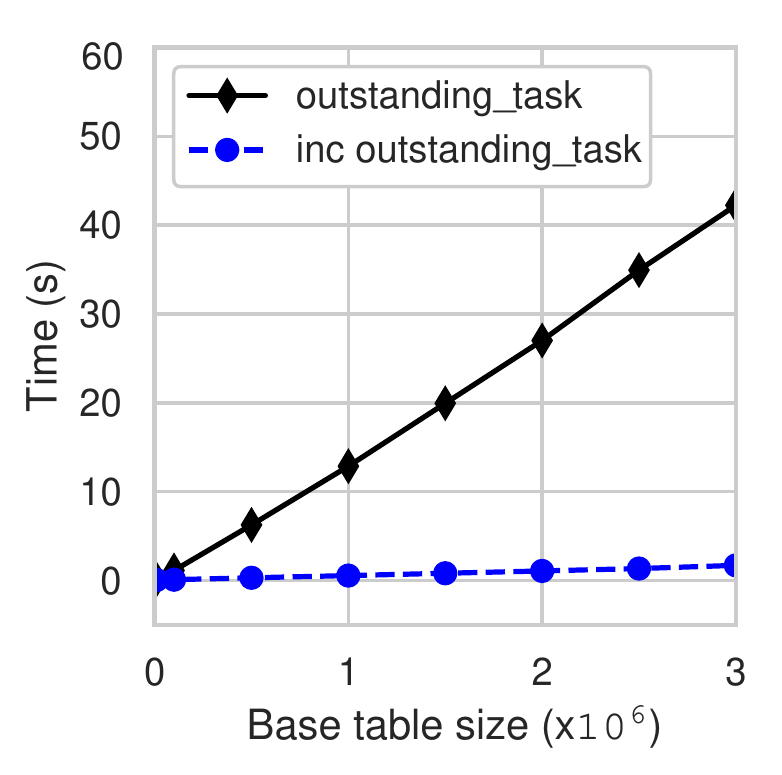}
        \caption{{outstanding\_task}}
        \label{fig:outstanding_task_1tuple}
    \end{subfigure}
    \begin{subfigure}[t]{0.23\textwidth}
        \includegraphics[trim={0 0 0 0.3cm },clip, scale=0.54]{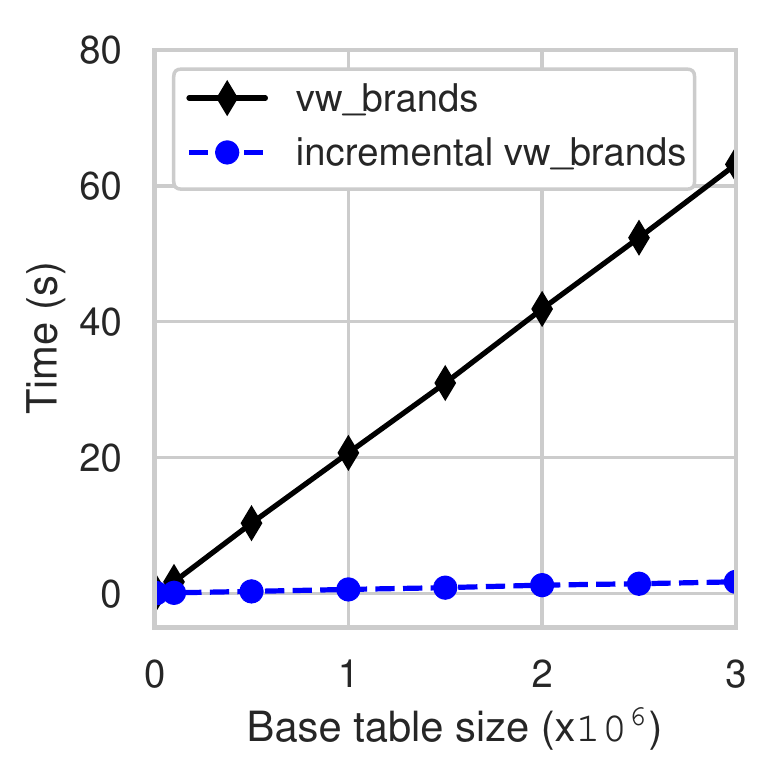}
        \caption{vw\_brands}
        \label{fig:vw_brands_1tuple}
    \end{subfigure}
    \caption{View updating time.}
    \label{fig:viewupdate_time}
\end{figure}

We perform the second experiment to evaluate the efficiency of the incrementalization algorithm in optimizing view update strategies. Specifically, we compare the performance of the incrementalized update strategy with the original one when they are translated into SQL trigger programs and run in the PostgreSQL database.
For this experiment, we select some typical views in our benchmarks including: \texttt{luxuryitems} (Selection), \texttt{officeinfo} (Projection), \texttt{outstanding\_task} (Join) and \texttt{vw\_brands} (Union).
For each view, we randomly generate data for the base tables and measure the running time of the view update strategy against the base table size (number of tuples) when there is an SQL statement that attempts to modify the view.
Figure~\ref{fig:viewupdate_time} shows the comparison between the original view update strategies (black lines) and the incrementalized ones (blue lines).
It is clear that as the size of the base tables increases, our incrementalization significantly reduces the running time to a constant value, thereby improving the performance of the view update strategies.

\section{Related work}
\label{sec:relatedwork}

The view update problem is a classical problem that has a long history in database research \cite{Fagin1983, Dayal1978, Dayal1982, Bancilhon1981, Keller1986, Medeiros1986, Keller1985, Kotidis2006, Herrmann2017, Buneman2002, Kimelfeld2012, Masunaga1984, Masunaga2017, Masunaga2018, Larson1991, Langerak1990}. It was realized very early that a database update that reflects a view update may not always exist, and even if it does exist, it may not be unique \cite{Dayal1978, Dayal1982}.
To solve the ambiguity of translating view updates to updates on base relations, the concept of view complement is proposed to determine the unique update translation of a view \cite{Bancilhon1981, Keller1984, Lechtenborger2003, Langerak1990}. Keller \cite{Keller1986} enumerates all view update translations and chooses the one through interaction with database administrators, thereby solving the ambiguity problem.
Some other researchers allow users to choose the one through an interaction with the user at view definition time \cite{Keller1986, Larson1991}. Some other approaches restrict the syntax for defining views \cite{Dayal1982} that allow for unambiguous update propagation.
Recently, intention-based approaches have been proposed to find relevant update policies for several types of views \cite{Masunaga1984, Masunaga2017, Masunaga2018}.
In another aspect, because some updates on views are not translatable, some works permit side effects of the view update translator \cite{Medeiros1986} or restrict the kind of updates that can be performed on a view \cite{Keller1985}. Some other works use auxiliary tables to store the updates, which cannot be applied to the underlying database \cite{Kotidis2006, Herrmann2017}.
The authors of \cite{Buneman2002, Kimelfeld2012} studied approximation algorithms to minimize the side effects for propagating deletion from the view to the source database.
However, these existing approaches can only solve a very restricted class of view updates.

By generalizing view update as a synchronization problem between two data structures, considerable research effort has been devoted to bidirectional programming \cite{czarnecki2009} for this problem not only in relational databases \cite{Bohannon2006, Horn2018} but also for other data types, such tree \cite{Foster2007, Matsuda2007}, graph \cite{Hidaka2010} or string data \cite{Barbosa2010}.
The prior work by Bohannon et al. \cite{Bohannon2006} employs bidirectional transformation for view update in relational databases. The authors propose a bidirectional language, called relational lenses, by enriching the SQL expression for defining views of projection, selection, and join. The language guarantees that every expression can be interpreted forwardly as a view definition and backwardly as an update strategy such that these backward and forward transformations are well-behaved. A recent work \cite{Horn2018} has shown that incrementalization is necessary for relational lenses to make this language practical in RDBMSs. However, this language is less expressive than general relational algebra; hence, not every updatable view can be written. Moreover, relational lenses still limit programmers from control over the update strategy.

Melnik et al. \cite{Melnik2008} propose a novel declarative mapping language for specifying the relationship between application entity views and relational databases, which is compiled into bidirectional views for the view update translation.
The user-specified mappings are validated to guarantee the generated bidirectional views to roundtrip. 
Furthermore, the authors introduce the concept of merge views that together with the bidirectional views contribute to determining complete update strategies, thereby solving the ambiguity of view updates. Though, merge views are exclusively used and validating the behavior of this operation with respect to the roundtripping criterion is not explicitly considered.
In comparison to \cite{Melnik2008}, where the proposed mapping language is restricted to selection-projection views (no joins), our approach focuses on a specification language, which is in the lower level but more expressive that more view update strategies can be expressed. Moreover, the full behaviour of the specified view update strategies is validated by our approach.

Our work was greatly inspired by the putback-based approach in bidirectional programming \cite{Hu2014, Pacheco2014a, Pacheco2014b, Fischer2015b, Ko2016, KoHu2018}. The key observation in this approach is that thanks to well-behavedness, putback transformation uniquely determines the get one.
In contrast to the other approaches, the putback-based approach provides languages that allow programmers to write their intended update strategies more freely and derive the $get$ behavior from their putback program. A typical language of this putback-based approach is BiGUL \cite{Ko2016, KoHu2018}, which supports programming putback functions declaratively while automatically deriving the corresponding unique forward transformation. Based on BiGUL, Zan et al. \cite{Zan2016} design a putback-based Haskell library for bidirectional transformations on relations. However, this language is designed for Haskell data structures; hence, it cannot run directly in database environments. The transformation from tables in relational databases to data structures in Haskell would reduce the performance of view updates. In contrast, we propose adopting the Datalog language for implementing view update strategies at the logical level, which will be optimized and translated to SQL statements to run efficiently inside an SQL database system.

\section{Conclusions}
\label{sec:conclusion}
In this paper, we have introduced a novel approach for relational view update in which programmers are given full control over deciding and implementing their view update strategies.
By using nonrecursive Datalog with extensions as the language for describing view update strategies, we propose algorithms for validating user-written update strategies and optimizing update strategies before compiling them into SQL scripts to run effectively in RDBMSs. The experimental results show the performance of our framework in terms of both validation time and running time.



\bibliographystyle{abbrv}
\bibliography{references}  

\begin{thebibliography}{10}

\bibitem{dba}
{Database Administrators Stack Exchange}.
\newblock \url{https://dba.stackexchange.com}.

\bibitem{mysqltutorial}
{MySQL Tutorial}.
\newblock \url{http://www.mysqltutorial.org}.

\bibitem{oracletutorial}
{Oracle Tutorial}.
\newblock \url{https://www.oracletutorial.com}.

\bibitem{postgresql}
{PostgreSQL}.
\newblock \url{https://www.postgresql.org}.

\bibitem{postgresqldocs}
{PostgreSQL 9.6.15 Documentation}.
\newblock \url{https://www.postgresql.org/docs/9.6/}.

\bibitem{postgresqltutorial}
{PostgreSQL Tutorial}.
\newblock \url{http://www.postgresqltutorial.com}.

\bibitem{sof}
{Questions - Stack Overflow}.
\newblock \url{https://stackoverflow.com/questions}.

\bibitem{sqlservertutorial}
{SQL Server Tutorial}.
\newblock \url{http://www.sqlservertutorial.net}.

\bibitem{z3}
{Z3: Theorem Prover}.
\newblock \url{https://z3prover.github.io}.

\bibitem{alicebook}
S.~Abiteboul, R.~Hull, and V.~Vianu.
\newblock {\em Foundations of Databases}.
\newblock Addison-Wesley, 1995.

\bibitem{Bancilhon1981}
F.~Bancilhon and N.~Spyratos.
\newblock Update semantics of relational views.
\newblock {\em ACM Trans. Database Syst.}, 6(4):557--575, Dec. 1981.

\bibitem{Barany2015}
V.~B\'{a}r\'{a}ny, B.~T. Cate, and L.~Segoufin.
\newblock Guarded negation.
\newblock {\em J. ACM}, 62(3):22:1--22:26, June 2015.

\bibitem{Barany2012}
V.~B{\'{a}}r{\'{a}}ny, B.~ten Cate, and M.~Otto.
\newblock Queries with guarded negation.
\newblock {\em {PVLDB}}, 5(11):1328--1339, 2012.

\bibitem{Barbosa2010}
D.~M. Barbosa, J.~Cretin, N.~Foster, M.~Greenberg, and B.~C. Pierce.
\newblock Matching lenses: Alignment and view update.
\newblock In {\em Proceedings of the 15th ACM SIGPLAN International Conference
  on Functional Programming}, pages 193--204, 2010.

\bibitem{Bohannon2006}
A.~Bohannon, B.~C. Pierce, and J.~A. Vaughan.
\newblock Relational lenses: A language for updatable views.
\newblock In {\em Proceedings of the Twenty-fifth ACM SIGMOD-SIGACT-SIGART
  Symposium on Principles of Database Systems}, pages 338--347, 2006.

\bibitem{Buneman2002}
P.~Buneman, S.~Khanna, and W.-C. Tan.
\newblock On propagation of deletions and annotations through views.
\newblock In {\em Proceedings of the Twenty-first ACM SIGMOD-SIGACT-SIGART
  Symposium on Principles of Database Systems}, pages 150--158, 2002.

\bibitem{Cali2012}
A.~Calì, G.~Gottlob, and T.~Lukasiewicz.
\newblock A general datalog-based framework for tractable query answering over
  ontologies.
\newblock {\em Journal of Web Semantics}, 14:57 -- 83, 2012.

\bibitem{Ceri1989}
S.~Ceri, G.~Gottlob, and L.~Tanca.
\newblock What you always wanted to know about datalog (and never dared to
  ask).
\newblock {\em IEEE Transactions on Knowledge and Data Engineering},
  1(1):146--166, March 1989.

\bibitem{czarnecki2009}
K.~Czarnecki, J.~N. Foster, Z.~Hu, R.~L{\"a}mmel, A.~Sch{\"u}rr, and J.~F.
  Terwilliger.
\newblock Bidirectional transformations: A cross-discipline perspective.
\newblock In {\em Theory and Practice of Model Transformations}, pages
  260--283. Springer Berlin Heidelberg, 2009.

\bibitem{Dayal1978}
U.~Dayal and P.~A. Bernstein.
\newblock On the updatability of relational views.
\newblock In {\em Proceedings of the Fourth International Conference on Very
  Large Data Bases}, pages 368--377, 1978.

\bibitem{Dayal1982}
U.~Dayal and P.~A. Bernstein.
\newblock On the correct translation of update operations on relational views.
\newblock {\em ACM Trans. Database Syst.}, 7(3):381--416, Sept. 1982.

\bibitem{Fagin1983}
R.~Fagin, J.~D. Ullman, and M.~Y. Vardi.
\newblock On the semantics of updates in databases.
\newblock In {\em Proceedings of the 2nd ACM SIGACT-SIGMOD Symposium on
  Principles of Database Systems}, pages 352--365, 1983.

\bibitem{Fischer2015}
S.~Fischer, Z.~Hu, and H.~Pacheco.
\newblock A clear picture of lens laws.
\newblock In {\em International Conference on Mathematics of Program
  Construction}, pages 215--223, 2015.

\bibitem{Fischer2015b}
S.~Fischer, Z.~Hu, and H.~Pacheco.
\newblock The essence of bidirectional programming.
\newblock {\em Science China Information Sciences}, 58(5):1--21, May 2015.

\bibitem{Foster2007}
J.~N. Foster, M.~B. Greenwald, J.~T. Moore, B.~C. Pierce, and A.~Schmitt.
\newblock Combinators for bidirectional tree transformations: A linguistic
  approach to the view-update problem.
\newblock {\em ACM Trans. Program. Lang. Syst.}, 29(3), May 2007.

\bibitem{databasecompletebook}
H.~Garcia-Molina, J.~D. Ullman, and J.~Widom.
\newblock {\em Database Systems: The Complete Book}.
\newblock Prentice Hall Press, 2 edition, 2008.

\bibitem{Green2015b}
T.~J. Green, D.~Olteanu, and G.~Washburn.
\newblock Live programming in the logicblox system: {A} metalogiql approach.
\newblock {\em {PVLDB}}, 8(12):1782--1791, 2015.

\bibitem{Gupta1993}
A.~Gupta, I.~S. Mumick, and V.~S. Subrahmanian.
\newblock Maintaining views incrementally.
\newblock In {\em Proceedings of the 1993 ACM SIGMOD International Conference
  on Management of Data}, pages 157--166, 1993.

\bibitem{Herrmann2017}
K.~Herrmann, H.~Voigt, A.~Behrend, J.~Rausch, and W.~Lehner.
\newblock Living in parallel realities: Co-existing schema versions with a
  bidirectional database evolution language.
\newblock In {\em Proceedings of the 2017 ACM International Conference on
  Management of Data}, pages 1101--1116, 2017.

\bibitem{Hidaka2010}
S.~Hidaka, Z.~Hu, K.~Inaba, H.~Kato, K.~Matsuda, and K.~Nakano.
\newblock Bidirectionalizing graph transformations.
\newblock In {\em Proceedings of the 15th ACM SIGPLAN International Conference
  on Functional Programming}, pages 205--216, 2010.

\bibitem{Horn2018}
R.~Horn, R.~Perera, and J.~Cheney.
\newblock Incremental relational lenses.
\newblock {\em Proc. ACM Program. Lang.}, 2(ICFP):74:1--74:30, July 2018.

\bibitem{Hu2014}
Z.~Hu, H.~Pacheco, and S.~Fischer.
\newblock Validity checking of putback transformations in bidirectional
  programming.
\newblock In {\em FM 2014: Formal Methods}, pages 1--15. Springer International
  Publishing, 2014.

\bibitem{Keller1985}
A.~M. Keller.
\newblock Algorithms for translating view updates to database updates for views
  involving selections, projections, and joins.
\newblock In {\em Proceedings of the Fourth ACM SIGACT-SIGMOD Symposium on
  Principles of Database Systems}, pages 154--163, 1985.

\bibitem{Keller1986}
A.~M. Keller.
\newblock Choosing a view update translator by dialog at view definition time.
\newblock In {\em Proceedings of the 12th International Conference on Very
  Large Data Bases}, pages 467--474, 1986.

\bibitem{Keller1984}
A.~M. Keller and J.~D. Ullman.
\newblock On complementary and independent mappings on databases.
\newblock In {\em Proceedings of the 1984 ACM SIGMOD International Conference
  on Management of Data}, pages 143--148, 1984.

\bibitem{Kimelfeld2012}
B.~Kimelfeld, J.~Vondr\'{a}k, and R.~Williams.
\newblock Maximizing conjunctive views in deletion propagation.
\newblock {\em ACM Trans. Database Syst.}, 37(4):24:1--24:37, Dec. 2012.

\bibitem{KoHu2018}
H.-S. Ko and Z.~Hu.
\newblock An axiomatic basis for bidirectional programming.
\newblock {\em Proc. ACM Program. Lang.}, 2(POPL):41:1--41:29, Dec. 2017.

\bibitem{Ko2016}
H.-S. Ko, T.~Zan, and Z.~Hu.
\newblock Bigul: A formally verified core language for putback-based
  bidirectional programming.
\newblock In {\em Proceedings of the 2016 ACM SIGPLAN Workshop on Partial
  Evaluation and Program Manipulation}, pages 61--72, 2016.

\bibitem{Koch2010}
C.~Koch.
\newblock Incremental query evaluation in a ring of databases.
\newblock In {\em Proceedings of the Twenty-ninth ACM SIGMOD-SIGACT-SIGART
  Symposium on Principles of Database Systems}, pages 87--98, 2010.

\bibitem{Kotidis2006}
Y.~{Kotidis}, D.~{Srivastava}, and Y.~{Velegrakis}.
\newblock Updates through views: A new hope.
\newblock In {\em 22nd International Conference on Data Engineering (ICDE'06)},
  pages 2--2, April 2006.

\bibitem{Langerak1990}
R.~Langerak.
\newblock View updates in relational databases with an independent scheme.
\newblock {\em ACM Trans. Database Syst.}, 15(1):40--66, Mar. 1990.

\bibitem{Larson1991}
J.~A. Larson and A.~P. Sheth.
\newblock Updating relational views using knowledge at view definition and view
  update time.
\newblock {\em Information Systems}, 16(2):145 -- 168, 1991.

\bibitem{Lechtenborger2003}
J.~Lechtenb{\"o}rger and G.~Vossen.
\newblock On the computation of relational view complements.
\newblock {\em ACM Trans. Database Syst.}, 28(2):175--208, June 2003.

\bibitem{Masunaga1984}
Y.~Masunaga.
\newblock A relational database view update translation mechanism.
\newblock In {\em Proceedings of the 10th International Conference on Very
  Large Data Bases}, pages 309--320, 1984.

\bibitem{Masunaga2017}
Y.~Masunaga.
\newblock An intention-based approach to the updatability of views in
  relational databases.
\newblock In {\em Proceedings of the 11th International Conference on
  Ubiquitous Information Management and Communication}, pages 13:1--13:8, 2017.

\bibitem{Masunaga2018}
Y.~Masunaga, Y.~Nagata, and T.~Ishii.
\newblock Extending the view updatability of relational databases from set
  semantics to bag semantics and its implementation on postgresql.
\newblock In {\em Proceedings of the 12th International Conference on
  Ubiquitous Information Management and Communication}, pages 19:1--19:8, 2018.

\bibitem{Matsuda2007}
K.~Matsuda, Z.~Hu, K.~Nakano, M.~Hamana, and M.~Takeichi.
\newblock Bidirectionalization transformation based on automatic derivation of
  view complement functions.
\newblock In {\em Proceedings of the 12th ACM SIGPLAN International Conference
  on Functional Programming}, pages 47--58, 2007.

\bibitem{Medeiros1986}
C.~B. Medeiros and F.~W. Tompa.
\newblock Understanding the implications of view update policies.
\newblock {\em Algorithmica}, 1(1):337--360, Nov 1986.

\bibitem{Melnik2008}
S.~Melnik, A.~Adya, and P.~A. Bernstein.
\newblock Compiling mappings to bridge applications and databases.
\newblock {\em ACM Trans. Database Syst.}, 33(4):22:1--22:50, Dec. 2008.

\bibitem{Pacheco2014a}
H.~Pacheco, Z.~Hu, and S.~Fischer.
\newblock Monadic combinators for putback style bidirectional programming.
\newblock In {\em Proceedings of the ACM SIGPLAN 2014 Workshop on Partial
  Evaluation and Program Manipulation}, pages 39--50, 2014.

\bibitem{Pacheco2014b}
H.~Pacheco, T.~Zan, and Z.~Hu.
\newblock Biflux: A bidirectional functional update language for xml.
\newblock In {\em Proceedings of the 16th International Symposium on Principles
  and Practice of Declarative Programming}, pages 147--158, 2014.

\bibitem{Ramakrishnan1999}
R.~Ramakrishnan and J.~Gehrke.
\newblock {\em Database Management Systems}.
\newblock McGraw-Hill, Inc., 2nd edition, 1999.

\bibitem{Fernando2011}
F.~S{\'a}enz-P{\'e}rez, R.~Caballero, and Y.~Garc\'ia-Ruiz.
\newblock A deductive database with datalog and sql query languages.
\newblock In {\em Asian Symposium on Programming Languages and Systems}, pages
  66--73, 2011.

\bibitem{Shmueli1993}
O.~Shmueli.
\newblock Equivalence of datalog queries is undecidable.
\newblock {\em The Journal of Logic Programming}, 15(3):231 -- 241, 1993.

\bibitem{Zan2016}
T.~Zan, L.~Liu, H.~Ko, and Z.~Hu.
\newblock Brul: {A} putback-based bidirectional transformation library for
  updatable views.
\newblock In {\em ETAPS}, pages 77--89, 2016.

\bibitem{Zhu2016}
Z.~Zhu, Y.~Zhang, H.-S. Ko, P.~Martins, J.~a. Saraiva, and Z.~Hu.
\newblock Parsing and reflective printing, bidirectionally.
\newblock In {\em Proceedings of the 2016 ACM SIGPLAN International Conference
  on Software Language Engineering}, pages 2--14, 2016.

\end{thebibliography}



\begin{appendix}
\section{Proofs}

\subsection{Proof of Theorem \ref{thm:unique_query}}
\begin{proof}
    By contradiction. If there are two view definitions $get_1$ and $get_2$ that satisfy the condition, then by applying the \textsc{GetPut} and \textsc{PutGet} properties to the expression $E= get_1(put(S,get_2(S)))$, we have $E= get_1(S)$ and $E=get_2(S)$, respectively. This means that $get_1(S) = get_2(S)$ for any database $S$, i.e., $get_1$ and $get_2$ are equivalent.
\end{proof} 

\subsection{Proof of Lemma~\ref{lemma:decidability}}
\label{sec:proof-decidability}
\begin{proof}
	Let $P$ be a Datalog program in nonrecursive GN-Datalog with equalities, constants and comparisons.
	We shall transform a query $(P,R)$, where $R$ is an IDB relation corresponding to IDB predicate $r$ in $P$, into an equivalent guarded negation first-order (GNFO) formula \cite{Barany2015}.
	Without loss of generality, we assume that in $P$, for every pair of head atoms $h_1(\vec{X_1})$, $h_2(\vec{X_2})$ in $P$, $h_1 = h_2$ implies $\vec{X_1} =\vec{X_2} $ (this can be achieved by variable renaming).
	
	Since there are constants that can occur in both atoms and equalities. We first remove all constants appearing in atoms by transforming them to constants appearing in equalities. This can be done by introducing a fresh variable $X$ for each constant $c$ in the atoms of the Datalog rule (head or body), then adding an equality $X=c$ to the rule body and substitute $X$ for the constant $c$. By this transformation, we consider equalities of the form $X=c$ and a positive atom as a guard for negative predicates or head atom of Datalog rules. In other words, for each head atom or negative predicate $\beta$, there is a positive atom $p(\vec{Y})$ such that all the free variables in $\beta$ must appear in $p(\vec{Y})$ or in an equality of the form $X=c$.
	 For example, the following rule
	\[h(Z,1) \ruleeq p(Z,W, 3), \neg r(W, 4). \]
	is transformed into 
	\begin{align*}
		h(Z,X_1) \ruleeq & p(Z,W, X_2), \neg r(W, X_3), X_1 = 1, X_2 = 3, \\
							& X_3 = 4.
	\end{align*}
	in which the negated atom $r(W, X_3)$ is guarded by the positive atom $p(Z,W, X_2)$ and the equality $X_3 = 4$. The head atom $h(Z,X_1)$ is guarded by $p(Z,W, X_2)$ and $X_1 = 1$.
	
	We shall define a FO formula $\varphi_{r}$ equivalent to the Datalog query $(P,R)$, i.e, for every database $D$, the IDB relation $R$ (corresponding to IDB predicate $r$ in $P$) in the output of $P$ over $D$ (denoted as $P(D)|_R$) is the same as the set of tuple $\vec{t}$ satisfying $\varphi_{r}$ ($\{\vec{t} ~|~ \varphi(\vec{t})\}$).
	The construction of $\varphi_{r}$ is inductively defined as the following:
	\begin{itemize}
		\item (Base case) $r$ is an EDB predicate, i.e., $r\in \mathcal{S} \cup \{v\}$: $\varphi_{r} = r(\vec{X_r})$, where $\vec{X_r}$ denotes $(X_1,\ldots,X_{arity(r)})$.
		\item (Inductive case) $r$ is an IDB predicate, i.e., $r$ occurs in the head of some rules. Suppose that there are $m$ rules:
		\[ 
		\begin{array}{l}
		r(\vec{X_r}) \ruleeq \alpha_{1,1}, \ldots, \alpha_{1,n_1}.\\
		\ldots\\
		r(\vec{X_r}) \ruleeq \alpha_{m,1}, \ldots, \alpha_{m,n_m}.
		\end{array}
		\]
Let $\varphi_{r,i}(\vec{X_r})$ be the FO formula for $r$ when considering only the $i$-th rule:
\[\varphi_{r,i}(\vec{X_r}) = \exists \vec{E}_i, \bigwedge\limits_{j=1}^{n_i} \beta_{{i,j}}\]
where $\vec{E}_i$ contains the bound variables of the $i$-th rule (variables not in the rule head),
\[ \beta_{i,j} = \left\{
    \begin{array}{ll}
		\varphi_{w}(\vec{Z}),  \text{ if } \alpha_{i,j} \text{ is an atom } w(\vec{Z})\\
		\neg \varphi_{w}(\vec{Z}),  \text{ if } \alpha_{i,j} \text{ is a negated atom } \neg w(\vec{Z}) \\
		\alpha_{i,j} \text{ if } \alpha_{i,j} \text{ is an equality or a negated }\\
			\qquad\qquad\text{ equality }  \\
		C_{<c}(X) \text{ if } \alpha_{i,j} \text{ is a comparison predicate} \\
		\qquad \qquad X<c
		\\
		C_{>c}(X) \text{ if } \alpha_{i,j} \text{ is a comparison predicate} \\
		\qquad \qquad X>c
    \end{array}\right.
\]
		Here we introduce fresh predicates $C_{<c}(X)$ and $C_{>c}(X)$ for the comparisons.
		We have:
		\[\varphi_{r}(\vec{X_r}) = \bigvee\limits_{i = 1}^m \varphi_{r,i}(\vec{X_r}) = \bigvee\limits_{i = 1}^m \left( \exists \vec{E}_i, \bigwedge\limits_{j=1}^{n_i} \beta_{{i,j}}\right)\]
	\end{itemize}
	
	It is not difficult to show that $\varphi_{r}$ is equivalent to the Datalog query $(P,R)$. Indeed, for any database instance $D$, by induction, we can show that for each IDB predicate $r$ and each tuple $\vec{t}$, 
	\[ r(\vec{t}) \in P(D) \Leftrightarrow D \models \varphi_{r}(\vec{t})\]

	In each conjunction $\varphi_{r,i}(\vec{X_r}) = \exists \vec{E}_i, \bigwedge_{j=1}^{n_i} \beta_{{i,j}}$, each negative predicate $\beta_{i,j}$ is guarded by a positive atom $w_{i,j}(\vec{Y})$ and many equalities. Moreover, there exists a positive atom $w_i(\vec{Y})$ containing all the free variables of $\vec{X_r}$.

	Let us briefly recall the syntax of GNFO formulas with constants proposed by B\'{a}r\'{a}ny et al. \cite{Barany2015}. GNFO formulas with constants are generated by the following definition: \[φ::=r(t_1,\ldots,t_n)~|~t_1 =t_2 ~|~ φ_1 ∧φ_2 ~|~ φ_1 ∨φ_2 ~|~∃xφ ~|~ α∧¬φ\]
	where each $t_i$ is either a variable or a constant symbol, and, in $α∧¬φ$, $α$ is an atomic formula of EDB predicate containing all free variables of $φ$. 
	
	We now transform $\varphi_{r}(\vec{X_r})$ into a GNFO formula by structural induction on $\varphi_{r}(\vec{X_r})$. 
	Since GNFO is close under disjunction ($φ_1 ∨φ_2$), we transform each conjunction $\varphi_{r,i}(\vec{X_r})$ in the formula $\varphi_{r}(\vec{X_r})$ into a GNFO formula.
	We first group each negative predicate $\beta_{i,j}$ with its guard atom $w_{i,j}(\vec{Y})$. If a free variables $X$ appearing in $\beta_{i,j}$ but not in $w_{i,j}(\vec{Y})$, $X$ must appear in an equality $X=c$, we then substitute $c$ for $X$ in $\beta_{i,j}$ and obtain $	\varphi_{w_{i,j}}(\vec{Y}) \wedge \beta_{i,j}$, where $\vec{Y}$ contains all the free variable of $\beta_{i,j}$. If two negative predicates share the same guard atom then the guard atom can be used twice. 
	 \[\varphi_{r,i}(\vec{X_r}) = \exists \vec{E}_i, \left(\bigwedge\limits_{k} \beta_{i,k}\right) \wedge \left(\bigwedge\limits_{j} (\varphi_{w_{i,j}}(\vec{Y}) \wedge \beta_{i,j})\right)\]
	Because each $\beta_{i,k}$ in $\left(\bigwedge_{k} \beta_{i,k}\right) $ is a positive predicate, we inductively transform each $\beta_{i,k}$ into a GNFO formula.
	Now consider each formula $\psi = (\varphi_{w_{i,j}}(\vec{Y}) \wedge \beta_{i,j})$. 
	\begin{itemize}
		\item If $w_{i,j}$ is an EDB predicate, $\varphi_{w_{i,j}}(\vec{Y}) = w_{i,j}(\vec{Y})$, thus $\psi$ is a GNFO formula.
		\item If $w_{i,j}$ is an IDB predicate, by the construction of $\varphi_{w_{i,j}}(\vec{Y})$, we have $\varphi_{w_{i,j}}(\vec{Y}) = \bigvee\limits_l \varphi_{w_{i,j}^l}(\vec{Y}) $. As mentioned before in each $\varphi_{w_{i,j}^l}(\vec{Y}) $ there is an IDB atom $u_l(\vec{Z})$ containing all variables of $\vec{Y}$. Therefore,
		\begin{align*}
			 \psi &= \left(\bigvee\limits_l \varphi_{w_{i,j}^l}(\vec{Y})\right) \wedge  \beta_{i,j} \\
			 & \equiv  \bigvee\limits_l \varphi_{w_{i,j}^l}(\vec{Y}) \wedge  \beta_{i,j} \\
			 & \equiv  \bigvee\limits_l \varphi_{w_{i,j}^l}(\vec{Y}) \wedge  (\varphi_{u_l}(\vec{Z}) \wedge \beta_{i,j})
		\end{align*}
		We continue to inductively transform each $\varphi_{w_{i,j}^l}(\vec{Y})$ and $\varphi_{u_l}(\vec{Z}) \wedge \beta_{i,j}$ into a GNFO formula.
	\end{itemize}
	In this way, each formula $\varphi_{w_{i,j}}(\vec{Y}) \wedge \beta_{i,j}$ is transformed into a GNFO formula. Since GNFO is close under conjunction and existential quantifier, $\varphi_{r}(\vec{X_r})$ is transformed into a GNFO formula.
	
	We have constructed an equivalent GNFO formula $\varphi_{r}(\vec{X_r})$ for the Datalog query $(P,R)$. It is remarkable that in this transformation, we have introduced many predicate symbols $C_{<c}(X)$ and $C_{>c}(X)$ for comparison built-in predicates $<$ and $>$ in $P$. The introduction of new predicates $C_{<c}(X)$ and $C_{>c}(X)$ does not preserve the meaning of comparison symbols $<$ and $>$. Therefore, to reduce the satisfiability of Datalog query $(P,R)$ to the satisfiability of $\varphi_{r}(\vec{X_r})$, we need an axiomatization for the comparison built-in predicates. We construct a GNFO sentence for this axiomatization by using the similar technique for GN-SQL(\textsc{lin}) by B\'{a}r\'{a}ny et al. \cite{Barany2012}. Let the set of constant symbols in $P$ be $\{c_1, \ldots, c_n\}$, which is a finite subset of a totally ordered domain $dom$, with $c_1 < c_2 < \ldots < c_n$. The GNFO sentence that axiomatizes comparison built-in predicates is as follows:
	\[\Phi = ∀X, φ_{X<c_1} ∨ φ_{X=c_1} ∨ φ_{c_1<X<c_2} ∨ φ_{X=c_2} ∨ \ldots ∨ φ_{X>c_n}\]
	where
	\begin{align*}
		&φ_{X<c_1} =  \left\{
		\begin{array}{ll}
		\bigwedge\limits_{i \leq n} (C_{<c_i}(X) \wedge \neg (X=c_i) \wedge \neg C_{>c_i}(X)) \\
		\qquad \qquad \text{if $\exists c \in dom , c < c_1$}\\
		\bot \qquad \text{otherwise}
		\end{array}
		\right.\\
		&\begin{array}{ll}
			φ_{X=c_i} =  &(X=c_i ) \wedge \neg C_{<c_i}(X) \wedge  \neg C_{>c_i}(X) \wedge \\
				 &\left(\bigwedge_{j<i} ( C_{>c_j}(X) \wedge \neg (X = c_j )\wedge \neg C_{<c_j}(X) )\right) \wedge \\
					&\left(\bigwedge_{j>i} ( \neg C_{>c_j}(X) \wedge \neg (X = c_j) \wedge C_{<c_j}(X) )\right)
		\end{array}\\
		&φ_{c_i < X<c_{i+1}} =  \\
		&\qquad \quad \left\{
		\begin{array}{ll}
		\left(\bigwedge_{j\leq i} ( C_{>c_j}(X) \wedge \neg (X = c_j )\wedge \neg C_{<c_j}(X) )\right) \wedge \\
		\left(\bigwedge_{j>i} ( \neg C_{>c_i}(X) \wedge \neg (X = c_i) \wedge C_{<c_i}(X) )\right)\\
		\qquad \qquad \text{if $\exists c \in dom , c_i < c < c_{i+1}$}\\
		\bot \qquad \quad  \text{otherwise}
		\end{array}
		\right.\\
		&φ_{X>c_1} =  \left\{
		\begin{array}{ll}
		\bigwedge\limits_{i \leq n} ( \neg C_{<c_i}(X) \wedge \neg (X=c_i) \wedge C_{>c_i}(X)) \\
		\qquad \qquad \text{if $\exists c \in dom , c > c_n$}\\
		\bot \qquad \quad \text{otherwise}
		\end{array}
		\right.
	\end{align*}

	By this way, the Datalog query $(P,R)$ is satisfiable if and only if the GNFO sentence $\Phi \wedge \varphi_{r}(\vec{X_r})$ is satisfiable. Indeed, if there is a database $D$ over that the query $(P,R)$ is not empty, we can construct a signature $D'$ by copying all relations from $D$ and use all (finite) the suitable values of the active domain of $D$ to construct a relation corresponding to each predicate $C_{<c_i}(X)$/$C_{>c_i}(X)$. Clearly, $D'$ satisfies $\Phi$ and $\varphi_{r}(\vec{X_r})$. Conversely, if there is a signature $D'$ that satisfies $\Phi$ and $\varphi_{r}(\vec{X_r})$ we can construct a database $D$ by an isomorphic copy of all relations from $D'$ except the relations corresponding to predicates $C_{<c_i}(X)$ and $C_{>c_i}(X)$. It is known that for GNFO formulas, satisfiability over finite structure coincides with satisfiability over unrestricted structures. In other words, any structures satisfying the GNFO formula are finite. Therefore $D'$ is a finite structure, i.e. a database. Since the satisfiability of a GNFO sentence is decidable, the satisfiability of the Datalog query $(P,R)$ is also decidable.
\end{proof}

\subsection{Proof of Theorem~\ref{thm:gn-constraints}}
\begin{proof}
	As in Lemma~\ref{lemma:decidability}, we first transform a query $Q$ in nonrecursive GN-Datalog with equalities, constants and comparisons into an equivalent guarded negation first-order formula $\varphi_r(\vec{Y})$. The result of $Q$ over a database $D$ is not empty iff $D$ satisfies the sentence $\exists \vec{Y}, \varphi_r(\vec{Y})$.
	Let $\Sigma$ be a set of guarded negation constraints and $\sigma_i = ∀\vec{X_i}, Φ_i(\vec{X_i}) → \bot$ ($i \in [1,m]$) be a constraint in $\Sigma$, where $Φ_i(\vec{X_i})$ is a conjunction of (negative) atoms. Clearly, each $Φ_i(\vec{X_i})$ is a guarded negation formula since there is a \textit{guard} atom in the rule body $Φ_i(\vec{X_i})$. We rewrite $\sigma_i$ as an equivalent sentence $\sigma_i \equiv \neg \exists \vec{X_i}, Φ_i(\vec{X_i})$. Now, the query $Q$ is satisfiable under $\Sigma$ iff there exists a database $D$ satisfying all $\sigma_i$ such that $D$ satisfies $\exists \vec{Y}, \varphi_r(\vec{Y})$. This means that we need to check whether there exists a database $D$ such that $D$ satisfies all $\sigma_i$ and $\exists \vec{Y}, \varphi_r(\vec{Y})$:
	$D \models (\bigwedge^m_{i=1} \neg \exists \vec{X_i}, Φ_1(\vec{X_i})) \wedge (\exists \vec{Y}, \varphi_r(\vec{Y}))$.
	Note that there is no free variable in $\exists \vec{X_i}, Φ_i(\vec{X_i})$ ($i\in[1,m]$) and all $Φ_1, \ldots, Φ_m$ and $\varphi_r(\vec{Y})$ are GNFO formulas, the conjunction $(\bigwedge^m_{i=1} \neg \exists \vec{X_i}, Φ_1(\vec{X_I})) \wedge (\exists \vec{Y}, \varphi_r(\vec{Y}))$ is a GNFO formula. Thus, the problem now is reduced to the satisfiability of a GNFO formula, which is decidable.
\end{proof}

\subsection{Proof of Lemma~\ref{lem:view_derivation}}
\begin{proof}
	From Definition~\ref{def:put_validity}, we know that there exists a view definition $get^d$ that satisfies both \textsc{GetPut} and \textsc{PutGet} with the given valid $put$.
	Let $get$ be an arbitrary view definition satisfying \textsc{GetPut} with $put$, i.e., $put(S,get(S)) = S $ for any $S$.
	By applying the query $get^d$ to both the right-hand side and left-hand side of this equation, and using the \textsc{PutGet} property of $get^d$ and $put$, we obtain:
	\begin{align*}
	get^d(put(S,get(S))) = get^d(S) \Leftrightarrow get(S) = get^d(S)
	\end{align*}
	This means that $get(S) = get^d(S)$ for any $S$, i.e., $get$ and $get^d$ are the same. Thus, $get$ satisfies \textsc{PutGet}  with $put$.

	$~~$
\end{proof}

\subsection{Proof of Lemma \ref{lemma:solving_getput}}
Let $\langle r_1, \ldots, r_n \rangle$ be a source database schema and $S$ be a database instance of this schema, i.e., $S$ contains all relations $R_1, \ldots, R_n$ corresponding to the schema $r_1, \ldots, r_n$. Let $v$ be a view over the source database.
Let $\Sigma$ be a set of $m$ guarded negation constraints over the view and the source database; each constraint is of the form $\sigma_i = ∀\vec{X_i}, Φ_{\sigma_i}(\vec{X_i}) → \bot$.

Let us consider a LVGN-Datalog putback program $putdelta$ for the view $v$. $putdelta$ takes a (updated) view instance $V$ and the original source database $S$ to result in a delta $\Delta S$ of the source. $V$ is a steady state of the view if $\Delta S$ has no effect on the original $S$, i.e., $S \oplus \Delta S = S$. Recall that $\Delta S$ contains all the tuples that need to be inserted/deleted into/from each source relation $R_i$ ($i\in[1,n]$), represented by two sets $\deltaAdd{R_i}$ and $\deltaDel{R_i}$ for these insertions and deletions, respectively. $S \oplus \Delta S = S$ iff
\begin{equation}
	\deltaDel{R_i} \cap R_i = \deltaAdd{R_i} \setminus R_i = \emptyset, \forall i\in[1,n] \label{eqn:emptiness}
\end{equation}

Note that each $\deltaAdd{R_i}$/$\deltaDel{R_i}$ is the IDB relation corresponding to delta predicate $+r_i$/$-r_i$ in the result of the Datalog program $putdelta$ over the view and source database $(S,V)$. Since $putdelta$ is nonrecursive, we have an equivalent relational calculus query $\varphi_{-r_i}(\vec{X_i})$/$\varphi_{+r_i}(\vec{X_i})$ for each $\deltaAdd{R_i}$/$\deltaDel{R_i}$.
Equation (\ref{eqn:emptiness}) is equivalent to the condition that two relational calculus queries $\varphi_{-r_i}(\vec{X_i}) \wedge r_i (\vec{X_i}) $ and  $\varphi_{+r_i}(\vec{X_i}) \wedge \neg r_i (\vec{X_i})$ must be empty over the view and source database $(S,V)$. In other words, the first-order sentences $\exists \vec{X_i}, \varphi_{-r_i}(\vec{X_i}) \wedge r_i (\vec{X_i})$ and $\exists \vec{X_i}, \varphi_{+r_i}(\vec{X_i}) \wedge \neg r_i (\vec{X_i})$ are not satisfiable over the view and source database $(S,V)$. Combined with the constraint set $\Sigma$, a steady-state view $V$ satisfies $\Sigma$ and $S \oplus putdelta (S,V) = S$ iff:
\begin{equation}
    \left\{ \begin{array}{l}
        (S,V) \not\models  \exists \vec{X_i}, \varphi_{-r_i}(\vec{X_i}) \wedge r_i (\vec{X_i}) , i \in [1,n]\\ 
        (S,V) \not\models \exists \vec{X_i}, \varphi_{+r_i}(\vec{X_i}) \wedge \neg r_i (\vec{X_i}) , i \in [1,n] \\
        (S,V) \not \models  \exists \vec{X_i}, \Phi_{\sigma_i}(\vec{X_i}) , i \in [n+1,n+m]
    \end{array}
    \right.
\end{equation}
where $\vec{X_i}$ denotes a tuple of variables.
Note that $(S,V) \not\models \xi_1$ and $(S,V) \not\models \xi_2$ are equivalent to $(S,V) \not\models \xi_1 \vee \xi_2$. Thus, we have:
\begin{equation}
	\left\{ \begin{array}{l}
	(S,V) \not\models \exists \vec{X_i}, (\varphi_{-r_i}(\vec{X_i}) \wedge r_i (\vec{X_i})) \vee \\ 
	\qquad \qquad (\varphi_{+r_i}(\vec{X_i}) \wedge \neg r_i (\vec{X_i})) , i \in [1,n] \\
	(S,V) \not \models  \exists \vec{X_i}, \Phi_{\sigma_i}(\vec{X_i}) , i \in [n+1,n+m]
	\end{array}
	\right. 
	\label{eq:raw_getput}
\end{equation}
We now find such a $V$ satisfying ($\ref{eq:raw_getput}$).


\begin{claim}
    Given a putback program $putdelta$ written in LVGN-Datalog for a view $v$ and a source schema $\langle r_1, \ldots, r_n \rangle$, each relational calculus formula $\varphi_{r}(\vec{X_r})$ of the query $(putdel\-ta, R)$, where $R$ is an IDB relation corresponding to IDB predicate $r$ in $P$, can be rewritten in the following linear-view form:
    \[\left(\bigvee_{k=1}^{p} \exists \vec{E}_{1k}, v(\vec{Y}_{1k}) \wedge \psi_{1k}\right) \vee \left(\bigvee_{k=1}^{q} \exists  \vec{E}_{2k}, \neg v(\vec{Y}_{2k})  \wedge \psi_{2k}\right)\vee \psi_{3}\]
    where view atom $v$ does not appear in $\psi_{1k}$, $\psi_{2k}$ or $\psi_{3}$. 
    Each of the formulas $\exists\vec{E}_{1k}, v(\vec{Y}_{1k}) \wedge \psi_{1k}$, $\exists\vec{E}_{2k}, \neg v(\vec{Y}_{2k}) \wedge \psi_{1k}$ and $\psi_{3}$ is a safe-range GNFO formula and has the same free variables $\vec{X_r}$.
    \label{claim:view_predicate_norm}
\end{claim}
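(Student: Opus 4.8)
The plan is to piggy-back on the inductive translation constructed in the proof of Lemma~\ref{lemma:decidability}, which, for every IDB predicate $r$ of a nonrecursive GN-Datalog program with equalities, constants and comparisons, produces an equivalent relational calculus formula $\varphi_r(\vec{X_r}) = \bigvee_i (\exists \vec{E}_i, \bigwedge_j \beta_{i,j})$, and to strengthen that induction with a bookkeeping invariant about where the view predicate $v$ can occur. Since $putdelta$ is nonrecursive its IDB dependency graph is acyclic, so I would run the induction along the evaluation order $v, l_1,\dots,l_m,\pm r_1,\dots,\pm r_n$ of Section~\ref{sec:incrementalization}, i.e.\ a topological order in which every predicate is treated after all predicates it refers to. As a preprocessing step, exactly as in the proof of Lemma~\ref{lemma:decidability}, I first push every constant occurring inside an atom into an equality $X = c$, so that in particular the arguments of every $v$-atom are plain variables.

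For the induction itself there are two base cases and two inductive cases. If $r$ is a source predicate $r_i$, then $\varphi_r = r_i(\vec{X_r})$ and we take $p = q = 0$, $\psi_3 = r_i(\vec{X_r})$; if $r = v$, then $\varphi_r = v(\vec{X_v})$ and we take $p = 1$, $q = 0$, with $v(\vec{Y}_{11}) \wedge \psi_{11}$ equal to the atom $v(\vec{X_v})$ itself. If $r$ is an intermediate predicate $l_j$, then by the linear-view restriction (Definition~\ref{def:linear_view}) $v$ does not occur in any rule defining $l_j$, and by the evaluation order the body predicates of those rules lie among the source predicates and $l_1,\dots,l_{j-1}$; by the induction hypothesis none of the substituted $\varphi_w$ contains $v$, so $\varphi_{l_j}$ contains no $v$ and is already of the claimed form ($p = q = 0$). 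If $r$ is a delta predicate $+r_i$ or $-r_i$, each of its rules contains at most one view atom, $v(\vec{Y})$ or $\neg v(\vec{Y})$, by Definition~\ref{def:linear_view}, and its remaining body predicates are source or intermediate predicates (delta predicates occur only in rule heads), none of which contributes an occurrence of $v$ after substitution; distributing $\wedge$ over $\vee$ and renaming bound variables apart to flatten $\varphi_r$ into a disjunction of formulas $\exists \vec{E}, \bigwedge_j \beta_j$, every such disjunct therefore contains at most one occurrence of $v$, positive or negative. Sorting the disjuncts of $\varphi_r$ according to whether that occurrence is positive, negative, or absent yields the three groups $\bigvee_k \exists\vec{E}_{1k}, v(\vec{Y}_{1k}) \wedge \psi_{1k}$, $\bigvee_k \exists\vec{E}_{2k}, \neg v(\vec{Y}_{2k}) \wedge \psi_{2k}$ and $\psi_3$, with $v$ absent from every $\psi$.

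It then remains to check that each of these pieces is safe-range and rewrites to GNFO with free variables exactly $\vec{X_r}$. Because every rule of $putdelta$ is negation guarded, in each disjunct $\exists \vec{E}, \bigwedge_j \beta_j$ every negated atom, every (negated) equality, every comparison, and every head variable is covered by a positive guard atom of the same conjunction; this is precisely range-restriction, and it forces every head variable to occur, so after the usual padding for head variables that only appear in an equality $X = c$ (as in the proof of Lemma~\ref{lemma:decidability}) all three pieces have free variables exactly $\vec{X_r}$. To obtain GNFO I would reuse verbatim the regrouping argument of Lemma~\ref{lemma:decidability}, pairing each negated literal with its guard and recursing on the positive parts; the only literals not already handled there are $v(\vec{Y}_{1k})$, an ordinary atom and hence unproblematic, and $\neg v(\vec{Y}_{2k})$, which is negation guarded inside its delta rule and therefore fits the $\alpha \wedge \neg \phi$ production of GNFO. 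The delicate point I expect is exactly the linearity invariant in the inductive step --- showing that substituting the already-normalized $\varphi_w$ for body atoms can never create a second occurrence of $v$ --- since this is where one really uses that $v$ appears only in delta rules, at most once, and that delta predicates are never used in rule bodies; the remainder is essentially the bookkeeping of Lemma~\ref{lemma:decidability} carried through with the extra case for $v$.
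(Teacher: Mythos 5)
Your proposal is correct and follows essentially the same route as the paper: an induction over the translation of Lemma~\ref{lemma:decidability}, using the linear-view restriction to show non-delta IDB predicates contain no $v$, sorting the disjuncts of each delta predicate's formula by whether the (unique) $v$-occurrence is positive, negative, or absent, verifying closure of the linear-view form under disjunction, and then reusing the GNFO/safe-range regrouping from that lemma. The only cosmetic differences are that you make the stratification order and the $r=v$ base case explicit where the paper calls the base case trivial.
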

\begin{proof} 
The proof is conducted inductively on the transformation (presented in Subsection~\ref{sec:proof-decidability} - the proof of Lemma \ref{lemma:decidability}) between the Datalog query $(putdelta,R)$ and an equivalent GNFO formula $\varphi_{r}(\vec{X_r})$.
Note that in this transformation, each $\varphi_{r,i}$ is a safe-range\footnote{A first-order formula $\psi$ is a safe-range formula if all variables in $\psi$ are range restricted \cite{alicebook}. In fact, for each nonrecursive Datalog query with negation, there is an equivalent safe-range first-order formula, and vice versa \cite{alicebook}.} formula, i.e. is a relational calculus \cite{alicebook}.

We inductively prove that every $\varphi_r$ can be transformed into the linear-view form.
The base case is trivial.
For the inductive case,  due to the linear-view restriction, if $r$ is a normal predicate (not a delta predicate), then there is no view atom $v$ in all the rules defining $r$; thus, $\varphi_{r} = \bigvee_{i = 0}^m \varphi_{r,i}$ is in the linear-view form, where $\psi_{3} = \bigvee_{i = 0}^m \varphi_{r,i}$ and $p=q=0$. On the other hand, if $r$ is a delta predicate, in each $i$-th rule $r(\vec{X_r}) \ruleeq \alpha_{i,1}, \ldots, \alpha_{i,n_1}$, there are two cases.
The first case is that there is no $\alpha_{i,j_0}$ of a view atom $v$, $\varphi_{r,i} = \exists \vec{E}_i, \bigwedge_{j=0}^{n_i} \beta_{{i,j}}$ is in the linear-view form, where $\psi_{3} = \varphi_{r,i}$ and $p=q=0$.
In the second case, there is only one $\alpha_{i,j_0}$, which is an atom $v(\vec{Y_i})$ or a negated atom $\neg v(\vec{Y_i})$. Thus, $\varphi_{r,i} = \exists \vec{E}_i, v(\vec{Y_i}) \wedge \bigwedge_{j=0, j \neq j_0}^{n_i} \beta_{{i,j}}$ or $\varphi_{r,i} = \exists \vec{E}_i, \neg v(\vec{Y_i}) \wedge \bigwedge_{j=0, j \neq j_0}^{n_i} \beta_{{i,j}}$. Therefore, $\varphi_{r,i}$ is rewritten in the linear-view form. Note that if two formulas are in the linear-view form, then the disjunction of them can be transformed into the linear-view form. Indeed,
\begin{align*}
	& \left(\bigvee_{k=1}^{p_1} \exists \vec{E}_{1k}, v(\vec{Y}_{1k}) \wedge \psi_{1k}\right) \vee \left(\bigvee_{k=1}^{q_1} \exists \vec{E}_{2k}, \neg v(\vec{Y}_{2k})  \wedge \psi_{2k}\right) \\ 
	& \vee  \psi_{3} ~ \vee\\
    &  \left(\bigvee_{k=p_1+1}^{p_2} \exists \vec{E}_{1k}, v(\vec{Y}_{1k}) \wedge \psi_{1k}\right) \vee \\
    &  \qquad \qquad \qquad \qquad \left(\bigvee_{k=q_1+1}^{q_2} \exists \vec{E}_{2k}, \neg v(\vec{Y}_{2k})  \wedge \psi_{2k}\right)  \vee\psi'_{3}  \\
    &  \equiv \left(\bigvee_{k=1}^{p_2} \exists \vec{E}_{1k}, v(\vec{Y}_{1k}) \wedge \psi_{1k}\right) \vee  \\
    &  \qquad \qquad \qquad \qquad \left(\bigvee_{k=1}^{q_2} \exists \vec{E}_{2k}, \neg v(\vec{Y}_{2k})  \wedge \psi_{2k}\right) \vee (\psi_{3} \vee \psi'_{3}) 
\end{align*}
In this way, $\varphi_{r} = \bigvee\limits_{i = 1}^m \varphi_{r,i}$ is rewritten in the linear-view form.

As proven in \cite{Barany2012}, we can continue to transform each safe-range formula $\varphi_{r,i}$ into a GNFO formula. In other words, in the linear-view form of $\varphi_{r}$, each $\exists \vec{E}_{1k}, v(\vec{Y}_{1k}) \wedge \psi_{1k}$ and $\exists  \vec{E}_{2k}, \neg v(\vec{Y}_{2k})  \wedge \psi_{2k}$, and $\psi_{3}$ can be transformed into a safe-rage GNFO formula. In this transformation, if $\exists \vec{E}_{1k}, v(\vec{Y}_{1k}) \wedge \psi_{1k}$ is transformed into $\exists \vec{E}_{1k}, v(\vec{Y}_{1k}) \wedge \psi'_{1k} \vee v(\vec{Y}_{1k}) \wedge \psi''_{1k}$, we will transform it into $(\exists \vec{E}_{1k}, v(\vec{Y}_{1k}) \wedge \psi'_{1k}) \vee (\vec{E}_{1k}, v(\vec{Y}_{1k}) \wedge \psi''_{1k})$. In this way, we finally obtain a safe-range GNFO formula of $\varphi_{r}$, which is also in the linear-view form.
\end{proof}

%
We now know that the relational calculus formula $\varphi_{\pm r}$ of each delta predicate $\pm r$ is rewritten in the linear-view form.
For each constraint $\sigma_i$ of the form $∀\vec{X_i}, Φ_{\sigma_i}(\vec{X_i}) → \bot$, we can also transform the conjunction $ \Phi_{\sigma_i}(\vec{X_i})$ into the linear-view form. Indeed, let us consider a new Datalog rule in $putdelta$ as the following:
\[b_i(\vec{X_i}) \ruleeq \Phi_{\sigma_i}(\vec{X_i}).\]
in which the view is linearly used.
The conjunction $ \Phi_{\sigma_i}(\vec{X_i})$ is equivalent to the relational calculus query $\varphi_{b_i}(\vec{X_i})$ of relation $b_i$, which can be transformed into the linear-view form.

Since $\varphi_{\pm r}(\vec{X_r})$ can be rewritten in the linear-view form, the conjunction $\varphi_{\pm r}(\vec{X_r}) \wedge r(\vec{X_r})$ can be rewritten in the linear-view form by applying the distribution of existential quantifier over disjunction:
\begin{align*}
    & \varphi_{\pm r}(\vec{X_r}) \wedge r (\vec{X_r}) \equiv  \left(\bigvee_{k=1}^{p} r (\vec{X_r}) \wedge \exists \vec{E}_{1k}, v(\vec{Y}_{1k}) \wedge \psi_{1k}\right) \vee\\ 
    & \qquad  \left(\bigvee_{k=1}^{q} r (\vec{X_r}) \wedge \exists \vec{E}_{2k}, \neg v(\vec{Y}_{2k})  \wedge \psi_{2k}\right)\vee  (r (\vec{X_r}) \wedge \psi_{3})
\end{align*}
$\vec{X_r}$ is the free variable of $\varphi_{r}(\vec{X_r})$; hence, no existential variable in $\vec{E}_{1k}$ or $\vec{E}_{2k}$ is in $\vec{X_r}$. We can push $r (\vec{X_r})$ into the existential quantifier $\exists \vec{E}_{1k}$/$\exists \vec{E}_{2k}$ and obtain:
\begin{align*}
    & \left(\bigvee_{k=1}^{p}  \exists \vec{E}_{1k}, v(\vec{Y}_{1k}) \wedge r (\vec{X_r}) \wedge \psi_{1k}\right) \vee\\ 
    &  \left(\bigvee_{k=1}^{q}\exists \vec{E}_{2k}, \neg v(\vec{Y}_{2k}) \wedge r (\vec{X_r}) \wedge  \psi_{2k}\right)\vee  (r (\vec{X_r}) \wedge \psi_{3}) \\
\end{align*}
This is in the linear-view form. Therefore, the disjunction $(\varphi_{+ r}(\vec{X_r}) \wedge r(\vec{X_r})) \vee \varphi_{- r}(\vec{X_r}) \wedge r(\vec{X_r})$ can be rewritten in the linear-view form.
The constraint~(\ref{eq:raw_getput}) is now rewritten as:

\begin{align*}
    \left\{
    \begin{array}{rl}
        (S,V) \not\models&  \exists \vec{X_i}, \left(\bigvee\limits_{k=1}^{p_i} \exists \vec{E}^i_{1k}, v(\vec{Y}^i_{1k}) \wedge \psi^i_{1k}\right) \vee \\
        &\left(\bigvee\limits_{k=1}^{q_i} \exists  \vec{E}^i_{2k}, \neg v(\vec{Y}^i_{2k})  \wedge \psi^i_{2k}\right)\vee \psi^i_{3}, i\in [1,n]\\
        (S,V) \not\models&  \exists \vec{X_i}, \left(\bigvee\limits_{k=1}^{p_i} \exists \vec{E}^i_{1k}, v(\vec{Y}^i_{1k}) \wedge \psi^i_{1k}\right) \vee \\
		& \left(\bigvee\limits_{k=1}^{q_i} \exists  \vec{E}^i_{2k}, \neg v(\vec{Y}^i_{2k})  \wedge \psi^i_{2k}\right)\vee \psi^i_{3}, \\ 
		& \qquad \qquad \qquad \qquad\qquad\quad i\in [n+1,n+m]
    \end{array}
    \right.
\end{align*}

By applying the distribution of existential quantifier over disjunction
\[\exists \vec{X_i}, \xi_1(\vec{X_i}) \vee \xi_2(\vec{X_i}) \equiv (\exists \vec{X_i}, \xi_1(\vec{X_i})) \vee (\exists \vec{X_i}, \xi_2(\vec{X_i}))\] 
we have: 
\begin{align*}
    \left\{
    \begin{array}{rl}
        (S,V) \not\models&   \left(\bigvee\limits_{k=1}^{p_i} \exists \vec{X_i},\exists \vec{E}^i_{1k}, v(\vec{Y}^i_{1k}) \wedge \psi^i_{1k} \right) \vee \\
        & \left( \bigvee\limits_{k=1}^{q_i} \exists \vec{X_i},\exists   \vec{E}^i_{2k}, \neg v(\vec{Y}^i_{2k})  \wedge \psi^i_{2k} \right) \vee\\
        & \exists \vec{X_i},\psi^i_{3}, i\in [1,n]\\
        (S,V) \not\models& \left( \bigvee\limits_{k=1}^{p_i} \exists \vec{X_i}, \exists \vec{E}^i_{1k}, v(\vec{Y}^i_{1k}) \wedge \psi^i_{1k} \right) \vee \\
        & \left( \bigvee\limits_{k=1}^{q_i} \exists \vec{X_i}, \exists  \vec{E}^i_{2k}, \neg v(\vec{Y}^i_{2k})  \wedge \psi^i_{2k} \right) \vee\\
        & \exists \vec{X_i}, \psi^i_{3}, i\in [n+1,n+m]
    \end{array}
    \right.
\end{align*}
Here, we have disjunction of many formulas on the right-hand side, and we can apply the equivalence between $(S,V) \not\models \xi_1 \vee \xi_2$ and $((S,V) \not\models \xi_1) \wedge ((S,V) \not\models \xi_2) $ to separate the disjunction on the right-hand side and obtain $n_3$ sentences as follows:
\begin{align}
    \left\{
    \begin{array}{rl}
        (S,V) \not\models&  \exists \vec{E_k}, v(\vec{Y_k}) \wedge \psi_{k} , k\in [1,n_1] \\
        (S,V) \not\models& \exists \vec{E_k}, \neg v(\vec{Y_k})  \wedge \psi_{k} , k\in [n_1+1,n_2]\\
        (S,V) \not\models& \exists \vec{E_k},\psi_{k}, k\in [n_2+1,n_3]
    \end{array}
    \right.
    \label{eq:raw_view_derivation}
\end{align}
where $n_1 = \sum\limits _{i=1}^{n+m} p_i$, $n_2 =n_1+ \sum\limits_{i=1}^{n+m} q_i$ and $n_3 = n_2+ n+m$. All variables in $\vec{Y_k}$ are in $\vec{E_k}$ for any $k$.

Note that $\exists W, v(\vec{Y_k}) \wedge \psi_{k} \equiv v(\vec{Y_k}) \wedge \exists W, \psi_{k} $ if $W$ is not a free variable in $v(\vec{Y_k})$. In this way, we push existential variables in $\vec{E_k}$ but not in $\vec{Y_k}$, denoted by $\vec{Z_k}$, into the subformula $\psi_{k}$. In the case that there is a variable $X$ appearing more than once in $\vec{Y_k}$, we can introduce a new fresh variable $X'$ and add the equality $X=X'$ to the formulas after the quantifier $\exists \vec{Y_k}$. For example,
\[\exists Y_1 Y_1 Y_2, v(Y_1, Y_1, Y_2) \equiv \exists Y_1 Y_1' Y_2, v(Y_1, Y_1', Y_2) \wedge Y_1=Y_1'\]
We then substitute the variables in each $\vec{Y_k}$ to obtain the same $\vec{Y} = Y_1,\ldots, Y_{arity(v)}$ for each $\vec{Y_k}$. Then, we have $n_3$ FO sentences that $(S,V)$ must not satisfy:
\[
    \left\{ \begin{array}{l} 
        (S,V) \not\models  \exists \vec{Y}, v(\vec{Y}) \wedge \exists \vec{Z_k}, \psi_{k}(\vec{E_k}) , k \in [1,n_1]\\ 
        (S,V) \not\models \exists \vec{Y}, \neg v(\vec{Y})  \wedge \exists \vec{Z_k}, \psi_{k}(\vec{E_k}) , k \in [n_1+1,n_2] \\
        (S,V) \not\models \exists \vec{E_k}, \psi_{k}(\vec{E_k}), k \in [n_2+1,n_3]
    \end{array}
    \right.
\]
Because $((S,V) \not\models \xi_1) \wedge ((S,V) \not\models \xi_2)$ is equivalent to $(S,V) \not\models \xi_1 \vee \xi_2$, we have:
\[
    \left\{ \begin{array}{l} 
        (S,V) \not\models \bigvee\limits_{k=1}^{n_1} (\exists \vec{Y}, v(\vec{Y}) \wedge \exists \vec{Z_k}, \psi_{k}(\vec{E_k}))\\ 
        (S,V) \not\models \bigvee\limits_{k=n_1+1}^{n_2} (\exists \vec{Y}, \neg v(\vec{Y})  \wedge \exists \vec{Z_k}, \psi_{k}(\vec{E_k})) \\
        (S,V) \not\models \bigvee\limits_{k=n_2+1}^{n_3} (\exists \vec{E_k}, \psi_{k}(\vec{E_k}))
    \end{array}
    \right.
\] 
By applying the distribution of existential quantifier over disjunction $(\exists \vec{Y}, \xi_1(\vec{Y})) \vee (\exists \vec{Y}, \xi_2(\vec{Y})) \equiv \exists \vec{Y}, \xi_1(\vec{Y}) \vee \xi_2(\vec{Y}) $, we have: 

\[
    \left\{ \begin{array}{l} 
        (S,V) \not\models \exists \vec{Y},\bigvee\limits_{k=1}^{n_1} ( v(\vec{Y}) \wedge \exists \vec{Z_k}, \psi_{k}(\vec{E_k}))\\ 
        (S,V) \not\models \exists \vec{Y}, \bigvee\limits_{k=n_1+1}^{n_2} ( \neg v(\vec{Y})  \wedge \exists \vec{Z_k}, \psi_{k}(\vec{E_k})) \\
        (S,V) \not\models \bigvee\limits_{k=n_2+1}^{n_3} ( \exists \vec{E_k}, \psi_{k}(\vec{E_k}))
    \end{array}
    \right.
\] 
By applying the distribution of conjunction over disjunction $(p\wedge q) \vee (p\wedge r) \equiv p\wedge (q\vee r)$, we have:
\begin{align}
    &\left\{ \begin{array}{l}
        (S,V) \not\models  \exists \vec{Y}, v(\vec{Y}) \wedge \phi_1(\vec{Y})\\ 
        (S,V) \not\models \exists \vec{Y}, \neg v(\vec{Y})  \wedge \phi_2(\vec{Y}) \\
        (S,V) \not\models \phi_3
    \end{array}
    \right.\\
    \Leftrightarrow &\left\{ \begin{array}{l}
        (S,V) \models \forall \vec{Y}, v(\vec{Y}) \wedge \phi_1(\vec{Y}) \rightarrow \bot\\ 
    (S,V) \models \forall \vec{Y}, \neg v(\vec{Y})  \wedge \phi_2(\vec{Y}) \rightarrow \bot \\
        (S,V) \not\models \phi_3
    \end{array}
    \right.
    \label{eq:sentence_combination_apendix}
\end{align}
where $\phi_1 = \bigvee_{k=1}^{n_1} (\vec{Z_k}, \psi_{k}(\vec{E_k}))$, $\phi_2 =  \bigvee\limits_{k=n_1+1}^{n_2} ( \exists \vec{Z_k}, \psi_{k}(\vec{E_k}))$ and $\phi_3 = \bigvee\limits_{k=n_2+1}^{n_3} ( \exists \vec{E_k}, \psi_{k}(\vec{E_k}))$.

Note that in (\ref{eq:raw_view_derivation}), each $\exists \vec{E_k},\psi_{k}$ ($k\in [n_2+1,n_3]$) is a safe-range GNFO formula; hence, $\phi_3$ is a GNFO sentence. Each $\exists \vec{E_k}, \neg v(\vec{Y_k})  \wedge \psi_{k}$ ($k\in [n_1+1,n_2]$) is a safe-range GNFO formula, which means that each $ \psi_{k}$ ($k\in [n_1+1,n_2]$) is a safe-range GNFO formula; hence, $\phi_2$ is a safe-range GNFO formula. Each $\exists \vec{E_k}, v(\vec{Y_k}) \wedge \psi_{k} , k\in [1,n_1]$ is a safe-range GNFO formula; hence, $v(\vec{Y}) \wedge \phi_1(\vec{Y}) \equiv  \bigvee_{k=1}^{n_1} (\exists \vec{Y}, v(\vec{Y}) \wedge \exists \vec{Z_k}, \psi_{k}(\vec{E_k})) \equiv \bigvee_{k=1}^{n_1}(\exists \vec{E_k}, v(\vec{Y_k}) \wedge \psi_{k} ) $, which is a safe-range GNFO formula.


\subsection{Proof of Proposition~\ref{prop:putdelta_incrementalization}}
\begin{proof}
	Consider a database $S$ over schema $\langle r_1, \ldots, r_n\rangle$. 
	$S \oplus \Delta S = S$ means that
	$\deltaDel{R_i} \cap R_i = \emptyset$ and $\deltaAdd{R_i} \setminus R_i = \emptyset$ ($i\in[1,n]$).
	Let $\Delta^2 S$ be the change on $\Delta S$, i.e.,
    $\Delta^2 S$ contains insertions and deletions into/from each $\deltaAdd{R_i}$ and $\deltaDel{R_i}$.
    We use $\Delta^\pm_{R_i}$ as an abbreviation for $\deltaAdd{R_i}$ and $\deltaDel{R_i}$.
	Let $\Delta^+(\Delta^\pm_{R_i})$ and $\Delta^-(\Delta^\pm_{R_i})$ be the set of insertions and the set of deletions for $\Delta^\pm_{R_i}$, respectively.
	The new instance $\Delta'^\pm_{R_i}$ of each $\Delta^\pm_{R_i}$ in $\Delta S$ is obtained by:
	\[\Delta'^\pm_{R_i} = (\Delta^\pm_{ R_i} \setminus \Delta^-(\Delta^\pm_{R_i})) \cup \Delta^+(\Delta^\pm_{R_i})\]
	We finally obtain a new source database $S'$ by applying each $\Delta'^\pm_{R_i}$ in $\Delta S'$ to the corresponding relation $R_i$ in database $S$:
	\revision{\[
		\begin{array}{rl}
		R'_i = & (R_i \setminus \Delta'^-_{R_i}) \cup \Delta'^+_{R_i}\\
		= & (R_i \setminus ((\deltaDel{R_i} \setminus \Delta^-(\deltaDel{R_i})) \cup \Delta^+(\deltaDel{R_i}))) \\
		&\cup ((\deltaAdd{R_i} \setminus \Delta^-(\deltaAdd{R_i})) \cup \Delta^+(\deltaAdd{R_i}))
		\end{array}    
		\]}
	Because $\deltaDel{R_i}$ and $\deltaAdd{R_i}$ are disjoint, and because $\deltaDel{R_i} \cap R_i = \emptyset$ and $\deltaAdd{R_i} \setminus R_i = \emptyset$, we can simplify the above equation to:
	\revision{\begin{equation}
		\label{eq:putdelta_incrementalization}
		R'_i = R_i \setminus \Delta^+(\deltaDel{R_i}) \cup \Delta^+({\deltaAdd{R_i}})
		\end{equation}}
	Note that $\Delta^+(\deltaDel{R_i})$ and $\Delta^+({\deltaAdd{R_i}})$ contain all the tuples inserted into $\deltaDel{R_i}$ and $\deltaAdd{R_i}$, respectively. In other words, $\Delta^+(\deltaDel{R_i})$ and $\Delta^+({\deltaAdd{R_i}})$ are delta relations in $\Delta^{2+} S$.
	This means that the source database $S'$ is obtained by applying $\Delta^{2+} S$ to $S$:
	$S' =  S \oplus \Delta^{2+} S$.
\end{proof}

\subsection{Proof of Lemma~\ref{lem:LVGN-iincrementalization}}
\begin{proof}
Consider a valid LVGN-Datalog putback program $putdelta$ for a view $v$ and source database schema $\langle r_1, \ldots, r_n \rangle$.
Since $putdelta$ is in LVGN-Datalog, the view predicate occurs only in the rules defining delta relations of the source ($\pm r_1$, \ldots, $\pm r_n$), and at most once in each rule. When the view relation is changed, only delta relations, $\pm r_1, \ldots, \pm r_n$, are changed, all other relations (intermediate relations) in $putdelta$ are unchanged. Therefore, to incrementalize $putdelta$, we use only rules defining delta relations (having a predicate $\pm r_i$ as the head) to derive the rules computing changes to the delta relations.

A Datalog rule having a delta predicate $\pm r_i$ in the head and a view predicate $v$ in the body is in one of the following forms:
\begin{align*}
	\pm r_i (\vec{X}) &\ruleeq v(\vec{Y}), Q(\vec{Z}). \tag{positive view}\\
	\pm r_i (\vec{X}) &\ruleeq \neg~ v(\vec{Y}), Q(\vec{Z}). \tag{negative view}
\end{align*}
where $Q(\vec{Z})$ is the conjunction of the rest of the rule body. $Q(\vec{Z})$ is unchanged, whereas the view relation $v$ is changed to $v' = (v \setminus {-v} \cup {+v} )$, where $+v$ and $-v$ corresponds to the insertions set of deletions set, respectively. Similar to the incrementalization technique in \cite{Gupta1993}, by distributing joins over set minus and union we obtain 
\begin{align*}
+(\pm r_i) (\vec{X}) &\ruleeq +v(\vec{Y}), Q(\vec{Z}).\\
-(\pm r_i) (\vec{X}) &\ruleeq -v(\vec{Y}), Q(\vec{Z}). 
\end{align*}
for the case of positive view and 
\begin{align*}
+(\pm r_i) (\vec{X}) &\ruleeq -v(\vec{Y}), Q(\vec{Z}).\\
-(\pm r_i)(\vec{X}) &\ruleeq +v(\vec{Y}), Q(\vec{Z}). 
\end{align*}
for the case of negative view, where new delta relations is obtained by $\pm r_i' = (\pm r_i \setminus -(\pm r_i)) \cup +(\pm r_i)$.

Proposition~\ref{prop:putdelta_incrementalization} implies that the set of insertions to the delta relation, $+(\pm r_i)$, can be used as $\pm r_i'$ to apply to the source relation $r_i$ to obtain the same new source. Therefore, the rule computing $-(\pm r_i)$ is redundant, $\pm r_i'$ can be computed by the rules of $+(\pm r_i)$:
\begin{align*}
\pm r_i' (\vec{X}) &\ruleeq +v(\vec{Y}), Q(\vec{Z}).
\end{align*}
for the case of positive view and 
\begin{align*}
\pm r_i' (\vec{X}) &\ruleeq -v(\vec{Y}), Q(\vec{Z}).
\end{align*}
for the case of negative view. This shows that the transformation from origin $putdelta$ to an incremental one is substituting delta predicates of the view, $+v$ and $-v$, for positive and negative predicates of the view, $v$ and $\neg v$, respectively.
\end{proof}

\section{Transformation from safe-ra\-nge FO formula to Datalog}
\label{sec:fo-to-datalog}
In this section, we present the transformation from a safe-range FO formula $\varphi$ to an equivalent Datalog query. 

We first extend the algorithm that transforms a safe-range FO formula $\varphi$ into an equivalent formula in relational algebra normal form (RANF) described in \cite{alicebook} to allow built-in predicates ($<$ and $>$) occurring in $\varphi$.
Let us assume that $\varphi$ is in safe-range normal form (SRNF) in which there is no universal quantifiers, no implications, and there is no conjunction or disjunction sign that occurs directly below a negation sign. Every FO formula can be transformed into an SRNF formula by inductively applying the following logical equivalences: 
\begin{itemize}
	\item $∀\vec{x}ψ \equiv ¬∃\vec{x}¬ψ$
	\item $φ \to ψ \equiv \neg φ \vee ψ$
	\item $¬¬ψ \equiv ψ$
	\item $¬(ψ_1 ∨\ldots ∨ψ_n) \equiv (¬ψ_1 ∧\ldots ∧¬ψ_n)$
	\item $¬(ψ_1 ∧\ldots ∧ψ_n) \equiv (¬ψ_1 ∨ \ldots ∨¬ψ_n)$
\end{itemize}

The set of range-restricted variables of the FO formula $\varphi$ ($rr(\varphi)$) is inductively defined in the same way as \cite{alicebook}:
\begin{itemize}
	\item if $\varphi = R(e_1,\ldots{},e_n) $, $rr(\varphi)$ = the set of variables in $\{e_1,\ldots{},e_n\}$
	\item if $\varphi = (x = a)$ or $\varphi = (a = x)$, $rr(φ)={x}$
	\item if $\varphi = φ_1 ∧ φ_2$, $rr(φ) = rr(φ_1) ∪ rr(φ_2)$
	\item if $\varphi = \varphi_1 ∧ x=y $, $rr(φ)= rr(\varphi_1)$ if $\{x,y\} \cap rr(\varphi_1) = \emptyset$ and $rr(φ)= rr(\varphi_1) ∪ \{x, y\}$ otherwise
	\item if $\varphi = φ_1 ∨ φ_2$, $rr(ϕ) = rr(ϕ_1) ∩ rr(ϕ_2)$
	\item if $\varphi= \neg \varphi_1$, $rr(\varphi)=\emptyset$
	\item if $\varphi \in \{(x > a), (x < a), (x > y), (x < y) \}$, $rr(\varphi)=\emptyset$
	\item if $\varphi = \exists \vec{x} \varphi_1$, $rr(\varphi) = rr(\varphi_1) - \vec{x}$ if $\vec{x} \subseteq rr(\varphi_1)$ and $rr(\varphi) = \bot$ otherwise
\end{itemize}
where for each $Z$, $\bot \cup Z = \bot \cap Z = \bot - Z = Z - \bot = \bot$, $\bot$ indicates that some quantified variables are not range restricted.
Let $free(\varphi)$ be the set of free variables of $\varphi$.
$\varphi$ is a safe-range FO formula iff $rr(\varphi) = free(\varphi)$.

\begin{definition}[\cite{alicebook}]
	An occurrence of a subformula ψ in φ is self-contained if its root is ∧ or if
	\begin{itemize}
		\item $ψ=ψ_1∨ \ldots ∨ψ_n$ and $rr(ψ)=rr(ψ_1)=\ldots=rr(ψ_n)=free(ψ)$;
		\item $ψ = ∃\vec{x}ψ_1$ and $rr(ψ_1) = free(ψ_1)$;
	\end{itemize}
	A safe-range SRNF formula φ is in relational algebra normal form (RANF) if each subformula of φ 
	is self-contained.
\end{definition}

The algorithm that transforms a safe-range SRNF formula $\varphi$ into an equivalent RANF formula  is based on the following rewrite rules for each subformula $ψ$ in $\varphi$:
\begin{itemize}
	\item Push-into-or: If $ψ = ψ_1 ∧ \ldots ∧ ψ_n ∧ ξ$,
	where $ ξ = ξ_1 ∨ \ldots ∨ ξ_m$ and $rr(ψ) = free(ψ)$, but $rr(ξ) \neq free(ξ)$, we 
	nondeterministically choose a subset $\{i_1,\ldots,i_k\}$ of $\{1,\ldots,n\}$ such that
	\[ξ′ =(ξ_1 ∧ψ_{i_1} ∧\ldots ∧ψ_{i_k})∨\ldots ∨(ξ_m ∧ψ_{i_1} ∧\ldots ∧ψ_{i_k})\] satisfies 
	$rr(ξ′) = free(ξ′)$. Let $\{j_1, \ldots , j_l \} = \{1, \ldots, n\} \setminus \{i_1,\ldots,i_k\}$, we rewrite $\psi$ into $\psi'$:
	\[ψ′= ψ_{j_1} ∧\ldots ∧ψ_{j_l} ∧ξ′\]
	
	\item Push-into-quantifier: If
	$ψ = ψ_1 ∧ \ldots ∧ ψ_n ∧ ∃ \vec{x} ξ$ and $rr(ψ) = free(ψ)$, but $rr(ξ) \neq free(ξ)$, assuming that no variable in $\vec{x}$ is a free in $free(ψ_1 ∧ \ldots ∧ ψ_n)$ (this can be achieved by variable renaming), we 
	nondeterministically choose a subset $\{i_1,\ldots,i_k\}$ of $\{1,\ldots,n\}$ such that:
	\[ξ′=ψ_{i_1} ∧\ldots ∧ψ_{i_k} ∧ξ\]
	satisfies $rr(ξ′)=free(ξ′)$.
	We replace $ψ$ with \[ψ′=ψ_{j_1} ∧\ldots∧ψ_{j_l} ∧∃\vec{x}ξ′\]
	where $\{j_1,\ldots,j_l\}=\{1,\ldots,n\} \setminus \{i_1,\ldots,i_k\}$.
	
	\item Push-into-negated-quantifier: If
	$ψ = ψ_1 ∧\ldots ∧ ψ_n ∧ ¬ ∃ \vec{x} ξ $ and $rr(ψ) = free(ψ)$, but $rr(ξ) \neq free(ξ)$, assuming that no variable in $\vec{x}$ is a free in $free(ψ_1 ∧ \ldots ∧ ψ_n)$ (this can be achieved by variable renaming), we nondeterministically choose a subset $\{i_1,\ldots,i_k\}$ of $\{1,\ldots,n\}$ such that:
	\[ξ′=ψ_{i_1} ∧\ldots∧ψ_{i_k} ∧ξ\]
	satisfies $rr(ξ′)=free(ξ′)$.
	We replace $ψ$ with
	\[ψ′ =ψ_1 ∧\ldots∧ψ_n ∧¬∃\vec{x}ξ′\]
	$ψ′$ is equivalent to $ψ$ because the propositional formulas $p ∧ q ∧ ¬r$ and $p ∧ q ∧ ¬(p ∧ r)$ are equivalent. 
	And we continue to apply the Push-into-quantifier procedure
\end{itemize}

Now we transform the RANF formula $\varphi$ into an equivalent Datalog query $(P_\varphi,G_\varphi)$. Suppose $\{x_1, \ldots, x_k\} = free(\varphi)$, $(P_\varphi,G_\varphi)$ is inductively constructed as follows:
\begin{itemize}
	\item If $\varphi = R(e_1,\ldots,e_n)$, where $\{x_1, \ldots, x_k\}$ is the set of free variables in $\{e_1,\ldots,e_n\}$: 
	\[P_\varphi = \{G_\varphi(x_1, \ldots, x_k) \ruleeq R(e_1,\ldots,e_n).\}\]
	and the datalog query is $(P_\varphi,G_\varphi)$.
	\item If $\varphi$ is $x=a$ or $a=x$:  
	\[P_\varphi = \{G_\varphi(x) \ruleeq x = a.\}\]
	\item If $\varphi = \psi_1 \wedge \ldots \wedge \psi_m$,
	we divide  $\{\psi_1, \ldots, \psi_m\}$ into a set of positive subformulas $\{\psi_1, \ldots, \psi_{m_1}\}$ and a set of equalities/inequalities ($x=a$, $a=x$, $x=x$, $x=y$, $x>a$, $x<a$, $x>y$, $x<y$) $\{\psi_{m_1+1}, \ldots, \psi_{m_2}\}$, and a set of negative subformulas $\{\neg \psi_{m_2+1}, \ldots, \neg \psi_{m}\}$. Let $\{x_{i1},\ldots, x_{ik_i}\} = free(\psi_i)$,
	we inductively construct $(P_{\psi_i}, G_{\psi_i})$ for each $\psi_i$ in $\{\psi_1, \ldots, \psi_{m_1}\}$, and for each $\psi_i$ in $\{\psi_{m_2+1}, \ldots, \psi_{m}\}$.
	The Datalog query $(P_\varphi,G_\varphi)$ is as follows:
	\begin{align*}
	P_\varphi &= \left( \bigcup\limits^{m_1}_{i=1} P_{\psi_i} \right) \cup \left( \bigcup\limits^{m}_{i=m_2+1} P_{\psi_i}\right) \cup\\
	& \left\{ 
	\begin{array}{rl}
	G_\varphi(x_1, \ldots, x_k) \ruleeq & G_{\psi_1}(x_{11}, \ldots, x_{1k_1}), \ldots, \\
	&G_{\psi_{m_1}}(x_{m_11}, \ldots, x_{m_1k_{m_1}}), \\
	&\psi_{m_1+1}, \ldots, \psi_{m_2}, \\
	& \neg G_{\psi_{m_2+1}}(x_{(m_2+1)1}, \ldots, \\
	&x_{(m_2+1)k_{(m_2+1)}}), \ldots, \\
	&\neg G_{\psi_{m}}(x_{m1},\ldots, x_{mk_m}).
	\end{array}
	\right\}
	\end{align*}
	\item If $\varphi = ψ_1 ∨\ldots ∨ ψ_n$, where $ free(ψ_1) = \ldots = free(ψ_n) = \{x_1, \ldots, x_k\}$.
	We construct $(P_{\psi_i}, G)$ (with the same goal predicate $G$) for each $\psi_i$ in $\{ ψ_1 ,\ldots , ψ_n\}$
	and obtain:
	\[P_\varphi = \bigcup \limits^n_{i=1} P_{\psi_i}\]
	\[ G_\varphi = G \]
	\item If $\varphi = ∃y_1,\ldots,y_m, ψ(z_1,\ldots,z_n)$, where $\{x_1,\ldots,x_k\} = \{z_1,\ldots,z_n\} \setminus \{y_1,\ldots,y_m\}$: 
	\[P_\varphi = P_ψ \cup \{ G_\varphi(x_1,\ldots,x_k) \ruleeq G_\psi (z_1,\ldots,z_n).\}\]
\end{itemize}

To conclude that the transformation from safe-range FO formula to Datalog query is correct, i.e. $\varphi$ and $(P_\varphi,G_\varphi)$ are equivalent, we need to show that for any database instance $D$, $P_\varphi(D) | G_\varphi  = \{\vec{t} ~|~ D\models\varphi(\vec{t}) \}$, where $P_\varphi(D) | G_\varphi$ denotes the restriction of the output of $P$ over $D$ to the relation $G_\varphi$.
Indeed, let $D$ be fixed, by induction, we can show that for each subformula $\psi$ of $\varphi$ and each tuple $\vec{t}$, 
\[D\models\psi(\vec{t}) \Leftrightarrow P_\psi(D) \ni G_\psi(\vec{t}) \]

\section{Rules for incrementalizing pu\-tback programs}
\label{sec:incrementalization_rewite_rules}
\begin{figure}[t]
    \centering
    \small
    \begin{tabular}[t]{|l|}
        \hline
        Join and Selection \\ \hhline{|=|}
        \hspace{-7pt}$\begin{array}{l}
            h(\vec{X}) \ruleeq r_1(\vec{Y}), r_2(\vec{Z}).\\
            vars(\vec{X}) = vars(\vec{Y}) \cup vars(\vec{Z})
        \end{array}$\hspace{-7pt}\\ \hline
        \multicolumn{1}{|c|}{$\Downarrow$}\\ \hline
        \hspace{-7pt}$\begin{array}{l}
            -h(\vec{X}) \ruleeq {-r_1}(\vec{Y}) , r_2(\vec{Z}). \\
            {-h}(\vec{X}) \ruleeq r_1(\vec{Y}) , {-r_2}(\vec{Z}).\\
            {+h}(\vec{X}) \ruleeq {+r_1}(\vec{Y}), r_2^\nu(\vec{Z}).\\
            {+h}(\vec{X}) \ruleeq r_1^{\nu}(\vec{Y}) ,  {+r_2}(\vec{Z}).\\
            h^{\nu}(\vec{X}) \ruleeq r_1^{\nu}(\vec{Y}), r_2^{\nu}(\vec{Z}).
        \end{array}$\hspace{-7pt}
        \\ \hline
    \end{tabular}
    \begin{tabular}[t]{|l|}
        \hline
        Negation \\ \hhline{|=|}
        \hspace{-7pt}$\begin{array}{l}
            h(\vec{X}) \ruleeq r_1(\vec{X}), \neg r_2(\vec{Y}).\\
            vars(\vec{X}) \supseteq vars(\vec{Y})
        \end{array}$\hspace{-7pt} \\ \hline
        \multicolumn{1}{|c|}{$\Downarrow$}\\ \hline
        \hspace{-7pt}$\begin{array}{l}
            {-h}(\vec{X}) \ruleeq {-r_1}(\vec{X}) , \neg r_2(\vec{Y}). \\
            {-h}(\vec{X}) \ruleeq r_1(\vec{X}) ,  {+r_2}(\vec{Y}). \\
            {+h}(\vec{X}) \ruleeq {+r_1}(\vec{X}), \neg r_2^\nu(\vec{Y}).\\
            {+h}(\vec{X})\ruleeq r_1^{\nu}(\vec{X})  , {-r_2}(\vec{Y}).\\
            h^{\nu}(\vec{X}) \ruleeq r_1^{\nu}(\vec{X}), \neg r_2^{\nu}(\vec{Y}).
        \end{array}$\hspace{-7pt}
        \\ \hline
    \end{tabular}
    \begin{tabular}[t]{|l|}
        \hline
        Projection \\ \hhline{|=|}
        $h(\vec{X}) \ruleeq r_1(\vec{X},\vec{Y}).$ \\ \hline
        \multicolumn{1}{|c|}{$\Downarrow$}\\ \hline
        \hspace{-7pt}$\begin{array}{l}
            {+h}(\vec{X}) \ruleeq {+r_1}(\vec{X},\vec{Y}), \neg h(\vec{X}).\\
            {-h}(\vec{X}) \ruleeq {-r_1}(\vec{X}, \vec{Y}), \neg r_1^{\nu}(\vec{X}, \_).\\
            h^{\nu}(\vec{X}) \ruleeq r_1^{\nu}(\vec{X},\vec{Y}).
        \end{array}$\hspace{-7pt}
        \\ \hline
    \end{tabular}
    \begin{tabular}[t]{|l|}
        \hline
        Union \\ \hhline{|=|}
        \hspace{-7pt}$\begin{array}{l}
            h(\vec{X}) \ruleeq r_1(\vec{X}).\\
            h(\vec{X}) \ruleeq r_2(\vec{X}).
        \end{array}$\hspace{-7pt} \\ \hline
        \multicolumn{1}{|c|}{$\Downarrow$}\\ \hline
        \hspace{-7pt}$\begin{array}{l}
            {-h}(\vec{X}) \ruleeq {-r_1}(\vec{X}) , \neg r_2^\nu(\vec{X}). \\
            {-h}(\vec{X}) \ruleeq {-r_2}(\vec{X}) , \neg r_1^{\nu}(\vec{X}).\\
            {+h}(\vec{X}) \ruleeq {+r_1}(\vec{X}).\\
            {+h}(\vec{X}) \ruleeq {+r_2}(\vec{X}).\\
            h^{\nu}(\vec{X}) \ruleeq r_1^{\nu}(\vec{X}).\\
            h^{\nu}(\vec{X}) \ruleeq r_2^{\nu}(\vec{X}).
            \end{array}$\hspace{-7pt}
        \\ \hline
    \end{tabular}
    \caption{Rules for incrementalizing Datalog putback program.
        {$\vec{X}$ denotes a tuple of variables, $vars(\vec{X})$ denotes the set of all variables in $\vec{X}$}.}
    \label{fig:rewrite_rules}
\end{figure}
Given a putback program $putdelta$ in nonrecursive Datalog with negation (NR-Datalog$^{\neg}$), we shall derive Datalog rules to compute changes to delta relations of the source database when the view relation is changed. The derived Datalog rules form an incrementalized program of $putdelta$.

Our idea is that we first transform $putdelta$ into an equivalent Datalog program, in which every IDB relation is defined from at most 2 other relations. We then inductively apply the incrementalization rules in Figure~\ref{fig:rewrite_rules} to derive Datalog rules for computing changes to each IDB relation.

\begin{lemma}
	For every NR-Datalog$^\neg$ program $P$ with a goal IDB relation $R$, there is a NR-Datalog$^\neg$ program $P'$ in which each IDB relation is defined from at most two other relations such that the queries $(P,R)$ and $(P',R)$ are equivalent.
	\label{lem:twopredicate}
\end{lemma}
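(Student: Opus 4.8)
The plan is to normalize $P$ one rule at a time: any rule whose body mentions more than two relations is replaced by a chain of fresh auxiliary IDB predicates, each defined from exactly two relations, after which I verify that the rewriting preserves both the answer to $(P,R)$ and nonrecursiveness. Since $P$ is nonrecursive, I would first fix a topological order of its IDB predicates and rewrite rules in that order, so that every fresh predicate introduced while processing a rule $\rho$ sits strictly below the head predicate of $\rho$; this keeps the output program nonrecursive.

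For a rule $\rho : H(\vec{X}) \ruleeq L_1,\ldots,L_n$ with $n\ge 3$, I would first reorder the body so that the positive atoms come first, say $L_1,\ldots,L_m$ (with $m\ge 1$ unless $\rho$ is ground, by range-restriction; the ground case is handled by splitting the body arbitrarily), followed by the built-in literals and the negated atoms. I then introduce fresh IDB predicates $p_2,\ldots,p_{n-1}$ together with the rules
\begin{align*}
 p_2(\vec{Z_2}) &\ruleeq L_1, L_2. \\
 p_i(\vec{Z_i}) &\ruleeq p_{i-1}(\vec{Z_{i-1}}), L_i. \qquad (3\le i\le n-1)\\
 H(\vec{X}) &\ruleeq p_{n-1}(\vec{Z_{n-1}}), L_n.
\end{align*}
where $\vec{Z_i}$ lists the \emph{live} variables of the prefix $L_1,\ldots,L_i$, i.e.\ those occurring in $\vec{X}$ or in some $L_j$ with $j>i$. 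Each of these rules refers to at most two relations (a built-in literal does not count as a relation). Because all positive atoms are consumed before any negated or built-in literal, and the accumulator $p_{i-1}$ retains every live variable, the variables of each negated/built-in $L_i$ are still positively bound — by $p_{i-1}$, or by $L_1,L_2$ in the case $i=2$ — so every sub-rule remains safe; rules of $\rho$ with body length $\le 2$ are kept unchanged.

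Next I would establish correctness by induction on $i$: for every database $D$, the relation computed for $p_i$ equals $\{\vec{Z_i} \mid D \models \exists \vec{W_i}\,(L_1\wedge\cdots\wedge L_i)\}$, where $\vec{W_i}$ collects the non-live prefix variables. The base case $i=2$ is immediate from the rule for $p_2$, and the inductive step holds because joining $p_{i-1}$ with $L_i$ and projecting onto the live variables is exactly the relational-algebra counterpart of extending the conjunction by $L_i$ — the discarded variables genuinely do not occur in $L_{i+1},\ldots,L_n$ or in $H$. Instantiating at $i=n-1$ and substituting into the last rule shows that the new rules derive precisely the $H$-atoms that $\rho$ derives over $D$. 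Doing this for every fat rule of $P$ yields the program $P'$ with the two-relation property, and since the definition of $R$ is only ever replaced by an equivalent set of rules together with strictly lower auxiliary predicates, $(P,R)$ and $(P',R)$ compute the same relation on every $D$.

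The step I expect to be the main obstacle is the interaction of negation with safety during the split: one must choose the body ordering and the live-variable sets $\vec{Z_i}$ so that no variable needed to guard a downstream negated (or built-in) literal is projected away too early — this is precisely why the positive atoms are processed first and why $\vec{Z_i}$ is defined by "occurs later" rather than merely "occurs in the head". A lighter but still necessary point is to state explicitly why nonrecursiveness is preserved, which is exactly what processing predicates in topological order and introducing the $p_i$ as brand-new, strictly-lower symbols guarantees.
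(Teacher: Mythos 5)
Your construction is essentially correct as far as it goes, and it is a genuinely different (and far more explicit) route than the paper's: the paper only sketches the claim by appealing to the equivalence of NR-Datalog$^\neg$ with relational algebra and the fact that each binary RA operator is expressible by rules with two body relations, whereas you normalize the Datalog program directly, rule by rule, with a careful treatment of live variables and safety. The part of your argument that splits a long body into a chain of accumulators is sound: ordering positive atoms first guarantees that every variable of a downstream negated or built-in literal $L_i$ occurs in some positive atom of the prefix and in $L_i$ itself, hence is live at stage $i-1$ and retained in $\vec{Z_{i-1}}$, so each sub-rule stays safe; and the induction on the prefix semantics is the right correctness statement. The topological-order argument for preserving nonrecursiveness is also fine.

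There is, however, one gap relative to the precise statement being proved. The lemma asks that each IDB \emph{relation} be defined from at most two other relations, where the count ranges over the bodies of \emph{all} rules with that head --- this is exactly the normal form needed so that every IDB predicate matches one of the four incrementalization patterns (join/selection, negation, projection, union) in Figure~\ref{fig:rewrite_rules}. Your transformation only bounds the number of relation atoms \emph{per rule}. A predicate $h$ defined by, say, $h(X) \ruleeq a(X), b(X)$ and $h(X) \ruleeq c(X), d(X)$ passes through your normalization untouched, yet $h$ is defined from four relations and fits none of the patterns. The fix is easy but should be stated: give each rule of a multi-rule predicate a fresh head $h_1, h_2, \ldots$ (each now a single-rule definition with at most two body relations), and then rebuild $h$ as a balanced binary union $h(\vec{X}) \ruleeq h_i(\vec{X})$ of at most two predicates at a time, again inserting these fresh symbols strictly below $h$ in the topological order. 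With that addition your proof fully establishes the lemma.
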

\begin{proof}(Sketch)
	There exists such a transformation between these two Datalog programs because a NR-Datalog$^\neg$ query $(P,R)$ is equivalent to a relational algebra expression, in which each binary relational operator can be simulated by Datalog rules with two relations in the rule bodies. 
\end{proof}

Considering the set semantics of the Datalog language, we propose rewrite rules (shown in Figure~\ref{fig:rewrite_rules}) for calculating changes to a relation $h$ which is defined from two relations $r_1$ and $r_2$. In each case of the definition of $h$, we derive Datalog rules that compute separately the set of insertions ($\Delta^+_{h}$) and the set of deletions ($\Delta^-_{h}$) to $h$ when there are changes to relations $r_1$ and $r_2$.
Note that in these derived Datalog rules, if $\Delta^+_{r_1}$ and $\Delta^-_{r_1}$ are disjoint, then the obtained $\Delta^+_{h}$ and $\Delta^-_{h}$ are also disjoint. Therefore, we can inductively apply the four incrementalization rules when $h$ is used to define other IDB relations. We have formally proven the correctness of these incrementalization rules by using an assistant theorem prover, stated as the following.
\begin{lemma}
    For each case in Figure~\ref{fig:rewrite_rules}, the new relation $h^\nu$ computed from its defining rules is the same as the result obtained by applying delta relations $+h$ and $-h$ computed by the derived Datalog rules to the original relation $h$.
    \label{lem:delta_rule}
\end{lemma}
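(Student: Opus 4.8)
The plan is to prove the four cases of Figure~\ref{fig:rewrite_rules} independently, in each case reducing the claimed equality $h^\nu = (h\setminus{-h})\cup{+h}$ to an elementary set-algebraic identity and then to a small propositional tautology that can be discharged by hand (this is also the shape of obligation the paper feeds to the theorem prover). Throughout I would work under the invariants that the incrementalization maintains when the rewrite rules are applied bottom-up: for each input relation $r_i$ the deltas are disjoint, $\Delta^+_{r_i}\cap\Delta^-_{r_i}=\emptyset$, and $r_i^\nu=(r_i\setminus\Delta^-_{r_i})\cup\Delta^+_{r_i}$.

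I would first dispatch the cases whose defining rule has a head containing every body variable — Join/Selection and Negation — together with Union. For Join and Negation, every output tuple of $h$, $h^\nu$, $+h$, $-h$ is the image of a unique valuation $\theta$ of the rule variables (modulo the routine identification of tuples that respect the repetition pattern of the head), so membership reduces to a Boolean combination of the atoms ``$\theta(\vec Y)\in r_1$'', ``$\theta(\vec Y)\in\Delta^\pm_{r_1}$'', ``$\theta(\vec Z)\in r_2$'', ``$\theta(\vec Z)\in\Delta^\pm_{r_2}$''. Substituting $r_i^\nu=(r_i\setminus\Delta^-_{r_i})\cup\Delta^+_{r_i}$ turns the desired equality into a propositional identity in a handful of variables, which I would verify directly; the Negation case needs the disjointness $\Delta^+_{r_2}\cap\Delta^-_{r_2}=\emptyset$ (indeed the identity already fails on a one-tuple instance when $\Delta^+_{r_2}$ and $\Delta^-_{r_2}$ overlap), whereas Join/Selection holds unconditionally — with Selection by a built-in comparison being the special case in which $r_2$ is the unchanging ``condition'' relation, so $\Delta^\pm_{r_2}=\emptyset$. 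Union is handled the same way, reducing to a pointwise Boolean statement about each $\vec x$ that I would check by distributing over $\cup$ and exploiting the symmetry between the two arms $r_1$ and $r_2$.

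The main obstacle is Projection, because $h(\vec X)\ruleeq r_1(\vec X,\vec Y)$ is not injective on tuples, so the ``one valuation per output tuple'' reduction fails: $\vec x\in h^\nu$ only asserts that \emph{some} witness $\vec y$ satisfies $(\vec x,\vec y)\in r_1^\nu$. I would instead prove $\vec x\in h^\nu\iff\vec x\in(h\setminus{-h})\cup{+h}$ by hand in both directions, extracting a surviving witness in the left-to-right direction and exhibiting a concrete one in the right-to-left direction, and showing that the two guards built into the derived rules — the $\neg h(\vec X)$ in the rule for $+h$ and the $\neg r_1^\nu(\vec X,\_)$ in the rule for $-h$ — are precisely what close the equivalence (e.g.\ a tuple deleted from $r_1$ but still having another witness in $r_1^\nu$ must not be reported in $-h$, which is exactly what $\neg r_1^\nu(\vec X,\_)$ enforces, and dually the $\neg h(\vec X)$ guard drops tuples that are not net insertions into $h$); notably this case consumes only $r_1^\nu=(r_1\setminus\Delta^-_{r_1})\cup\Delta^+_{r_1}$ and no disjointness. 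Finally I would observe that these four shapes exhaust the rules obtainable after reducing every IDB relation to one defined from at most two relations (Lemma~\ref{lem:twopredicate}) and that each derived rule set also preserves the disjointness invariant, so the single-step identity composes along the stratification when the incrementalization is run on a whole putback program.
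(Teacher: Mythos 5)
Your proposal is correct, and it is worth noting that the paper itself contains no written proof of Lemma~\ref{lem:delta_rule}: the authors only state that the rules of Figure~\ref{fig:rewrite_rules} were verified with a proof assistant, so your argument is best read as the human-readable version of exactly those machine obligations, and it checks out. For Join/Selection, Negation, and Union the side conditions ($vars(\vec{X})=vars(\vec{Y})\cup vars(\vec{Z})$, resp.\ $vars(\vec{X})\supseteq vars(\vec{Y})$) make the head tuple determine the body valuation, so after substituting $r_i^\nu=(r_i\setminus \deltaDel{R_i})\cup\deltaAdd{R_i}$ the claimed equality $h^\nu=(h\setminus{-h})\cup{+h}$ becomes a pointwise Boolean identity; I verified that it holds unconditionally for Join/Selection and Union and that Negation genuinely requires $\deltaAdd{R_2}\cap\deltaDel{R_2}=\emptyset$ (your one-tuple counterexample for the overlapping case, driven by the rule ${+h}(\vec{X})\ruleeq r_1^\nu(\vec{X}),{-r_2}(\vec{Y})$, is right). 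Projection does need the separate witness-tracking argument, and your identification of the two guards $\neg h(\vec{X})$ and $\neg r_1^{\nu}(\vec{X},\_)$ as what closes the two directions is the crux; no disjointness is consumed there. This precise accounting of where the disjointness invariant is and is not used is a genuine sharpening over the paper, which only remarks globally that the rewrites preserve disjointness of the output deltas. The one thing you assert without argument is that preservation claim itself (e.g.\ for Projection, ${+h}$ forces $\neg h(\vec{X})$ and supplies a $\deltaAdd{R_1}$-witness in $r_1^\nu$, which is incompatible with the $\neg r_1^{\nu}(\vec{X},\_)$ guard of ${-h}$); it is a one-line check per case, but since it is what licenses composing the single-step identity along the stratification in your final sentence, it should be spelled out.
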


Our incrementalization rules can be easily extended for built-in predicates (e.g., $=,<,>$) in the Datalog program by considering these predicates as unchanged relations in our incrementalization rules.

\section{Deriving view deltas}
\label{sec:deriveing_view_deltas}
\begin{algorithm}[th]
	\caption{\textsc{View-Delta}($u_1, \ldots, u_n$)}
	\label{algo:delta_view}
	\DontPrintSemicolon
	$\deltaAdd{V} \leftarrow \emptyset$; $\deltaDel{V} \leftarrow \emptyset$;\\
	\For{\textbf{each} DML statement $u$ in $u_1, \ldots, u_n$}{
		Derive the set $\delta^+$/$\delta^-$ of inserted/deleted tuples;
		
		$\deltaAdd{V} \leftarrow (\deltaAdd{V} \setminus \delta^-) \cup \delta^+$;
		
		$\deltaDel{V} \leftarrow (\deltaDel{V} \setminus \delta^+) \cup \delta^- $;
	}
\end{algorithm}
Our incrementalization on putback transformation requires deriving a delta relation $\Delta V$ of the view $V$ in the form of insertions and deletions when there are any view update requests.
In RDBMSs, these update requests are declarative DML (data manipulation language) statements of the following forms \cite{Ramakrishnan1999}: \texttt{INSERT INTO $V$ VALUES($\ldots$)},  \texttt{DELETE FROM V WHERE <condition>}, and \texttt{UPDATE V SET attr=expr,  ... WHERE <condition>}.
Fortunately, it is trivial to obtain from the \texttt{INSERT}/\texttt{DELETE} statement the tuples that need to be inserted or deleted. Meanwhile, an \texttt{UPDATE} statement on the view can be represented as deletions followed by insertions; hence, we can also derive the deleted/inserted tuples.

A view update request can be a sequence of DML statements rather than a single one. This sequence is combined into one transaction by using the SQL command \texttt{BEGIN} before the sequence and the command \texttt{END} after the sequence.
To address this case, we propose a procedure for calculating $\deltaAdd{V}$ and $\deltaDel{V}$ of the whole view update transaction, as shown in Algorithm~\ref{algo:delta_view}. Concretely, for each DML statement in the sequence, we derive the insertion set $\delta^+$ and the deletion set $\delta^-$, and we merge these changes to $\deltaAdd{V}$ and $\deltaDel{V}$.
In this way, later statements have stronger effects than earlier statements.
For example, if the sequence is inserting a tuple $\vec{t}$ and then deleting this tuple, $\vec{t}$ is no longer inserted, i.e., we remove $\vec{t}$ from $\deltaAdd{V}$.
\end{appendix}

\balance

\end{document}